\pdfoutput=1
\documentclass[11pt]{article}

\usepackage[T1]{fontenc}
\usepackage[utf8]{inputenc}
\usepackage{lmodern}
\usepackage[a4paper,margin=1in]{geometry}
\usepackage{microtype}
\usepackage{amsmath,amssymb,amsthm,mathtools}
\usepackage{bbm}
\usepackage{enumitem}
\usepackage{booktabs}
\usepackage{array}
\usepackage{graphicx}
\usepackage{placeins}
\usepackage{xcolor}
\usepackage{hyperref}
\hypersetup{colorlinks=true,linkcolor=blue!45!black,citecolor=blue!45!black,urlcolor=blue!45!black}

\theoremstyle{definition}
\newtheorem{definition}{Definition}[section]
\newtheorem{example}[definition]{Example}

\theoremstyle{plain}
\newtheorem{theorem}[definition]{Theorem}
\newtheorem{lemma}[definition]{Lemma}
\newtheorem{proposition}[definition]{Proposition}
\newtheorem{corollary}[definition]{Corollary}
\theoremstyle{remark}
\newtheorem{remark}[definition]{Remark}

\DeclareMathOperator{\TV}{TV}

\DeclareMathOperator{\Ker}{Ker}

\DeclareMathOperator{\argmin}{arg\,min}
\DeclareMathOperator{\poly}{poly}

\DeclareMathOperator{\rank}{rank}
\DeclareMathOperator{\Tr}{Tr}

\newcommand{\E}{\mathbb{E}}
\newcommand{\Prb}{\mathbb{P}}
\newcommand{\R}{\mathbb{R}}

\newcommand{\cA}{\mathcal{A}}

\newcommand{\cC}{\mathcal{C}}

\newcommand{\cF}{\mathcal{F}}

\newcommand{\cH}{\mathcal{H}}

\newcommand{\cM}{\mathcal{M}}
\newcommand{\cO}{\mathcal{O}}

\newcommand{\cT}{\mathcal{T}}

\newcommand{\cW}{\mathcal{W}}

\newcommand{\DeltaS}{\Delta}
\newcommand{\eps}{\varepsilon}
\newcommand{\ind}{\mathbbm{1}}
\newcommand{\eqobs}{\equiv_{\Pi}}
\newcommand{\eqep}{\equiv_{\Pi,\eps}}

\newcommand{\Adv}{\operatorname{Adv}}
\newcommand{\Real}{\mathsf{Real}}
\newcommand{\Ideal}{\mathsf{Ideal}}

\title{Observer-Quotient Security:\ Composable Leakage Bounds for Hidden State Continuations}

\author{
Faruk Alpay\thanks{Correspondence: \texttt{alpay@lightcap.ai}.} \quad Levent Sar{\i}o\u{g}lu\\
Department of Computer Engineering, Bah\c{c}e\c{s}ehir University, Istanbul, T\"urkiye\\
\texttt{\{faruk.alpay, levent.sarioglu\}@bahcesehir.edu.tr}
}

\date{July 2026}

\begin{document}
\maketitle

\begin{abstract}
Observer-quotient security studies cryptographic executions through the distributions visible to an explicit observer class. The internal state of a protocol, implementation, or controlled plant may evolve inside an observer quotient while the public transcript remains unchanged; security then depends on whether later continuations, richer side channels, repeated sessions, or post-processing can turn that hidden evolution into distinguishing advantage. This paper formalizes interactive observer-quotient security games with session identifiers, oracle forwarding, adaptive observer schedulers, nonuniform environments, simulator/adaptors, ideal quotient functionalities, and scalar boundary functionals that upper-bound leakage under time evolution. The main real/ideal theorem is proved through named interface lemmas, an explicit wrapper construction, and a fully indexed hybrid sequence whose defects are observer, kernel, post-processing, simulator, statistical, and residual terms. The computational instantiations include IND-CPA encryption with timing leakage, deterministic encryption under alphabet-size and smooth max-information leakage, and nonce-refreshing authenticated-encryption state machines. The same quotient geometry yields the optimization-control layer: hidden continuations are observability kernels, observer refinement is sensor redesign, stochastic dissipativity gives residual leakage floors, and cost-constrained observer synthesis converts sensor choices into concrete reductions in bounded distinguishing advantage.
\end{abstract}
\section{Introduction}

Security assertions are statements about observations. Two executions may differ internally while remaining indistinguishable to a transcript observer, and that indistinguishability may fail after the observer class is refined by timing, cache, power, electromagnetic, or profiled leakage. The relevant state space is therefore not only the implementation state space $X$, but the quotient of $X$ induced by the observer family used in the security game.

Let $\Pi$ be a family of observers. States $x,x'\in X$ are equivalent when every observer in $\Pi$ gives the same distribution. A transition from $x$ to $x'$ inside one equivalence class is a hidden continuation: the computation has changed, while the admitted view has not. Nonces may refresh, counters may advance, randomness may be consumed, masked shares may rotate, and microarchitectural state may move without altering the public transcript. The security question is whether an adaptive environment can later expose that continuation through a scheduled observer, a side channel, a further session, or an efficient post-processing of the transcript.

This paper develops a security-game calculus for that question. The primitive object is an interactive experiment with sessions, oracle calls, observer choices, continuation commands, leakage labels, and an ideal quotient functionality. The real/ideal theorem treats hidden continuations as substitutable resources precisely when they remain compatible with the quotient interface and when the residual leakage functional stays below the corresponding advantage bound. The proof is organized by named interfaces and named hybrids rather than by an implicit appeal to observational equivalence.

The resulting framework also has a control interpretation. An observer is a sensor, an observer quotient is an output-equivalence partition, and a hidden continuation is an element of an observability kernel. Sensor redesign refines the quotient; persistent noise or implementation drift appears as an input-to-state residual; and a cost-constrained observer-design problem trades measurement cost against reduction in bounded distinguishing advantage. This control layer is used as a proof language for leakage bounds, while the primary object remains the cryptographic interactive game.

\paragraph{Contributions.}
The paper makes four contributions. First, it defines observer-quotient security games and the associated pseudometric, quotient dynamics, and hidden-continuation structure. Second, it gives a compositional real/ideal theorem for adaptive environments with oracle forwarding, scheduled observers, simulator/adaptors, abort synchronization, and post-processing closure. Third, it turns the abstract objects into audit-grade cryptographic reductions, including an IND-CPA-with-timing theorem with explicit query, profiling, scheduler, and noise parameters. Fourth, it connects observer refinement to side-channel validation and sensor design through finite-state tests, multiple-comparison envelopes, LTI observability kernels, KKT conditions, and residual-floor-to-advantage conversions.

\section{Related Work}

Indistinguishability is a foundational mode of modern cryptography, from semantic security to simulation-based and universally composable definitions \cite{Yao1982,GoldwasserMicali1984,Canetti2001}. In these settings, security is a statement about the inability of a class of adversaries to separate two experiments. Process calculi and bisimulation use a similar idea: systems can be behaviorally equivalent even if their internal structures differ \cite{Park1981,Milner1989}. Noninterference and information-flow security also reason about which variations of hidden state can be observed at public outputs \cite{GoguenMeseguer1982,SabelfeldSands2009,ClarksonSchneider2010}.

The development here isolates the algebraic object that they share. The observer class $\Pi$ induces a quotient on the internal state space. Security then becomes a fixed-point or invariance statement on that quotient, not on raw states. This formulation makes side-channel omissions and telemetry choices explicit: changing $\Pi$ changes the quotient and can destroy a proof.

There is also a natural relation to categorical cryptography and composable security, where protocols and resources are treated as morphisms and composition is central \cite{BroadbentKarvonen2022}. Our treatment is less categorical and more operational: the primitive objects are Markov kernels, transcript distributions, adversary advantage, and optimization over observer channels. The categorical viewpoint is compatible, but not required.

The optimization and control side is equally important. Classical observability asks whether a state can be reconstructed from output trajectories \cite{Kalman1960,Sontag1998}. If it cannot, then the unobservable subspace is exactly a space of hidden continuations. Secure control and resilient estimation study adversarial actions that remain hidden from available sensors; privacy and information-flow optimization study which measurements are released under leakage constraints \cite{CoverThomas2006,Alvim2014}. Physical-layer security and communications security also contain many problems where security design is expressed as an optimization problem over channels, powers, rates, or sensors \cite{WangBaiZhaoHan2019}. Our contribution is to place these under the same quotient semantics as cryptographic indistinguishability.

\section{Mathematical Preliminaries}

\subsection{Spaces, channels, and total variation}

For a finite set $S$, write $\DeltaS(S)$ for the probability simplex over $S$. A randomized observer from $X$ to $Y$ is a Markov kernel $\pi:X\to \DeltaS(Y)$. When $X$ and $Y$ are finite, $\pi(y\mid x)$ denotes the probability of output $y$ on input state $x$. For probability distributions $P,Q\in\DeltaS(Y)$, the total variation distance is
\[
\TV(P,Q)=\frac12\sum_{y\in Y}|P(y)-Q(y)|.
\]
For general measurable spaces, the same definition is $\sup_{A}|P(A)-Q(A)|$. The finite case covers the concrete security-game statements below; the measurable case follows by standard kernel arguments.

\subsection{Observer classes}

An observer class $\Pi$ is a set of admissible observation channels. In cryptography, $\Pi$ may be all polynomial-time adversarial views or all distinguishers with a given resource bound. In side-channel analysis, $\Pi$ may be a family of timing, cache, power, electromagnetic, or transcript extractors. In control, $\Pi$ may be a set of sensors or output maps.

\begin{definition}[Observer pseudometric]
Let $X$ be a state space and let $\Pi$ be a class of channels from $X$ to finite output spaces. Define
\[
    d_{\Pi}(x,x')=\sup_{\pi\in\Pi}\TV(\pi(\cdot\mid x),\pi(\cdot\mid x')).
\]
For $\eps\ge 0$, write $x\eqep x'$ when $d_{\Pi}(x,x')\le \eps$. For $\eps=0$, write $x\eqobs x'$.
\end{definition}

The relation $\eqobs$ is an equivalence relation. For $\eps>0$, the relation $\eqep$ is reflexive and symmetric but need not be transitive without changing the tolerance. The function $d_{\Pi}$ is a pseudometric: distinct states can have zero distance if no observer in $\Pi$ separates them.

\begin{definition}[Observer quotient]
The observer quotient of $X$ by $\Pi$ is the set
\[
    X/\Pi = X/{\eqobs}.
\]
The quotient map is denoted $q_{\Pi}:X\to X/\Pi$.
\end{definition}

\begin{remark}
The quotient depends on the adversarial model. Enlarging $\Pi$ refines the quotient. Shrinking $\Pi$ coarsens it. A proof that is true for a small observer class may be false after a single extra side channel is admitted.
\end{remark}

\subsection{Fixed points and hidden continuations}

Let $T:X\to X$ be a deterministic transition or, more generally, let $K:X\to\DeltaS(X)$ be a transition kernel. In the deterministic case, a raw fixed point satisfies $T(x)=x$. An observational fixed point satisfies $T(x)\eqobs x$.

\begin{definition}[Hidden continuation]
For deterministic $T:X\to X$, a state $x$ has a hidden continuation under $\Pi$ when
\[
    T(x)\ne x \quad\text{and}\quad T(x)\eqobs x.
\]
For a kernel $K$, $x$ has an $\eps$-hidden continuation when the output distribution generated by $K(x)$ is within $\eps$ of that generated by the point mass at $x$ for every observer in $\Pi$.
\end{definition}

Hidden continuation is the mathematical name for a familiar security phenomenon. A protocol can evolve internally while its transcript remains unchanged. A plant can move in an unobservable subspace. A program can update a secret-dependent cache state while its functional output remains the same. The security question is whether future observations can reveal that hidden movement.

\section{Observer-Quotient Security Games}

\subsection{Experiments and transcript observers}

A security experiment is a tuple
\[
    \mathsf{Exp}_b=(X,x_0^b,K_b,\Pi,\cA,T),\qquad b\in\{0,1\},
\]
where $X$ is the internal state space, $x_0^b$ is the initial state in world $b$, $K_b$ is a transition kernel, $\Pi$ is the admissible observer class, $\cA$ is an adversary class, and $T$ is a horizon. At each time $t$, the state evolves as $X_{t+1}\sim K_b(X_t)$ and an observer channel $\pi_t\in\Pi$ produces $Z_t\sim \pi_t(\cdot\mid X_t)$. The choice of $\pi_t$ may be fixed, adaptive, or controlled by an adversary subject to a resource bound.

The transcript is $Z_{0:T}=(Z_0,\ldots,Z_T)$. For an adversary $A\in\cA$ outputting a bit, define the advantage
\[
\operatorname{Adv}_{\Pi,T}^{\cA}(\mathsf{Exp}_0,\mathsf{Exp}_1)
=\sup_{A\in\cA}\left|\Prb[A(Z_{0:T})=1\mid b=0]-\Prb[A(Z_{0:T})=1\mid b=1]\right|.
\]
When $\cA$ is the class of all tests, the advantage is exactly total variation distance between transcript distributions. When $\cA$ is polynomial-time, this is the usual computational indistinguishability style of definition.

\begin{definition}[Observer-quotient security]
The pair $(\mathsf{Exp}_0,\mathsf{Exp}_1)$ is $(\Pi,\cA,T,\eps)$-secure when
\[
    \operatorname{Adv}_{\Pi,T}^{\cA}(\mathsf{Exp}_0,\mathsf{Exp}_1)\le \eps.
\]
It is observer-quotient secure when the above holds for the stated observer and adversary class.
\end{definition}

\begin{example}[IND-style encryption as observer-quotient security]
In a left-or-right encryption experiment, the hidden bit $b$ selects which plaintext is encrypted. The internal state includes the key, randomness, nonce schedule, and message. The observer sees the ciphertext and any admitted side-channel outputs. IND security is recovered by letting $\Pi$ output the adversary-visible transcript and letting $\cA$ be the chosen computational adversary class.
\end{example}

\begin{example}[Sensor security]
In a controlled plant, world $b=0$ may be the nominal trajectory and world $b=1$ may be an attacked trajectory. The observer class is the set of available sensors and telemetry processors. Security means that an adversary cannot force a harmful internal deviation while keeping the sensor transcript inside the same observer quotient.
\end{example}

\subsection{Threat-model discipline}

The notation intentionally forces three choices to be stated:
\begin{enumerate}[leftmargin=2em]
    \item \textbf{Which observers are included?} Transcript-only security and transcript-plus-timing security are different definitions.
    \item \textbf{Which adversaries are included?} Unbounded distinguishers, polynomial-time distinguishers, adaptive controllers, and physical side-channel observers induce different games.
    \item \textbf{Which horizon is used?} A one-step proof may not control future leakage.
\end{enumerate}
The three parameters determine the security statement: changing any of them changes the experiment.

\section{Computational Observer-Quotient Games}

The statistical quotient definitions become cryptographic once the machines, advice, oracle access, and resource bounds are part of the experiment. Throughout this section $\lambda$ is the security parameter, all encodings have polynomial length, and the environment, observers, schedulers, simulators, and distinguishers are either uniform PPT machines or nonuniform polynomial-size circuits with advice.

\begin{definition}[Computational OQSG]
A computational observer-quotient security game is a family
\[
    G_\lambda=(X_\lambda,\Sigma_\lambda,K^0_\lambda,K^1_\lambda,\Pi_\lambda,\cA_\lambda,\cO_\lambda,T_\lambda)
\]
with encoded state space $X_\lambda$, transcript alphabet $\Sigma_\lambda$, transition kernels $K^b_\lambda$ sampled by PPT algorithms with access to the oracle family $\cO_\lambda$, an observer family $\Pi_\lambda$ of PPT transducers, an adversary class $\cA_\lambda$, and horizon $T_\lambda=\poly(\lambda)$. A nonuniform adversary receives advice $a_\lambda$ and interacts with the experiment through the declared interface.
\end{definition}

For an adaptive adversary $A$ and observer scheduler $S$, write $\mathsf{View}^{b,A,S}_{0:T}$ for the transcript distribution in world $b$. The concrete advantage is
\[
\Adv^{A,S}_{G}(\lambda,T)=\left|\Pr[A(1^\lambda,a_\lambda,\mathsf{View}^{0,A,S}_{0:T})=1]-\Pr[A(1^\lambda,a_\lambda,\mathsf{View}^{1,A,S}_{0:T})=1]\right|.
\]
A family is computationally observer-quotient secure for a resource class when the supremum of this quantity over that class is negligible; in concrete statements the dependence on runtime, query count, profiling samples, noise, and horizon is retained.

\begin{definition}[Computational observer pseudometric]
For encoded states $x,x'\in X_\lambda$ and circuit size $q$, define
\[
 d^{\mathsf{ppt}}_{\Pi_\lambda,q,t}(x,x') =
 \sup_{\substack{\pi\in\Pi_\lambda\text{ scheduled by time }t\\D\text{ circuit size }q}}
 \left|\Pr[D(\pi(x))=1]-\Pr[D(\pi(x'))=1]\right|.
\]
The relation $x\equiv^{\mathsf{ppt}}_{\Pi,\epsilon}x'$ means this value is at most $\epsilon$ for the stated resources.
\end{definition}

\begin{definition}[Boundary advantage functional]
Let $B_\Pi:X_\lambda\to\mathbb R_+$ be a separating quotient boundary gap and let $Z_t$ be an adapted scalar surrogate. A boundary advantage functional is a family $(\rho_t,\eta_t)$ such that every admissible scheduler and distinguisher satisfies
\[
\Adv_t\le \rho_t(\E Z_t)+\eta_t,
\]
where $\rho_t$ is nondecreasing and $\eta_t$ is a negligible term or a concrete statistical residual. When $Z_t$ obeys
\[
\E[Z_{t+1}\mid\cF_t]\le Z_t-a_t\phi(Z_t)+\chi_t,
\]
the same scalar recurrence controls the residual distinguishing floor.
\end{definition}

\begin{lemma}[Averaging for indexed hybrids]
Let $H_0,\ldots,H_N$ be experiments and let $D$ be a fixed distinguisher. If
\[
\left|\Pr[D(H_0)=1]-\Pr[D(H_N)=1]\right|>\Delta,
\]
then some adjacent pair satisfies
\[
\left|\Pr[D(H_j)=1]-\Pr[D(H_{j+1})=1]\right|>\Delta/N.
\]
The same bound holds after conditioning on any prefix distribution that is sampled identically in all adjacent replacements.
\end{lemma}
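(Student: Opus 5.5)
The argument is a telescoping sum followed by the pigeonhole principle. Write $p_j=\Pr[D(H_j)=1]$ for $j=0,\ldots,N$. Since $D$ is a single fixed distinguisher, each $p_j$ is a well-defined real number in $[0,1]$, and
\[
p_0-p_N=\sum_{j=0}^{N-1}(p_j-p_{j+1}).
\]
By the triangle inequality,
\[
\Delta<|p_0-p_N|\le\sum_{j=0}^{N-1}|p_j-p_{j+1}|.
\]
If every adjacent gap satisfied $|p_j-p_{j+1}|\le\Delta/N$, the right-hand side would be at most $N\cdot\Delta/N=\Delta$, which is a contradiction. Hence some index $j\in\{0,\ldots,N-1\}$ has $|p_j-p_{j+1}|>\Delta/N$. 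This holds whenever $N\ge1$; for $N=0$ the hypothesis cannot hold, so the statement is vacuous.

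For the conditional clause, I would let $R$ be the shared prefix: the random variables (advice, session identifiers, scheduler coins, earlier transcript) that are sampled identically in $H_j$ and $H_{j+1}$ for every adjacent pair. For each value $r$ of $R$, define $p_j(r)=\Pr[D(H_j)=1\mid R=r]$. The same telescoping argument, applied pointwise in $r$, shows the following: for each $r$ with $|p_0(r)-p_N(r)|>\Delta$, some $j(r)$ satisfies $|p_{j(r)}(r)-p_{j(r)+1}(r)|>\Delta/N$.

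The one point that needs care is averaging over $r$. This is where the hypothesis that the prefix distribution is identical across replacements matters. Because $R$ has the same law in every hybrid, we have $p_j=\E_R[p_j(R)]$ with the same expectation for all $j$. So the unconditional telescoping can also be written as
\[
|p_0-p_N|\le\sum_j\E_R|p_j(R)-p_{j+1}(R)|.
\]
If the prefix law varied with $j$, the decomposition $p_j=\E_R[p_j(R)]$ would use different mixing measures, and the differences would not telescope. Under the stated hypothesis they do, and the pigeonhole step then gives some $j$ with $\E_R|p_j(R)-p_{j+1}(R)|>\Delta/N$. That is the form used later when the reduction embeds its challenge at a uniformly random index $j$.
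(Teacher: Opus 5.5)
Your proof is correct and follows the paper's own argument. For the unconditional bound you telescope and apply the triangle inequality, making the pigeonhole step explicit; for the conditional clause you run the same argument inside each prefix-conditioned experiment and then average over the common prefix law. One minor precision: the unconditional $p_j$ telescope regardless, and what the identical-prefix hypothesis actually buys is the identity $p_j-p_{j+1}=\E_R[p_j(R)-p_{j+1}(R)]$, which also shows your averaged conclusion already follows from the unconditional one since $\E_R|p_j(R)-p_{j+1}(R)|\ge|p_j-p_{j+1}|$.
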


\begin{proof}
The difference telescopes across adjacent experiments. The triangle inequality gives the unconditional statement, and applying the same argument inside each prefix-conditioned experiment gives the conditional statement after averaging over the prefix.
\end{proof}

\section{Interactive Interfaces, Morphisms, and Emulation}

\begin{definition}[Interactive OQSG interface]
An interactive observer-quotient game is a tuple
\[
\mathfrak G=(\mathsf{Init},\mathsf{Invoke},\mathsf{Continue},\mathsf{Observe},\mathsf{Leak},\mathsf{Post},\mathsf{Abort})
\]
with state space $X_\lambda$, session-identifier space $\mathsf{SID}_\lambda$, observer class $\Pi_\lambda$, oracle family $\cO_\lambda$, and transcript alphabet $\Sigma_\lambda$. The command $\mathsf{Init}(1^\lambda)$ samples keys, public parameters, implementation randomness, and an initial state. The command $\mathsf{Invoke}(\mathsf{sid},u)$ creates or resumes a session. The command $\mathsf{Continue}(\mathsf{sid},a)$ applies an admissible continuation action. The command $\mathsf{Observe}(\mathsf{sid},\pi)$ returns a sample from the scheduled observer $\pi\in\Pi_\lambda$. The command $\mathsf{Leak}(\mathsf{sid},j)$ returns an admitted leakage sample. The command $\mathsf{Post}(g)$ applies an efficient post-processing map to the public view. The command $\mathsf{Abort}$ records the session abort bit. No command returns a raw representative.
\end{definition}

\begin{definition}[Environment, scheduler, context, and simulator]
A nonuniform environment $Z=(Z_\lambda,a_\lambda)$ is a polynomial-size interactive machine. At each activation it may choose session identifiers, issue oracle calls, schedule continuation actions, select an observer, and request post-processing. An observer scheduler $S$ is the subroutine that maps the current public view to $(\mathsf{sid}_t,\pi_t,g_t)$ with $\pi_t\in\Pi_\lambda$. A context $\cC[\cdot]$ is any polynomial-size wrapper built from sequential composition, parallel composition, oracle forwarding, observer scheduling, and post-processing. A simulator/adaptor $\mathcal S$ is a PPT machine that maps the real interface to the ideal quotient interface, including session identifiers, oracle replies, public randomness, leakage labels, and abort behavior.
\end{definition}

\begin{definition}[Ideal quotient functionality]
The ideal quotient functionality $\mathcal F_{\Pi,B}$ stores $(q_\Pi(x),\mathsf{sid},\mathsf{tr},z)$: the quotient class, session identifier, public transcript, and scalar boundary value. Its transition is the quotient pushforward of the compatible kernel. When a representative is needed for sampling, the simulator samples it conditionally on the quotient class and the public transcript. Leakage labels are sampled from the declared ideal leakage law and remain consistent with the quotient class.
\end{definition}

\begin{lemma}[Scheduling interface]
For any admissible environment and context, the sequence $(\mathsf{sid}_t,\pi_t,g_t)$ selected at time $t$ is measurable with respect to the public prefix $\mathsf{tr}_{<t}$ and the advice $a_\lambda$. Consequently a transition/output replacement at time $t$ may condition on the entire prefix without changing the distribution of the scheduled observer for that step.
\end{lemma}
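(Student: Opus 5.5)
The plan is to argue by induction over the construction of admissible environments and contexts, following the grammar given in the definition of environment, scheduler, context, and simulator. First I will fix a probability space carrying the advice $a_\lambda$, the environment's internal coins $r_Z$, and the coins of the experiment (initialization, kernels, observers, leakage samplers). I will define the public filtration $\cF^{\mathrm{pub}}_t=\sigma(a_\lambda,r_Z,\mathsf{tr}_{<t})$, and state the claim as: $(\mathsf{sid}_t,\pi_t,g_t)$ is $\cF^{\mathrm{pub}}_t$-measurable. The appearance of $r_Z$ is harmless, because it is independent of everything the experiment samples. For nonuniform circuits it can even be absorbed into the advice by averaging. Measurability "with respect to $\mathsf{tr}_{<t}$ and $a_\lambda$" therefore means: a measurable function of those two, after conditioning on the environment's own independent coins.

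The base case is a bare environment $Z$ together with its scheduler $S$. By definition, $S$ maps the current public view to $(\mathsf{sid}_t,\pi_t,g_t)$. The key point is the interface definition of the interactive OQSG: no command returns a raw representative. Every reply the environment has received before time $t$ comes from $\mathsf{Invoke}$, $\mathsf{Observe}$, $\mathsf{Leak}$, $\mathsf{Post}$, $\mathsf{Abort}$, or an oracle reply. Each of these is by definition a coordinate of $\mathsf{tr}_{<t}$. So the environment's entire state at activation $t$ is a deterministic function of $(a_\lambda,r_Z,\mathsf{tr}_{<t})$, which I will show by a straightforward induction on $t$, and the scheduler's output inherits this measurability.

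For the inductive step I treat each context constructor. Sequential and parallel composition interleave sub-wrappers whose inputs are, by induction, public prefixes. Oracle forwarding only relays oracle replies, and those are transcript entries. Post-processing $g$ is an efficient map applied to the public view, so its output is again public and is appended to $\mathsf{tr}$. Observer scheduling inside a wrapper is just another scheduler acting on a public view. In each case, the composite's state at time $t$ is a measurable function of the component states, which lie in $\cF^{\mathrm{pub}}_t$ by induction. I expect the main obstacle here: I must make sure that a wrapper cannot obtain internal state through a side path, such as shared randomness with the experiment or an oracle whose reply depends on $X_t$ but is not logged. I will handle this by stipulating, as the interface definition implies, that every value crossing the experiment boundary is recorded in $\mathsf{tr}$. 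This includes oracle replies, leakage labels, and abort bits, and it makes this the point where "declared interface" is used essentially.

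For the consequence, suppose we replace the transition/output mechanism at time $t$ (say $K$ by $K'$, or $\pi_t$'s sampler by a simulator) by a replacement that may depend on the full prefix, including hidden state. The scheduled triple $(\mathsf{sid}_t,\pi_t,g_t)$ is fixed by $\cF^{\mathrm{pub}}_t$, and the replacement acts only on variables generated at or after step $t$. So the joint law of (prefix, scheduled triple) is identical in the two experiments. Conditioning on the prefix then yields the same scheduled observer. This is exactly the "prefix sampled identically" hypothesis needed by the averaging lemma for indexed hybrids.
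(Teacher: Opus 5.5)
Your proposal is correct and is essentially the paper's argument. Admissibility, together with the interface rule that no command returns a raw representative, means the context and scheduler read only the public prefix (including returned oracle replies, leakage labels and abort bits) and the advice. Conditioning on $\mathsf{tr}_{<t}$ therefore fixes the scheduler's input and hence the triple $(\mathsf{sid}_t,\pi_t,g_t)$ for the next hop.

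Your induction over the context constructors and your explicit inclusion of the environment's own coins $r_Z$ in the filtration are a more careful rendering of the same idea. The paper tacitly folds those coins into $a_\lambda$. Your version is the one that makes the measurability claim literally true when the environment or scheduler is randomized.
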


\begin{proof}
Admissibility gives that the context and scheduler read only the public prefix, oracle replies already returned by the interface, and the fixed advice. The raw state and raw representative are absent from the interface. Conditioning on the public prefix therefore fixes the scheduler input and hence the scheduled triple for the next hop.
\end{proof}

\begin{lemma}[Representative sampling]
Let $Q$ be a quotient class and let $\mathsf{tr}$ be a public transcript prefix with positive probability in the real execution. If the simulator samples a representative from the conditional law $\mathsf{Law}(X_t\mid q_\Pi(X_t)=Q,\mathsf{tr}_{<t}=\mathsf{tr})$, then every scheduled observer in $\Pi$ has the same conditional output law as the real execution, up to the declared observer defect $\delta_{\mathrm{obs},t}$.
\end{lemma}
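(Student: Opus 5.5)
The plan is to write the real conditional output law as a mixture over representatives and compare it, one class at a time, with the simulator's mixture. Fix a time $t$, a transcript prefix $\mathsf{tr}$ with positive real probability, and a quotient class $Q$. By the Scheduling interface lemma, the scheduled triple $(\mathsf{sid}_t,\pi_t,g_t)$ is a measurable function of $(\mathsf{tr},a_\lambda)$. Conditioning on $\mathsf{tr}_{<t}=\mathsf{tr}$ therefore fixes a single observer $\pi=\pi_t\in\Pi$. The fact that matters is that this choice does not depend on whether the representative comes from the real execution or from the simulator.

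Write $\mu_{Q,\mathsf{tr}}=\mathsf{Law}(X_t\mid q_\Pi(X_t)=Q,\mathsf{tr}_{<t}=\mathsf{tr})$. The real conditional output law is then the mixture
\[
P_{\mathrm{real}}(\cdot)=\int \pi(\cdot\mid x)\,\mu_{Q,\mathsf{tr}}(dx).
\]
The simulator samples $\tilde X\sim\mu_{Q,\mathsf{tr}}$ and then applies the same $\pi$, so its output law is the same integral. In the exact case the argument is complete at this point: the two laws coincide.

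The identity also does not actually rely on quotient invariance. For any $x,x'\in Q$ we have $d_\Pi(x,x')=0$, so $\pi(\cdot\mid x)$ is constant on $Q$. Hence both laws in fact equal $\pi(\cdot\mid x_Q)$ for any fixed representative $x_Q$. This observation becomes important once the conditional law can only be sampled approximately.

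Suppose the simulator samples from an approximate law $\tilde\mu$ instead of $\mu_{Q,\mathsf{tr}}$. I would bound the output discrepancy using convexity of total variation under mixtures and the data-processing inequality:
\[
\TV\Bigl(\int\pi(\cdot\mid x)\,\mu(dx),\int\pi(\cdot\mid x)\,\tilde\mu(dx)\Bigr)\le \sup_{x\in\operatorname{supp}\mu,\,x'\in\operatorname{supp}\tilde\mu} d_\Pi(x,x').
\]
This supremum vanishes whenever both laws are supported on $Q$. It is at most $\eps$ if we only know the representatives lie in one $\eqep$-ball, and it is at most $\TV(\mu,\tilde\mu)$ in general. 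I would define $\delta_{\mathrm{obs},t}$ as the sum of these two contributions: the class-width term and the sampler-error term. This quantity is the declared observer defect.

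The main obstacle is the dependence of $\pi_t$ on the prefix. If the scheduler could see the sampled representative, then $\pi$ and $\tilde X$ would be correlated, and the mixture argument above would fail. The Scheduling interface lemma is exactly what excludes this, since the interface returns no raw representative. A secondary point is that in the computational setting the observer $\pi$ is a PPT transducer. The same argument then applies verbatim with $d^{\mathsf{ppt}}_{\Pi_\lambda,q,t}$ in place of $d_\Pi$, because every bound above is applied pointwise in $x$.
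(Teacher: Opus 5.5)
Your proposal is correct and matches the paper's proof in substance: the scheduling lemma fixes the observer from the prefix, quotient equivalence makes $\pi(\cdot\mid x)$ constant on $Q$, and any approximation is charged to $\delta_{\mathrm{obs},t}$. Your added observation, that exact conditional sampling reproduces the real mixture even without quotient invariance, is a harmless strengthening.

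There are two slips in the approximate case. First, the supremum of $d_\Pi$ over the two supports is not bounded by $\TV(\mu,\tilde\mu)$; it can equal $1$ even when $\mu=\tilde\mu$. It is the output distance on the left that is bounded by $\TV(\mu,\tilde\mu)$, by data processing, so the defect is the minimum of your two terms and hence at most their sum. Second, a radius-$\eps$ ball under $\eqep$ only gives $2\eps$, not $\eps$, because $\eqep$ is not transitive.
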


\begin{proof}
The conditional law preserves the quotient class and the public prefix. For any scheduled $\pi_t$, quotient equivalence gives equality of observer distributions inside the class in the exact case; the approximate case contributes $\delta_{\mathrm{obs},t}$ by definition of the observer defect.
\end{proof}

\begin{lemma}[Abort synchronization]
Let $A_t$ and $A_t^{\mathsf{id}}$ be the real and ideal abort bits at time $t$. If
\[
\TV\bigl(\mathsf{Law}(A_t\mid\mathsf{tr}_{<t}),\mathsf{Law}(A_t^{\mathsf{id}}\mid\mathsf{tr}_{<t})\bigr)\le \delta_{\mathrm{abort},t},
\]
then replacing real abort behavior by ideal abort behavior over a horizon $T$ changes the environment's output distribution by at most $\sum_{t<T}\delta_{\mathrm{abort},t}$.
\end{lemma}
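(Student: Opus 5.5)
The plan is a hybrid argument over the horizon, swapping the real abort bit for the ideal one one step at a time and charging each swap to the per-step bound $\delta_{\mathrm{abort},t}$. Define hybrids $H_0,\ldots,H_T$, where $H_j$ runs the real execution except that the abort bits at times $t<j$ are drawn from the ideal conditional law $\mathsf{Law}(A_t^{\mathsf{id}}\mid\mathsf{tr}_{<t})$. All other components (observers, leakage, oracle replies, the continuation kernel) are generated exactly as in the real execution given the public prefix. Then $H_0$ is the real execution and $H_T$ is the execution with ideal abort behavior. Telescoping, as in the averaging lemma for indexed hybrids, gives
\[
\TV\bigl(\mathsf{Law}(\mathrm{out}_{H_0}),\mathsf{Law}(\mathrm{out}_{H_T})\bigr)\le\sum_{j<T}\TV\bigl(\mathsf{Law}(\mathrm{out}_{H_j}),\mathsf{Law}(\mathrm{out}_{H_{j+1}})\bigr),
\]
so it suffices to bound each adjacent term by $\delta_{\mathrm{abort},j}$.

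To compare $H_j$ and $H_{j+1}$, note that they agree up to the sampling of the prefix $\mathsf{tr}_{<j}$, so this prefix has the same law in both. Conditioned on the prefix, they differ only in whether $A_j$ comes from the real or the ideal conditional law. The continuation after time $j$ is the same Markov kernel in both hybrids. It maps $(\mathsf{tr}_{<j},A_j)$, together with fresh randomness, to the environment's output, including all later real or ideal abort samplings prescribed by the hybrid index.

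By the scheduling interface lemma, the scheduler's choices depend only on the public prefix and the advice, so this kernel does not depend on raw state that differs between the two hybrids. The data-processing inequality for total variation then bounds the conditional output distance by $\TV(\mathsf{Law}(A_j\mid\mathsf{tr}_{<j}),\mathsf{Law}(A_j^{\mathsf{id}}\mid\mathsf{tr}_{<j}))\le\delta_{\mathrm{abort},j}$. Averaging over the common prefix law and applying convexity of TV gives the unconditional bound $\delta_{\mathrm{abort},j}$. Summing over $j<T$ yields $\sum_{t<T}\delta_{\mathrm{abort},t}$.

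The main obstacle is making sure the post-$j$ continuation really is a common kernel of $(\mathsf{tr}_{<j},A_j)$ alone. In the real execution, the abort bit and subsequent behavior may be correlated with the hidden state $X_j$ beyond what the prefix records. The fix I would try first is to let the kernel depend on $X_j$ as well, resampling $X_j$ from its conditional law given $(\mathsf{tr}_{<j},A_j)$. This requires the hypothesis to be read as a bound on the conditional law given the public prefix with the hidden state marginalized, and the hybrid to sample $X_j$ consistently given the chosen bit. If that coupling is not available, I would instead read the hypothesis as a TV bound on the joint law of $(X_j,A_j)$ given the prefix, which data-processing handles directly.
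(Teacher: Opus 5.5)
Your argument is correct under the reading you reach in your last paragraph, and it takes a different route from the paper's. The paper builds one step-by-step maximal coupling of the real and ideal abort bits, conditionally on the prefix, and union-bounds the event that some step's coupling fails. You instead telescope through hybrids and apply data processing at each hop.

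The coupling only evaluates the hypothesis at prefixes distributed as in the real execution, because on the success event the shared prefix is the real one. Your ordering ($H_j$ ideal before $j$, real from $j$ on) is weaker on two counts:
\begin{itemize}
\item The real bit at hop $j$ is drawn after earlier replacements, which can change the joint law of prefix and hidden state. Its prefix-conditional law in $H_j$ then need not be the $\mathsf{Law}(A_j\mid\mathsf{tr}_{<j})$ that the hypothesis controls.
\item Prefixes of zero real probability become reachable.
\end{itemize}
Reversing the order (real bits before $j$, ideal bits from $j$ on) avoids both, since the prefix at hop $j$ is then exactly real-distributed. In exchange, the hybrid form plugs directly into the paper's averaging and wrapper lemmas, which localize a separation to a single step.

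The obstacle you flag is genuine, and it affects the paper's proof just as much. The phrase ``the environment's final output can change only on that event'' silently assumes that once the abort bits agree, the rest of the execution has the same law. That is, it assumes the continuation depends on the past only through the public prefix and the abort bit. As you noticed, the scheduling lemma does not supply this: it fixes which observer runs, not what that observer outputs.

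Read literally, the lemma is false without this assumption. Let $A_t$ equal a fresh hidden uniform bit $X_t$ that an observer reveals at time $t+1$, and let $A_t^{\mathsf{id}}$ be an independent uniform bit. Then $\delta_{\mathrm{abort},t}=0$, yet checking whether the recorded abort bit equals the revealed bit distinguishes with advantage $1/2$.

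Your first fix is to resample the carried state, after an ideal abort bit, from its real conditional law given $(\mathsf{tr}_{<t},A_t)$. This matches the paper's ideal functionality, whose simulator samples representatives conditionally on the public transcript. Under it both the coupling and your hybrids go through. Resampling from the real law also restores the correct prefix-conditional law at each hop in your ordering. Your second fix strengthens the hypothesis rather than proving the lemma as stated.
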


\begin{proof}
Couple the abort bit at each step conditionally on the prefix. The probability that any coupling fails is bounded by the sum of the conditional total variations, and the environment's final output can change only on that event.
\end{proof}

\begin{lemma}[Leakage-label consistency]
Assume every leakage label $\ell_t$ is a deterministic or randomized function of $(q_\Pi(X_t),\mathsf{sid}_t,\mathsf{tr}_{<t})$ in the ideal interface, with conditional defect $\delta_{\mathrm{label},t}$ from the real label distribution. Then leakage-label replacement over the horizon changes any admissible environment's output law by at most $\sum_{t<T}\delta_{\mathrm{label},t}$.
\end{lemma}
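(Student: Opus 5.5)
The plan is a hybrid argument over time combined with a step-by-step conditional coupling. It has the same shape as the proof of the abort-synchronization lemma.

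\textbf{Hybrids.} For $t=0,\ldots,T$, let $H_t$ be the execution in which:
\begin{itemize}[leftmargin=2em]
\item labels $\ell_0,\ldots,\ell_{t-1}$ are drawn from the ideal label law;
\item labels from step $t$ onward are drawn from the real label law;
\item every other component (state transitions, observers, oracle replies, aborts, post-processing) is left as in the real execution.
\end{itemize}
Then $H_0$ is the real execution and $H_T$ is the one with all labels replaced. For a fixed admissible environment $Z$ with output bit, the difference in $\Pr[Z=1]$ telescopes as $\sum_{t<T}\bigl(\Pr[Z(H_{t+1})=1]-\Pr[Z(H_t)=1]\bigr)$. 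Using the triangle inequality, as in the proof of the averaging lemma, it is enough to bound each adjacent gap by $\delta_{\mathrm{label},t}$. Since this holds for every output bit, it gives the claimed total-variation bound on the output law.

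\textbf{Adjacent step.} $H_t$ and $H_{t+1}$ differ only in how $\ell_t$ is sampled. I will condition on everything generated before step $t$'s label: the public prefix $\mathsf{tr}_{<t}$, the advice, the current quotient class $q_\Pi(X_t)$, and $\mathsf{sid}_t$. By the scheduling interface lemma, the session identifier, observer and post-processing choice made at step $t$ are measurable with respect to $(\mathsf{tr}_{<t},a_\lambda)$. So this prefix is sampled identically in both hybrids, which is exactly the situation covered by the conditional clause of the averaging lemma.

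Given the prefix, the ideal label is a function of $(q_\Pi(X_t),\mathsf{sid}_t,\mathsf{tr}_{<t})$ by hypothesis. Its conditional law is therefore within $\delta_{\mathrm{label},t}$ of the real label's conditional law. Take a maximal coupling of the two labels, so they agree except with probability at most $\delta_{\mathrm{label},t}$.

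After the label, both hybrids continue with the same real-label mechanism for steps $>t$ and the same environment. To make this precise, I will write the continuation as one Markov kernel $M_t$ from (prefix, label at $t$) to the environment's output. The two hybrids then apply the same $M_t$ to coupled inputs. By the data-processing inequality for total variation, the output laws differ by at most the expected conditional total variation of the label, which is at most $\delta_{\mathrm{label},t}$. Averaging over the prefix closes the step.

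\textbf{Main obstacle.} The delicate point is to justify that the continuation after step $t$ is the \emph{same} kernel in both hybrids. The future state evolution may depend on the raw state $X_t$ and not only on its quotient class, while the ideal label is sampled from the quotient data alone. I will handle this by including the raw state $X_t$ in the conditioning variable and sampling the ideal label independently of $X_t$ given the quotient data. The defect hypothesis must then be read as a bound conditional on $(q_\Pi(X_t),\mathsf{sid}_t,\mathsf{tr}_{<t})$, averaged over $X_t$; I take that as the intended meaning of ``conditional defect''.

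With that reading, the kernel $M_t$ takes (full state, prefix, label) as input and is common to both hybrids, so data processing applies. If instead the defect is only given conditional on $\mathsf{tr}_{<t}$, the same argument goes through after first absorbing $q_\Pi(X_t)$ and $\mathsf{sid}_t$ into the label sampler's input. Summing the per-step bounds over $t<T$ gives $\sum_{t<T}\delta_{\mathrm{label},t}$.
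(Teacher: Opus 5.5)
Your main argument is correct, and it is the paper's argument with different bookkeeping. The paper couples the real and ideal label at each step conditionally on the prefix and union-bounds the coupling failures, exactly as in the abort-synchronization lemma. You instead telescope over $H_0,\dots,H_T$ and apply a maximal coupling plus data processing to each hop. Both are the sequential-coupling bound for total variation, and both use the scheduling lemma the same way.

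Your ``main obstacle'' paragraph is more careful than the paper's one-line proof. The claim that the output changes only on a coupling failure holds only if each label is coupled conditionally on the full pre-label state: raw $X_t$ included, with the real label drawn from fresh randomness given that state. That is the reading you adopt. One refinement: in hop $t$ your average over $X_t$ is taken under $H_t$. Its conditional law of $X_t$ given the public prefix can differ from the real one, because earlier ideal labels carry no information about the representative and may change the environment's actions. So either state the defect pointwise in the raw state, or use a single sequential coupling, which needs the defects only along the real execution.

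Drop your last sentence, though. If the defect is controlled only conditionally on $\mathsf{tr}_{<t}$, or on $(q_\Pi(X_t),\mathsf{sid}_t,\mathsf{tr}_{<t})$, the conclusion is false. Feeding the quotient data to the sampler cannot repair it, because the obstruction is correlation between the real label and the representative \emph{inside} a quotient class. Take the paper's delayed-leakage system: the observer outputs $0$ on $a_i,b_i$ and $i$ on $c_i$, with $a_i\mapsto b_i\mapsto c_i$. Start uniformly on $\{a_0,a_1\}$, let the real label at time $0$ be $i$ on $a_i$, and let the ideal label be a fresh fair bit. Given the empty prefix and the quotient class, both labels are fair bits, so that defect is $0$. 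Yet the environment outputting $\ind\{\ell_0=Z_2\}$ sees probability $1$ in the real execution and $1/2$ after replacement. Under your full-state reading the defect is $1/2$, so your bound holds and is tight. The weaker reading therefore needs an extra hypothesis, e.g.\ that everything after the label depends on $X_t$ only through $q_\Pi(X_t)$ and the public prefix.
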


\begin{proof}
This is the same prefix coupling argument as the abort lemma, applied to the leakage-label alphabet. The scheduled post-processing map is prefix-measurable by the scheduling lemma, so label replacement is performed before the next activation without additional loss.
\end{proof}

\begin{definition}[Post-processing closure]
An observer family $\Pi$ is closed under admissible post-processing with defect $\delta_{\mathrm{post},t}$ if for every scheduled $\pi_t\in\Pi$ and every PPT transducer $g_t$ selected by the environment, the composed channel $g_t\circ\pi_t$ is represented in $\Pi$ at step $t$ up to total-variation defect $\delta_{\mathrm{post},t}$. It is closed under adaptive scheduling when the prefix-measurable selection of $\pi_t$ can be included in the next-hop distinguisher.
\end{definition}

\begin{definition}[OQSG morphism with defect vector]
For games $G=(X,K,\Pi,B,\cA)$ and $G'=(X',K',\Pi',B',\cA')$, an OQSG morphism $M:G\to G'$ is a tuple
\[
    M=(f,m,\alpha,\beta,s)
\]
where $f:X\to X'$ maps states, $m$ maps kernels, $\alpha:\Pi'\to\Pi$ pulls observers back, $\beta:\cA'\to\cA$ pulls distinguishers back, and $s:\mathbb R_+\to\mathbb R_+$ maps boundary values. Its step-$t$ defect vector is
\[
\mathsf{def}_t(M)=(\delta_{\mathrm{obs},t},\delta_{K,t},\delta_{B,t},\delta_{\mathrm{post},t}).
\]
These terms bound, respectively, observer pullback error, quotient-kernel pushforward error, boundary-value distortion, and post-processing closure loss.
\end{definition}

\begin{proposition}[Composition law for OQSG morphisms]
Let $M_1:G_0\to G_1$ and $M_2:G_1\to G_2$ be morphisms.
\begin{enumerate}[leftmargin=2em]
    \item Sequential composition is obtained by composing $f,m,s$ covariantly and $\alpha,\beta$ contravariantly. Its observer and kernel defects satisfy
    \[
    \delta_{\mathrm{obs}}(M_2\circ M_1)\le\delta_{\mathrm{obs}}(M_1)+\delta_{\mathrm{obs}}(M_2),\quad
    \delta_K(M_2\circ M_1)\le\delta_K(M_1)+\delta_K(M_2).
    \]
    If $s_2$ is $L_{s_2}$-Lipschitz on the relevant range, then
    \[
    \delta_B(M_2\circ M_1)\le L_{s_2}\delta_B(M_1)+\delta_B(M_2).
    \]
    \item Parallel composition of independent components has product total-variation defect
    \[
    \delta^{\parallel}_{\mathrm{obs}}=1-\prod_i(1-\delta^i_{\mathrm{obs}})\le\sum_i\delta^i_{\mathrm{obs}},
    \]
    and a weighted scalar functional $Z^{\parallel}=\sum_i\omega_iZ^i$ with residual $\chi^{\parallel}=\sum_i\omega_i\chi^i$.
    \item If the observer at step $t$ is selected adaptively from the prefix, the same inequalities hold conditionally on the prefix and the horizon defect is the predictable sum of the step defects.
\end{enumerate}
\end{proposition}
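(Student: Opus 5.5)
The plan is to prove each part by a triangle inequality through an intermediate object, reading each defect as a total-variation (or absolute-value) bound.

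\textbf{Part 1: sequential composition.} Set $f=f_2\circ f_1$, $m=m_2\circ m_1$, $s=s_2\circ s_1$, $\alpha=\alpha_1\circ\alpha_2$ and $\beta=\beta_1\circ\beta_2$. The pullbacks compose in reverse order because $\alpha_2:\Pi_2\to\Pi_1$ and $\alpha_1:\Pi_1\to\Pi_0$.

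For the observer defect, fix $\pi\in\Pi_2$ and a state $x\in X_0$. We compare $\pi(\cdot\mid f_2f_1x)$ with $(\alpha_1\alpha_2\pi)(\cdot\mid x)$ through the intermediate law $(\alpha_2\pi)(\cdot\mid f_1 x)$. The first gap is at most $\delta_{\mathrm{obs}}(M_2)$, since $M_2$ is applied at the state $f_1x\in X_1$. The second gap is at most $\delta_{\mathrm{obs}}(M_1)$, since $M_1$ is applied to the observer $\alpha_2\pi\in\Pi_1$. This is where the uniformity of the defects over all states and observers is used.

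The kernel defect is handled the same way. Compare $q_{\Pi_2}$ pushed through $m_2m_1K$ with the same quantity through the intermediate game $G_1$. Composing with a Markov kernel is a TV contraction (data processing), so the $M_2$ step does not amplify the $M_1$ error.

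For the boundary defect, write
\[
|s_2s_1(z)-z''|\le|s_2(s_1 z)-s_2(z')|+|s_2(z')-z''|\le L_{s_2}\delta_B(M_1)+\delta_B(M_2),
\]
where $z'$ is the $G_1$ boundary value.

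\textbf{Part 2: parallel composition.} For product channels, use the standard coupling fact: if each pair of components admits a maximal coupling with disagreement probability $\delta^i$, then the product of these couplings agrees everywhere with probability $\prod_i(1-\delta^i)$. This gives
\[
\TV\le1-\prod_i(1-\delta^i_{\mathrm{obs}}),
\]
and the union bound (equivalently, induction with $1-(1-a)(1-b)=a+b-ab$) bounds this by $\sum_i\delta^i_{\mathrm{obs}}$.

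For the weighted functional, take conditional expectations of $Z^{\parallel}=\sum_i\omega_iZ^i$ with respect to the product filtration. Linearity and $\omega_i\ge0$ give the summed recurrence with $\chi^\parallel=\sum_i\omega_i\chi^i$. I would state the decrement term as $\sum_i\omega_ia_t\phi(Z^i_t)$, or assume $\phi$ is linear or subadditive in the appropriate direction. Without such an assumption the recurrence does not close in $Z^\parallel$ alone, so I would flag this hypothesis explicitly.

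\textbf{Part 3: adaptive observer selection.} By the Scheduling interface lemma, $\pi_t$ is measurable with respect to $\mathsf{tr}_{<t}$ and the advice. Conditioning on the prefix therefore fixes $\pi_t$, so parts 1 and 2 apply verbatim conditionally. The horizon bound then follows from the prefix-coupling argument used in the Abort synchronization lemma: couple step by step conditionally on the prefix, bound the probability that some step fails by the sum of the conditional defects, and average over the prefix. This yields the predictable sum $\sum_{t<T}\E[\delta_t\mid\mathsf{tr}_{<t}]$.

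\textbf{Main obstacle.} I expect the hardest step to be the kernel-defect composition in part 1. There the $G_1$-quotient and the $G_2$-quotient must be related, which requires $f_2$ to respect quotient classes, that is, to map $\equiv_{\Pi_1}$-classes into $\equiv_{\Pi_2}$-classes up to $\delta_{\mathrm{obs}}$. If this fails, an extra observer term would appear. I would first try to derive it from the observer pullback property via $\alpha_2$, so that the bound stays at $\delta_K(M_1)+\delta_K(M_2)$.
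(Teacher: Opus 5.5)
Your proposal follows the paper's proof essentially step for step. The paper also uses the triangle inequality through the intermediate observer law for $\delta_{\mathrm{obs}}$ and $\delta_K$, the Lipschitz estimate for the boundary term, the product bound $\TV(\otimes_iP_i,\otimes_iQ_i)\le 1-\prod_i(1-\TV(P_i,Q_i))$ for parallel composition (you justify it by maximal couplings), and conditioning on the prefix so that the scheduled observer is fixed in the adaptive case.

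The two caveats you flag are real, and the paper's sketch passes over both. First, the weighted recurrence closes in $Z^{\parallel}$ alone only under extra structure, such as linear $\phi$ with a common gain. Second, the pullback via $\alpha_2$ gives only $d_{\Pi_2}(f_2y,f_2y')\le 2\delta_{\mathrm{obs}}(M_2)$ for $y\equiv_{\Pi_1}y'$, so your planned fix for the kernel defect will not by itself secure $\delta_K(M_1)+\delta_K(M_2)$. That clean bound tacitly needs $f_2$ to map $\Pi_1$-classes into $\Pi_2$-classes exactly, which holds automatically when $\delta_{\mathrm{obs}}(M_2)=0$.
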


\begin{proof}
Sequential composition follows by inserting the intermediate observer distribution and applying the triangle inequality. The boundary inequality follows from $s_2(s_1(B)+\delta_B(M_1))+\delta_B(M_2)$. The parallel statement uses $\TV(\otimes_iP_i,\otimes_iQ_i)\le1-\prod_i(1-\TV(P_i,Q_i))$. Adaptive scheduling is handled by conditioning on the prefix, where the selected observer is fixed.
\end{proof}

\begin{lemma}[Wrapper construction]
Fix a context $\cC$, environment $Z$ with advice $a_\lambda$, scheduler $S$, and distinguisher output bit. For any adjacent hybrid hop at time $t$, there is an admissible distinguisher $W^{\cC,Z,S}_t$ for that hop with state variables
\[
(\mathsf{sid}_{\le t},\mathsf{tr}_{\le t},\mathsf{oracles}_{\le t},\mathsf{leak}_{\le t},\mathsf{abort}_{\le t},a_\lambda,\mathsf{st}_Z,\mathsf{st}_S,\mathsf{st}_{\cC}).
\]
It samples the prefix exactly as in the global experiment, submits the challenged transition/output pair to the hop oracle, resumes $\cC[\cdot]$, and outputs $Z$'s final bit. Its runtime is
\[
\operatorname{time}(W_t)\le \operatorname{time}(Z)+\operatorname{time}(S)+\operatorname{time}(\cC)+\poly(T,\lambda),
\]
and its oracle count is the oracle count of the global execution plus the single challenge call for the hop.
\end{lemma}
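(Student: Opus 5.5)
The construction is an explicit simulation: $W_t$ runs the global experiment itself up to the challenged step, hands that one step to the hop oracle, and then runs the remainder itself. Fix the adjacent pair $(H_j,H_{j+1})$ that differs only in the transition/output pair at time $t$. The fully indexed hybrid sequence is arranged so that all steps before $t$ are sampled identically in both hybrids, and all steps after $t$ are sampled by the same code. The hop oracle $\mathcal O_t$ receives the current challenge input and returns either the $H_j$ or the $H_{j+1}$ version of the step-$t$ transition/output. I would define $W^{\cC,Z,S}_t$ in three phases and then check correctness, admissibility, and cost.

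\textbf{Phase 1: prefix.} $W_t$ is hardwired with the advice $a_\lambda$, which makes it a nonuniform circuit. It initializes $\mathsf{st}_Z,\mathsf{st}_S,\mathsf{st}_{\cC}$ and runs $\mathsf{Init}$, $Z$, $S$ and $\cC$ for steps $s<t$, answering every interface command with the common prefix code. It records $\mathsf{sid}_{\le t},\mathsf{tr}_{<t},\mathsf{oracles},\mathsf{leak},\mathsf{abort}$ as it goes.

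\textbf{Phase 2: challenge.} By the scheduling interface lemma, the triple $(\mathsf{sid}_t,\pi_t,g_t)$ is a function of $\mathsf{tr}_{<t}$ and $a_\lambda$, so $W_t$ can compute it without seeing any raw state. It forwards the challenge input to $\mathcal O_t$ with this scheduled observer and appends the reply to $\mathsf{tr}_{\le t}$.

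\textbf{Phase 3: suffix.} It resumes $\cC[\cdot]$, $Z$ and $S$ for steps $s>t$ using the common suffix code, and outputs $Z$'s final bit.

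\textbf{Correctness.} The claim is that the joint law of $W_t$'s view equals $H_j$ when $\mathcal O_t$ is in mode $j$, and equals $H_{j+1}$ in mode $j+1$. The argument goes by induction over activations:
\begin{itemize}
\item The prefix law is identical by construction.
\item Conditioned on the prefix, the challenge step has exactly the hybrid's conditional law, because the scheduled observer is fixed by the prefix (scheduling lemma).
\item The suffix kernel is shared by both hybrids.
\end{itemize}
It follows that $\Pr[W_t^{\mathcal O_t^{(j)}}=1]=\Pr[Z(H_j)=1]$. Combined with the averaging lemma for indexed hybrids, this is exactly the form in which the wrapper is used later.

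\textbf{Admissibility.} $W_t$ never touches raw representatives. Every quantity it reads is either returned by an interface command, including oracle replies it forwards itself, or comes from the advice. So it lies in the admissible distinguisher class.

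\textbf{Runtime and oracle count.} $W_t$ executes each of $Z$, $S$ and $\cC$ exactly once across the three phases, so those contribute $\operatorname{time}(Z)+\operatorname{time}(S)+\operatorname{time}(\cC)$. The only extra work is the following, each $\poly(\lambda)$ per step over $T$ steps, totalling $\poly(T,\lambda)$:
\begin{itemize}
\item sampling $\mathsf{Init}$ and the interface kernels;
\item bookkeeping of the recorded state tuple;
\item routing messages between components.
\end{itemize}
Oracle calls made by $Z$ or $\cC$ are forwarded one-for-one. The only new call is the challenge query to $\mathcal O_t$, which gives the stated oracle count.

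\textbf{Main obstacle.} The delicate point is the claim that the prefix and suffix can be sampled \emph{by $W_t$ itself} without the secret material used inside the hybrid, such as keys or the quotient-conditioned representative. Two routes handle this:
\begin{itemize}
\item Place secret-dependent sampling behind the ideal interface, where the simulator samples representatives via the representative-sampling lemma using only public data.
\item Grant $W_t$ the same oracle access $\cO_\lambda$ as the global execution; this is why that access is counted in the oracle bound.
\end{itemize}
I would state this as the standing requirement on the hybrid indexing, namely that only step $t$ uses the challenged secret, and verify it where each hop is defined.
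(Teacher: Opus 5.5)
Your proposal is correct and follows essentially the same route as the paper: store the public prefix together with the states of $Z$, $S$, and $\cC$; use the scheduling lemma to fix the step-$t$ observer; embed the hop challenge at that command while forwarding all other oracle calls honestly; resume the public machines and output $Z$'s bit; and charge only polynomial bookkeeping overhead. Two parts of your write-up are detail the paper leaves implicit: the explicit prefix/challenge/suffix law-equality check, and the discussion of secret-dependent sampling. On the latter, the paper's proof resolves it only through honest forwarding to $\cO_\lambda$, which is your second route.
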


\begin{proof}
The wrapper stores the complete public prefix and the internal states of the public machines. The scheduling lemma fixes the next observer from this prefix. The wrapper embeds the adjacent-hop challenge at the selected command and forwards all other oracle calls honestly. After the challenge output is returned, the wrapper continues the same public machines and uses the environment's final bit. All stored data have polynomial length and all machines are polynomial-time, giving the displayed overhead.
\end{proof}

\begin{definition}[Real/ideal OQSG emulation]
A real game $G$ emulates an ideal quotient functionality $\mathcal F_{\Pi,B}$ with error $\epsilon(\lambda,T)$ if for every polynomial-time context $\cC$, every nonuniform environment $Z$, and every adaptive observer scheduler $S$ admissible for $\Pi$, there exists a PPT simulator/adaptor $\mathcal S$ such that
\[
\left|
\Pr[Z^{\cC[G],S}(1^\lambda,a_\lambda)=1]
-
\Pr[Z^{\cC[\mathcal S^{\mathcal F_{\Pi,B}}],S}(1^\lambda,a_\lambda)=1]
\right|
\le \epsilon(\lambda,T).
\]
The quantifier order is
\[
\forall \cC\;\forall Z\;\forall S\;\exists \mathcal S\;\forall T=\poly(\lambda).
\]
\end{definition}

\begin{figure}[!htbp]
\centering
\includegraphics[width=0.92\linewidth]{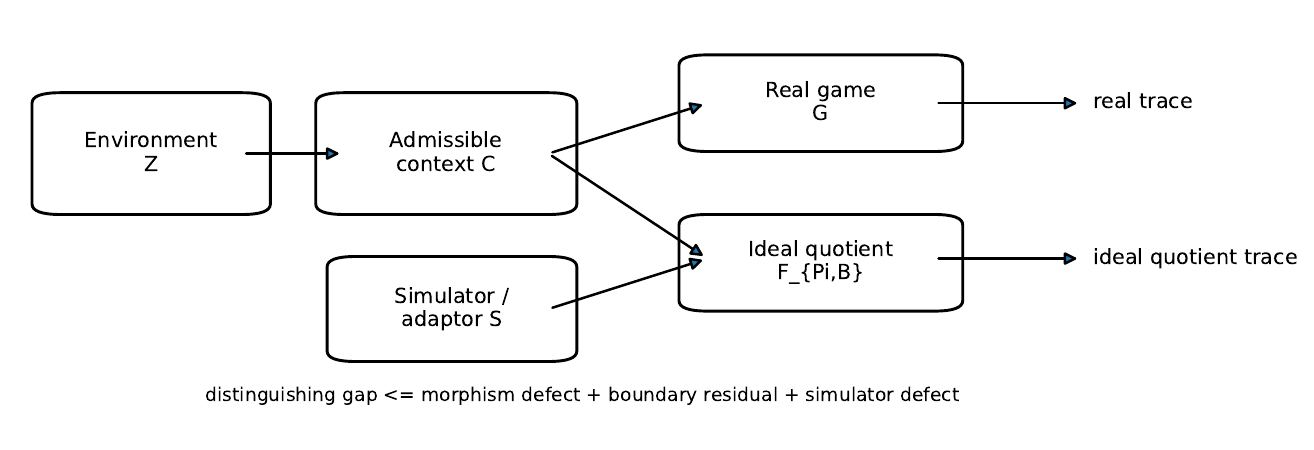}
\caption{Real/ideal quotient security under an environment-controlled context. The simulator/adaptor mediates between a real hidden-state game and an ideal quotient functionality while the context schedules sessions, observers, oracle forwarding, and post-processing.}
\label{fig:realideal}
\end{figure}
\FloatBarrier

\begin{theorem}[Observer-quotient emulation]
Let $G$ be a real interactive OQSG and let $\mathcal F_{\Pi,B}$ be its ideal quotient functionality. Suppose that over horizon $T$ the following predictable bounds hold for every admissible context, environment, and observer scheduler: observer defects $\delta_{\mathrm{obs},t}$, kernel defects $\delta_{K,t}$, post-processing defects $\delta_{\mathrm{post},t}$, interface defects $\delta_{\mathrm{sim},t}$ including abort and leakage-label defects, statistical terms $\eta_t$, and a scalar boundary functional with
\[
\E[Z_{t+1}\mid\cF_t]\le Z_t-a_t\phi(Z_t)+\chi_t,
\qquad
\Adv_t\le \rho_t(\E Z_t)+\eta_t.
\]
Then $G$ emulates $\mathcal F_{\Pi,B}$ with
\[
\epsilon(\lambda,T)
\le
\sum_{t<T}(\delta_{\mathrm{obs},t}+\delta_{K,t}+\delta_{\mathrm{post},t}+\delta_{\mathrm{sim},t}+\eta_t)
+\rho_T(\E Z_T).
\]
If $\phi(z)\ge\mu z$, $a_t\ge a>0$, and $\chi_t\le\bar\chi$, then $\rho_T(\E Z_T)$ is bounded by $\rho_T(\bar\chi/(a\mu))$ up to the transient term. If $\sum_t\chi_t<\infty$ and $\sum_t a_t=\infty$, the boundary contribution vanishes on the limiting execution.
\end{theorem}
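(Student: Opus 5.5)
The plan is a hybrid argument indexed by time and by defect type. The ideal simulator $\mathcal S$ is fixed first, using the representative sampling lemma. On each activation it keeps the ideal state $(q_\Pi(x),\mathsf{sid},\mathsf{tr},z)$ of $\mathcal F_{\Pi,B}$. When a scheduled observer, leakage request or abort query needs a concrete state, it draws a representative from $\mathsf{Law}(X_t\mid q_\Pi(X_t)=Q,\mathsf{tr}_{<t}=\mathsf{tr})$. It produces leakage labels and abort bits from the ideal laws. Because this construction depends only on the public interface, it is chosen after $\cC,Z,S$ and before $T$, which respects the quantifier order of the emulation definition.

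Next I will define hybrids $H_{t,k}$ for $t<T$ and $k\in\{0,\dots,4\}$. In $H_{t,0}$, steps $<t$ are ideal and steps $\ge t$ are real. Within step $t$, the sub-hops replace, one at a time:
\begin{enumerate}[leftmargin=2em]
\item the observer output by the representative-sampled output, at cost $\delta_{\mathrm{obs},t}$ by the representative sampling lemma;
\item the real kernel by the quotient pushforward, at cost $\delta_{K,t}$ by the definition of the kernel defect in the morphism from $G$ to $\mathcal F_{\Pi,B}$;
\item the post-processed channel $g_t\circ\pi_t$ by its representative in $\Pi$, at cost $\delta_{\mathrm{post},t}$ by post-processing closure;
\item the abort bits and leakage labels by their ideal versions, at cost $\delta_{\mathrm{sim},t}$ by the abort-synchronization and leakage-label lemmas applied to a single step;
\item the residual sampling error, at cost $\eta_t$.
\end{enumerate}
Each hop is analysed conditionally on the public prefix. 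The scheduling lemma guarantees that the triple $(\mathsf{sid}_t,\pi_t,g_t)$ is prefix-measurable, so conditioning does not alter the scheduled observer. The wrapper construction then turns any environment gap at a given hop into an admissible distinguisher $W_t^{\cC,Z,S}$ for that single replacement, with only polynomial overhead. Telescoping over all hops, as in the averaging lemma but without dividing by $N$, bounds the total gap by the sum of the per-hop defects. Part 3 of the composition proposition justifies adding these predictable step defects even though observers are chosen adaptively.

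After all $T$ steps have been replaced, the two sides still differ in the scalar boundary value $z$. The ideal functionality carries $z$, while the real state determines it only through $B_\Pi$. This final gap is charged once through the boundary advantage functional, $\Adv_T\le\rho_T(\E Z_T)+\eta_T$, which gives the term $\rho_T(\E Z_T)$. For the two corollaries, I take expectations in the recurrence.
\begin{itemize}[leftmargin=2em]
\item If $\phi(z)\ge\mu z$, $a_t\ge a$ and $\chi_t\le\bar\chi$, then $\E Z_{t+1}\le(1-a\mu)\E Z_t+\bar\chi$. Unrolling gives $\E Z_T\le(1-a\mu)^T\E Z_0+\bar\chi/(a\mu)$. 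Since $\rho_T$ is nondecreasing, the stated bound up to the transient term follows.
\item If $\sum_t\chi_t<\infty$ and $\sum_t a_t=\infty$, the Robbins--Siegmund almost-supermartingale theorem gives $Z_t\to Z_\infty$ and $\sum_t a_t\phi(Z_t)<\infty$. Therefore $\phi(Z_t)\to0$ along a subsequence, and under $\phi(z)\ge\mu z$ this forces $Z_\infty=0$. If $\rho_T$ is continuous at $0$ with $\rho_T(0)=0$, the boundary contribution vanishes in the limit.
\end{itemize}

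The main obstacle is bookkeeping rather than inequalities. Each defect must be charged exactly once, and the hybrids must be ordered so that the boundary term appears only at the end instead of at every step. The hypothesis $\Adv_t\le\rho_t(\E Z_t)+\eta_t$ would naturally be invoked at every $t$, which would give $\sum_t\rho_t$. I plan to use it only for the terminal comparison, and to interpret $\delta_{K,t}$ as covering the per-step quotient mismatch. The second difficulty is ensuring that replacing the kernel at step $t$ does not disturb the representative sampling at later steps. The conditional law at step $t+1$ must be computed under the hybrid prefix law, not the real one. I plan to absorb this discrepancy into $\delta_{K,t}$ via the predictable, prefix-conditioned form of the defects.
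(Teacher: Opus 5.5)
Your skeleton matches the paper's: per-step hybrids justified by the scheduling, representative-sampling, abort and label lemmas, post-processing closure and the wrapper lemma, followed by one terminal hop charged $\rho_T(\E Z_T)$ (the paper's $H_{T+2}$). Splitting each step into five sub-hops instead of the paper's single interface hop $H_{T+1}$ is only bookkeeping. Treating the two tail claims goes beyond the paper's proof, which does not address them. Several steps, however, do not go through as written.

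\textbf{Hybrid order.} The ``second difficulty'' you flag is real, and it cannot be absorbed into $\delta_{K,t}$. With ideal steps before $t$ and real steps from $t$ on, each hybrid must re-enter the real game by drawing a raw representative from $\mathsf{Law}(X_t\mid q_\Pi(X_t)=Q,\mathsf{tr}_{<t}=\mathsf{tr})$ on a prefix produced by the ideal side. The representative-sampling lemma, however, is stated only for prefixes of positive real probability distributed as in the real execution. Because the real suffix reads raw states, the two hybrids adjacent at $t$ hand it raw states whose within-class laws differ through the mismatch between the hybrid and real prefix laws. That mismatch is controlled only by the defects accumulated before $t$, so charging it to step $t$ and summing over $t$ gives order $T^2$, not the displayed linear sum. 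The paper's $H_j$ are ordered the same way and pass over this.

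The fix is to reverse the order. Let hybrid $j$ run the real game for steps $<j$, collapse to $(q_\Pi(X_j),\mathsf{sid},\mathsf{tr},z)$ at time $j$, and run $\mathcal S^{\mathcal F_{\Pi,B}}$ afterwards. Adjacent hybrids then share a real prefix and a common suffix that reads only ideal and public data. By data processing they differ by at most the step-$j$ discrepancy in the joint law of output, next quotient class, abort bit and label, given a real prefix. That is exactly what your sub-hops charge, each lemma is applied under its stated hypothesis, and no raw state is ever reconstructed.

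Separately, invoking $\Adv_T\le\rho_T(\E Z_T)+\eta_T$ at the terminal hop on top of your per-step $\eta_t$ yields $\sum_{t\le T}\eta_t$, one term more than the displayed bound.

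\textbf{Tail claims.} Unrolling $\E Z_{t+1}\le(1-a\mu)\E Z_t+\bar\chi$ needs $1-a\mu\ge0$, and for $a\mu>1$ the claim is false. Take $\phi(z)=\mu z$, $a_t=a$, $\chi_t=\bar\chi$, and the deterministic sequence $Z_t=0$ for $t<T$, $Z_T=\bar\chi$. It satisfies the recurrence with zero transient, yet $\E Z_T=\bar\chi>\bar\chi/(a\mu)$. Since $a_t\ge a$ implies $a_t\ge\min\{a,1/\mu\}$, the correct floor is $\bar\chi/\min\{a\mu,1\}$; the paper's boundary-rates theorem indeed assumes $0<a\mu<1$.

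In the summable case, Robbins--Siegmund gives $Z_t\to0$ almost surely. The bound, however, contains $\rho_T(\E Z_T)$, and almost-sure convergence does not imply $\E Z_T\to0$ without uniform integrability. Take expectations first instead. For deterministic gains, $m_t=\E Z_t$ satisfies $m_{t+1}\le(1-a_t\mu)m_t+\E\chi_t$. Hence $m_t$ converges and $\mu\sum_t a_tm_t\le m_0+\sum_t\E\chi_t<\infty$, so $\sum_t a_t=\infty$ forces $m_t\to0$. You then also need $\rho_T(z)\to0$ as $z\to0$ uniformly in $T$, not just continuity of each $\rho_T$ at $0$.
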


\begin{proof}
Construct hybrids $H_0,\ldots,H_{T+2}$. $H_0$ is the real execution. For $1\le j\le T$, $H_j$ replaces the first $j$ scheduled transition/output pairs by their quotient-compatible ideal counterparts, with the same public prefix, same nonuniform advice, same session identifiers, same oracle transcript, and same scheduled observer choices. $H_{T+1}$ replaces the remaining interface behavior: representative sampling, abort synchronization, oracle formatting, and leakage-label consistency. $H_{T+2}$ applies the terminal scalar boundary replacement. The scheduling, representative-sampling, abort, and label lemmas justify the corresponding hops; post-processing closure absorbs the maps $g_t$; the wrapper lemma turns any adjacent separation into one admissible next-hop distinguisher. Summing the indexed hop defects gives the displayed bound.
\end{proof}

\begin{corollary}[Expanded hybrid localization]
If an admissible environment distinguishes $\Real$ from $\Ideal$ with advantage $\epsilon$ exceeding
\[
\Delta_T=\sum_{t<T}(\delta_{\mathrm{obs},t}+\delta_{K,t}+\delta_{\mathrm{post},t}+\delta_{\mathrm{sim},t}+\eta_t)+\rho_T(\E Z_T),
\]
then the averaging lemma and wrapper construction produce an admissible algorithm separating one adjacent hop with advantage at least
\[
\frac{\epsilon-\Delta_T}{T+2}.
\]
The separated hop is one of the following indexed events: observer pullback failure $\delta_{\mathrm{obs},t}$, quotient-kernel failure $\delta_{K,t}$, post-processing closure failure $\delta_{\mathrm{post},t}$, simulator/interface failure $\delta_{\mathrm{sim},t}$, statistical leakage error $\eta_t$, or boundary-to-advantage failure $\rho_T(\E Z_T)$.
\end{corollary}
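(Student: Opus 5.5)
The plan is to reuse the hybrid sequence $H_0,\ldots,H_{T+2}$ from the proof of the Observer-quotient emulation theorem, with $H_0=\Real$ and $H_{T+2}=\Ideal$. Write $p_j=\Pr[Z^{\cC,S}(H_j)=1]$, where $Z$ is the fixed admissible environment achieving advantage $\epsilon$. Each adjacent hop $j\to j+1$ has a declared budget $\Delta^{(j)}$. For $j<T$ the budget is $\delta_{\mathrm{obs},j}+\delta_{K,j}+\delta_{\mathrm{post},j}+\eta_j$. For the interface hop $H_T\to H_{T+1}$ it is $\sum_t\delta_{\mathrm{sim},t}$; this regrouping is legitimate because the abort and leakage-label lemmas bound that hop by the summed interface defects. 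For the terminal hop it is $\rho_T(\E Z_T)$. By construction $\sum_j\Delta^{(j)}=\Delta_T$.

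The key step is a budget-subtracted version of the averaging lemma. Telescoping gives
\[
\epsilon=|p_0-p_{T+2}|\le\sum_{j=0}^{T+1}|p_j-p_{j+1}|.
\]
Subtracting $\Delta_T=\sum_j\Delta^{(j)}$ from both sides yields
\[
\sum_{j=0}^{T+1}\bigl(|p_j-p_{j+1}|-\Delta^{(j)}\bigr)\ge\epsilon-\Delta_T>0.
\]
Since there are $T+2$ summands, some index $j^*$ must satisfy
\[
|p_{j^*}-p_{j^*+1}|-\Delta^{(j^*)}\ge\frac{\epsilon-\Delta_T}{T+2}.
\]
This is the averaging lemma applied to excess advantages rather than raw advantages. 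The same argument holds conditionally on any shared prefix, because the scheduling lemma makes $(\mathsf{sid}_t,\pi_t,g_t)$ prefix-measurable, so the prefix is sampled identically in both hybrids of every hop.

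Next I apply the wrapper construction to the hop $j^*$. This gives an admissible distinguisher $W_{j^*}$ that samples the prefix exactly, embeds the challenge at hop $j^*$, and outputs $Z$'s final bit. Its advantage against the hop oracle is exactly $|p_{j^*}-p_{j^*+1}|$, and its runtime and oracle overhead are those stated in the wrapper lemma. The distinguisher does not know $j^*$, so I would have it pick the hop uniformly at random, or hardwire $j^*$ as nonuniform advice since environments are nonuniform. Hardwiring is cleaner: it avoids an extra $1/(T+2)$ loss, and the advice is only $O(\log T)$ bits.

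The last step classifies the separated hop. Its advantage exceeds $\Delta^{(j^*)}$ by at least $(\epsilon-\Delta_T)/(T+2)$, so at least one of the defect bounds charged to that hop is violated. When a hop carries several components, as the transition/output hops do with obs, $K$, post and $\eta$, I would split it into sub-hops: observer pullback, then kernel pushforward, then post-processing closure, then statistical resampling. Applying the same pigeonhole step inside the hop names a single failing event. The main obstacle is doing this refinement without weakening the denominator $T+2$ stated in the corollary. The route I would try first is to avoid splitting inside the pigeonhole count. Instead, I would argue that if every component bound held, the hop advantage would be at most the sum of the components, by the triangle inequality through the intermediate distributions. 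So some component is exceeded, and that component's event is reported as the separated one. The advantage guarantee then refers to the whole hop $j^*$, which still achieves the stated $(\epsilon-\Delta_T)/(T+2)$.
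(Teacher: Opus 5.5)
Your proposal is correct and follows the paper's route: the paper's proof is just ``apply the averaging lemma to $H_0,\ldots,H_{T+2}$ after subtracting the declared defects, then invoke the wrapper construction.'' Your per-hop budgets summing to $\Delta_T$, the pigeonhole on excess advantages over the $T+2$ hops, the wrapper with the hop index hardwired as advice, and the triangle-inequality reading of which component bound fails (keeping the $T+2$ denominator) are that argument written out in full.
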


\begin{proof}
Apply the indexed-hybrid averaging lemma to $H_0,\ldots,H_{T+2}$ after subtracting the declared defects from the telescoping bound. The wrapper construction supplies the admissible adjacent-hop algorithm.
\end{proof}

\section{Primitive-Level Instantiations}

The definitions above become concrete once the observer family is instantiated by a standard security experiment. This section records three such instantiations.

\subsection{IND-CPA encryption with adaptive timing observers}

Let $\mathsf E=(\mathsf{KeyGen},\mathsf{Enc},\mathsf{Dec})$ be a randomized encryption scheme. In the left-or-right experiment, the adversary has advice $a_\lambda$, makes at most $Q=Q(\lambda)$ encryption queries, and adaptively chooses message pairs $(m_{0,i},m_{1,i})$ of equal length. Query $i$ has implementation state
\[
    x_i=(K,r_i,m_{0,i},m_{1,i},C_i,\theta_i,\mathsf{sid}_i),
\]
where $K\leftarrow\mathsf{KeyGen}(1^\lambda)$, $r_i$ is encryption randomness, $C_i=\mathsf{Enc}_K(m_{b,i};r_i)$, $\theta_i$ is the timing state, and $\mathsf{sid}_i$ is the session identifier. The ciphertext observer is
\[
    \Pi_{ct}:\qquad \pi_{ct}(x_i)=(\mathsf{sid}_i,C_i).
\]
The timing-refined observer family contains channels
\[
    \Pi_{ct+time}:\qquad \pi_{time,j}(x_i)=(\mathsf{sid}_i,C_i,\tau_j(K,r_i,m_{b,i},\theta_i)+N_{i,j}),
\]
where $j$ ranges over measurement sites selected by an adaptive scheduler and $N_{i,j}$ is conditionally sub-Gaussian with proxy variance $\sigma_j^2\ge\sigma_{\min}^2>0$. Profiling uses $n_{prof}$ calibration samples per site and gives a mean estimator with error envelope $\Delta_{profile,j}(n_{prof},\delta)$ on the simultaneous calibration event.

For query $i$, define the timing boundary
\[
B_{time,i}=\sup_{j\in J_i}\left|\E[\tau_j(K,r_i,m_{0,i},\theta_i)]-\E[\tau_j(K,r_i,m_{1,i},\theta_i)]\right|.
\]
For Gaussian timing noise with variance $\sigma_j^2$, the total-variation distance between the timing coordinates is
\[
2\Phi\!\left(\frac{|\Delta\mu_j|}{2\sigma_j}\right)-1\le \frac{|\Delta\mu_j|}{2\sigma_j},
\]
and the same linear upper bound follows for sub-Gaussian timing after replacing the Gaussian term by the declared concentration envelope. Thus
\[
    \rho_i(z)=\min\{1,z/(2\sigma_{\min})\}+\Delta_{profile,i}(n_{prof},\delta)
\]
is a concrete boundary-to-advantage map for one scheduled timing coordinate.

\begin{theorem}[IND-CPA with adaptive timing leakage]
Let $A$ be a nonuniform PPT adversary of size $q$ making at most $Q$ encryption queries, and let $S$ be an adaptive timing scheduler that selects sites from the public prefix. Suppose the ciphertext-only scheme has IND-CPA advantage $\Adv^{ind\text{-}cpa}_{\mathsf E}(q',Q')$ against reductions of size $q'$ and query count $Q'$. Suppose further that the timing boundaries have adapted surrogates $Z_i$ satisfying
\[
\E[Z_{i+1}\mid\cF_i]\le Z_i-a_i\phi(Z_i)+\chi_i,
\qquad
\Adv_i^{time}\le \rho_i(\E Z_i)+\eta_i.
\]
Then the ciphertext-plus-timing experiment satisfies
\[
\Adv_{\mathsf E}^{ind\text{-}cpa+time}(A,S)
\le
Q\,\Adv^{ind\text{-}cpa}_{\mathsf E}(q+\poly(Q,\lambda),Q)
+\sum_{i=1}^Q\bigl(\rho_i(\E Z_i)+\eta_i\bigr)
+\Delta_{profile}(n_{prof},\delta),
\]
where $\Delta_{profile}=\sum_{i\le Q}\sup_{j\in J_i}\Delta_{profile,j}(n_{prof},\delta)$ on the calibration event. If $\phi(z)\ge\mu z$, $a_i\ge a>0$, and $\chi_i\le\bar\chi$, the timing contribution is bounded by
\[
Q\,\rho\!\left(\frac{\bar\chi}{a\mu}\right)+\sum_{i=1}^Q\eta_i+\Delta_{profile}(n_{prof},\delta)
\]
up to the transient term. If $\sum_i\chi_i<\infty$ and $\sum_i a_i=\infty$, the boundary contribution vanishes along the hardened execution.
\end{theorem}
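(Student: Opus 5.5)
The plan is a hybrid argument over the $Q$ queries that separates the ciphertext coordinate from the timing coordinate. I will use the averaging lemma for indexed hybrids, the scheduling interface lemma, and the wrapper construction.

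\textbf{Hybrids.} For $0\le k\le Q$, define $H_k$ as the experiment in which the first $k$ queries encrypt $m_{1,i}$ and take timing from the world-$1$ law, while the remaining queries use world $0$. The first $Q$ hops come from splitting each transition $H_{k-1}\to H_k$ into two half-hops.
\begin{itemize}[leftmargin=2em]
\item In half-hop (a), only the ciphertext $C_k$ is switched. The timing coordinate stays at the world-$0$ conditional law given $(K,r_k,\theta_k)$.
\item In half-hop (b), only the timing coordinates at the scheduled sites $j\in J_k$ are switched.
\end{itemize}
By the scheduling interface lemma, the site set $J_k$ is a function of the public prefix and the advice. Conditioning on the prefix therefore fixes the observer for the step, and the hop replacement does not change the scheduler's choice.

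\textbf{Ciphertext half-hops.} For half-hop (a), I build an IND-CPA reduction $W_k$ from the wrapper construction.
\begin{itemize}[leftmargin=2em]
\item $W_k$ answers queries $i<k$ with the right message and queries $i>k$ with the left message, using its own encryption oracle.
\item It sends query $k$ to the left-or-right challenge.
\item It runs $A$ and $S$ on the resulting view and outputs $A$'s bit.
\end{itemize}
The wrapper lemma bounds its size by $q+\poly(Q,\lambda)$ and its query count by $Q$. Summing over $k$ gives the term $Q\,\Adv^{ind\text{-}cpa}_{\mathsf E}(q+\poly(Q,\lambda),Q)$.

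\textbf{Main obstacle.} The reduction must produce the timing samples itself, without knowing $K$ or $r_i$, and timing depends on both. I would resolve this by having the wrapper sample timing from the profiled mean estimate plus simulated noise. On the simultaneous calibration event, this replacement costs at most $\sup_{j\in J_i}\Delta_{profile,j}$ per query. Summed over queries, that gives exactly the term $\Delta_{profile}(n_{prof},\delta)$.

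This is the step I expect to be delicate. The profiled substitute must be inserted consistently in both neighbouring hybrids, so that its cost is paid once per query and not twice. To guarantee this, I would replace the real timing by the profiled timing in a preliminary hybrid applied to all queries. The ciphertext reduction then works entirely inside the profiled world.

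\textbf{Timing half-hops.} For half-hop (b), I condition on the prefix. The two laws then differ only in the mean of the sub-Gaussian timing coordinate, by at most $B_{time,k}$. By the TV estimate stated before the theorem, the resulting distinguishing gap is bounded by $\rho_k$ applied to the boundary. The declared functional inequality $\Adv_k^{time}\le\rho_k(\E Z_k)+\eta_k$ then bounds the hop, with $\eta_k$ absorbing the concentration envelope. Since $\rho_k$ is nondecreasing and concave-free linear-capped, averaging over prefixes is handled directly by the hypothesis, stated in expectation, rather than through Jensen.

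\textbf{Summation and the hardened-regime bounds.} Telescoping all $2Q$ half-hops, together with the profiling hybrid, gives the main inequality. For the consequences, I use the linear drift $\phi(z)\ge\mu z$. Taking expectations yields
\[
\E Z_{i+1}\le(1-a\mu)\E Z_i+\bar\chi,
\]
so $\E Z_i\le(1-a\mu)^i\E Z_0+\bar\chi/(a\mu)$. The first term is the transient, and monotonicity of $\rho$ gives $Q\rho(\bar\chi/(a\mu))$.

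When $\sum_i\chi_i<\infty$ and $\sum_i a_i=\infty$, the Robbins--Siegmund almost-supermartingale theorem gives $Z_i\to$ a limit together with $\sum_i a_i\phi(Z_i)<\infty$. This forces $\liminf_i\phi(Z_i)=0$, hence $Z_i\to0$. So $\rho_i(\E Z_i)\to0$ along the hardened execution, by dominated convergence after bounding $Z_i$ through the supermartingale.
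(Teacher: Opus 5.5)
\paragraph{How your route differs from the paper's.} You order the hops differently and charge profiling in a different place.
\begin{itemize}
\item \emph{The paper.} It runs all $Q$ ciphertext hops first ($G_0\to G_Q$), leaving the timing labels in place and letting each IND-CPA reduction simulate them from the profiled law. It then makes a single hop $G_Q\to G_{Q+1}$ that replaces every timing sample at once, and charges $\sum_i(\rho_i(\E Z_i)+\eta_i)+\Delta_{profile}$ there. A last hop $G_{Q+2}$ handles the residual floor.
\item \emph{You.} You move the profiling substitution into a global preliminary hybrid and interleave per-query ciphertext and timing half-hops. You invoke the scheduling lemma explicitly to freeze $J_k$. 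You use the scalar recurrence only to evaluate $\E Z_i$, not as a hybrid.
\end{itemize}
The paper's ordering keeps the timing law fixed during the reductions and puts all timing-related loss in one hop, mirroring $H_{T+1},H_{T+2}$ in the emulation theorem. Your ordering makes each IND-CPA reduction an exact simulation. The paper's sketch leaves this implicit: its reductions simulate profiled timing while $G_{h-1},G_h$ contain real timing, and that discrepancy is not charged separately.

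\paragraph{Where the argument falls short of the statement.} The price is that your bookkeeping does not give the stated constant. Your chain runs from the real world-$0$ experiment to the profiled world-$0$ experiment, then through the $2Q$ half-hops to the \emph{profiled} world-$1$ experiment. The advantage, however, compares the real world-$0$ and real world-$1$ experiments. The missing return hop from profiled to real timing in world $1$ costs a second $\Delta_{profile}$, so your telescoping gives $2\Delta_{profile}$, not $\Delta_{profile}$. The only slack in the statement that could cover this is the $+\Delta_{profile,i}$ summand built into the $\rho_i$ defined just before the theorem. You should either say explicitly that this summand absorbs the return hop, or state the bound with $2\Delta_{profile}$.

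\paragraph{Two smaller points.} First, you assert rather than derive the per-query cost $\sup_{j\in J_i}\Delta_{profile,j}$ of replacing timing that depends on $(K,r_i)$ by an independently sampled profiled timing. The same $(K,r_i)$ determine $C_i$, and a mean-estimation envelope does not by itself bound the change in the joint law of ciphertext and timing. The paper's sketch relies on the same assumption, so this gap is shared rather than yours alone.

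Second, in the last step, almost-sure convergence $Z_i\to0$ does not give $\E Z_i\to 0$ by dominated convergence, because a nonnegative almost-supermartingale need not be uniformly integrable. Under $\phi(z)\ge\mu z$, the expectation recursion gives $\E Z_i\to0$ directly. Otherwise, the vanishing claim should be read almost surely along the execution, as in the paper's persistent-drift lemma. That lemma also supplies the continuity and positive definiteness of $\phi$ that you need to pass from $\liminf_i\phi(Z_i)=0$ to $Z_i\to0$.
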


\begin{proof}
Use hybrids $G_0,\ldots,G_{Q+2}$. $G_0$ is the real left-or-right experiment with timing. For $1\le h\le Q$, $G_h$ replaces the first $h$ ciphertexts by the ordinary IND-CPA hybrid ciphertexts while preserving session identifiers, scheduler state, timing labels, and oracle transcript. The reduction for hop $h$ samples $h$, embeds its left-or-right challenge at the $h$th encryption query, answers the remaining encryption queries honestly, forwards $A$'s public transcript to the scheduler, and simulates timing labels from the profiled timing law. Its advice is $A$'s advice together with the sampled hop index, and its size and query counts are $q+\poly(Q,\lambda)$ and at most $Q$ encryption-oracle simulations. The factor $Q$ follows from hybrid averaging. $G_{Q+1}$ replaces real timing samples by the ideal timing samples conditioned on the quotient class; the TV conversion and profiling envelope give $\sum_i(\rho_i(\E Z_i)+\eta_i)+\Delta_{profile}$. $G_{Q+2}$ applies the residual-floor replacement from the scalar recurrence. Summing the hops gives the bound.
\end{proof}

\subsection{Deterministic encryption with entropy leakage}

For deterministic encryption, security depends on the message source. Let $M$ be sampled with side information $A$ and let $H_\infty^\epsilon(M\mid A)\ge h$. Leakage is accounted for in two compatible ledgers. In the alphabet-size ledger, a leakage variable $L$ with $|\operatorname{range}(L)|\le2^\ell$ spends at most $\ell$ bits. In the smooth max-information ledger, leakage with $I_\infty^\epsilon(M;L\mid A)\le\ell$ spends at most $\ell$ bits under the same smoothing convention. We use the convention that a step with smoothing parameters $(\epsilon,\epsilon')$ satisfies
\[
H_\infty^{\epsilon+\epsilon'}(M\mid A,L)\ge h-\ell-\log(1/\epsilon').
\]
Let $h_{req}$ be the entropy threshold required by the deterministic-encryption theorem used for the base scheme, and define the entropy boundary
\[
    B_H=(h-\ell)-h_{req}.
\]

\begin{center}
\scriptsize
\setlength{\tabcolsep}{3pt}
\begin{tabular}{@{}p{0.16\linewidth}p{0.17\linewidth}p{0.20\linewidth}p{0.15\linewidth}p{0.15\linewidth}@{}}
\toprule
Operation & Ledger update & Entropy after step & Smoothing & Security condition\\
\midrule
Alphabet leakage & $|L|\le2^\ell$ & $h\leftarrow h-\ell$ & $\epsilon$ fixed & $h\ge h_{req}$\\
Max-info leakage & $I_\infty^\epsilon\le\ell$ & $h\leftarrow h-\ell$ & $\epsilon\leftarrow\epsilon+\epsilon'$ & $h\ge h_{req}$\\
Refresh & add $r$ modeled bits & $h\leftarrow h+r$ & source model & $h\ge h_{req}$\\
One step & leak $\ell_t$, refresh $r_t$ & $h_{t+1}=h_t+r_t-\ell_t$ & add $\epsilon_t'$ & $B_H(t)\ge0$\\
\bottomrule
\end{tabular}
\end{center}
Here $r_t^{-}=0$ unless the source model also spends entropy through conditioning or rejection; in the common renewal model the update is $h_{t+1}=h_t+r_t-\ell_t$.

\begin{proposition}[Entropy-spending boundary functional]
Let $Z_t=(h_{req}-H_t)_+$ be the entropy-deficit functional. If step $t$ leaks at most $\ell_t$ bits in either ledger and refreshes $r_t$ accounted bits, then
\[
Z_{t+1}\le Z_t-(r_t-\ell_t)+\xi_t,
\]
where $\xi_t$ records smoothing and estimation slack. When $\sum_t(r_t-\ell_t)=\infty$ and $\sum_t\xi_t<\infty$, the distributional observer quotient approaches the secure entropy region. When leakage exceeds refresh persistently, the residual floor is the remaining entropy deficit and the deterministic-encryption theorem converts it into a distinguishing bound.
\end{proposition}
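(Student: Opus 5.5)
The plan is to work with the ledger variable $H_t$, the accounted smooth min-entropy $H_\infty^{\epsilon_t}(M\mid A,L_{<t})$, and to derive the recurrence for $Z_t=(h_{req}-H_t)_+$ from the one-step entropy update. The argument has three stages: a per-step entropy inequality, a transfer through the positive-part map, and a summation that gives the two asymptotic regimes.

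\textbf{Step 1: per-step entropy inequality.} I will show
\[
H_{t+1}\ge H_t+r_t-\ell_t-\xi_t,
\]
taking $\xi_t\ge\log(1/\epsilon_t')$ plus any estimation slack in the source model. The two ledgers are handled separately.
\begin{itemize}
\item In the alphabet-size ledger, the chain rule for (smooth) min-entropy with a leakage variable of range at most $2^{\ell_t}$ costs $\ell_t$ bits. The smoothing parameter stays fixed.
\item In the smooth max-information ledger, I use the displayed convention $H_\infty^{\epsilon+\epsilon'}(M\mid A,L)\ge h-\ell-\log(1/\epsilon')$. The $\log(1/\epsilon'_t)$ term goes into $\xi_t$, and $\epsilon_{t+1}=\epsilon_t+\epsilon'_t$ is tracked.
\item The refresh term $r_t$ is added by the source model, as in the table row ``One step''. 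Any conditioning or rejection cost $r_t^{-}$ is also absorbed into $\xi_t$.
\end{itemize}

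\textbf{Step 2: transfer to $Z_t$.} The map $u\mapsto(h_{req}-u)_+$ is nonincreasing and $1$-Lipschitz. Applying it to Step 1 gives
\[
Z_{t+1}\le (h_{req}-H_t-(r_t-\ell_t)+\xi_t)_+ .
\]
I then need to drop the outer positive part to reach the stated form $Z_{t+1}\le Z_t-(r_t-\ell_t)+\xi_t$. This is the main obstacle, because the bound fails when $Z_t=0$ and the net refresh is positive: the right-hand side becomes negative while $Z_{t+1}\ge 0$. I would handle it in one of two ways.
\begin{itemize}
\item Read the recurrence as holding on the deficit region $H_t\le h_{req}$, where $(\cdot)_+$ is inactive.
\item Alternatively, cap the accounted refresh at the deficit, replacing $r_t$ by $\min\{r_t,\,Z_t+\ell_t\}$. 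This matches the boundary-functional form $Z_t-a_t\phi(Z_t)+\chi_t$ with $a_t\phi(Z_t)=\min\{r_t-\ell_t,Z_t\}$ and $\chi_t=\xi_t$.
\end{itemize}
I would state this convention explicitly in the proof.

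\textbf{Step 3: telescoping and the two regimes.} Summing the recurrence gives
\[
Z_T\le \Bigl(Z_0-\sum_{t<T}(r_t-\ell_t)+\sum_{t<T}\xi_t\Bigr)_+ .
\]
\begin{itemize}
\item If $\sum_t(r_t-\ell_t)=\infty$ and $\sum_t\xi_t<\infty$, the bracket tends to $-\infty$. So $Z_T=0$ eventually, meaning $H_T\ge h_{req}$ and $B_H\ge 0$. The cumulative smoothing $\sum_t\epsilon'_t$ must stay finite for the smooth entropy to be meaningful, and I would note it as the total smoothing error.
\item If $\ell_t-r_t\ge c>0$ persistently, then $Z_t$ stays bounded below and the remaining deficit is the residual floor.
\end{itemize}
In the second regime I would invoke the base deterministic-encryption theorem at entropy $h_{req}-Z_T$, or the boundary advantage functional $\Adv_t\le\rho_t(\E Z_t)+\eta_t$ with smoothing added to $\eta_t$, to convert that floor into a distinguishing bound.
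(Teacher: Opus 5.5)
The paper states this proposition without proof (it is read off the ledger table), so your argument has to carry everything. Step~1 is the right starting point. The obstacle you isolate in Step~2 is a genuine defect of the literal statement: whenever $Z_t+\xi_t<r_t-\ell_t$ the claimed right-hand side is negative while $Z_{t+1}\ge0$. Moreover, together with the first-regime hypothesis, the uncorrected recurrence would sum to force $Z_T<0$ for large $T$, so it fails exactly where it is meant to be used. Your first repair does not fix this. What is needed is that the right-hand side be nonnegative, i.e.\ $Z_t\ge r_t-\ell_t-\xi_t$, not $H_t\le h_{req}$: with $Z_t=1$, $r_t-\ell_t=5$, $\xi_t=0$ you have $H_t<h_{req}$, yet the claim reads $Z_{t+1}\le-4$. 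The capped repair is valid, since in the only nontrivial case $r_t-\ell_t>Z_t$ you get $Z_{t+1}\le(Z_t-(r_t-\ell_t)+\xi_t)_+\le\xi_t$. Note, though, that $\min\{r_t-\ell_t,Z_t\}$ is negative when $\ell_t>r_t$, and is then not an $a_t\phi(Z_t)$ term with $a_t,\phi\ge0$.

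The real gap is Step~3, which cannot be obtained by summing the recurrence in either corrected form, because clipping at zero discards entropy banked above $h_{req}$. Take $Z_0=0$, $\xi_t=0$ and $r_t-\ell_t$ alternating $+10,-9$. The capped recurrence admits the nonnegative solution $0,0,9,0,9,\ldots$, although $\sum_t(r_t-\ell_t)=\infty$ and Step~1 actually forces $Z_t=0$ for all $t$. Your telescoped display is true, but only as Step~1 summed before clipping, $H_T\ge H_0+\sum_{t<T}(r_t-\ell_t-\xi_t)$, followed by the nonincreasing map $u\mapsto(h_{req}-u)_+$; the first regime must be proved that way. The second regime also needs rework, on two counts. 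First, Step~1 gives only lower bounds on $H_t$, so nothing shows $Z_t$ is bounded below. What you obtain is the upper certificate $Z_T\le(Z_0+\sum_{t<T}(\ell_t-r_t+\xi_t))_+$, which grows at least linearly when $\ell_t-r_t\ge c$, so it has no finite floor. Second, the base deterministic-encryption theorem cannot be applied at entropy $h_{req}-Z_T$, because $h_{req}$ is by definition its threshold. Turning a deficit into advantage therefore needs an entropy-graded bound such as $\rho_t$, added as a hypothesis. Finally, suppose infinitely many steps are charged in the max-information ledger. Then $\xi_t\ge\log(1/\epsilon'_t)$ with $\sum_t\xi_t<\infty$ forces $\epsilon'_t\to1$, hence $\sum_t\epsilon'_t=\infty$, contradicting the finite total smoothing you require. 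The first regime is meaningful only if all but finitely many steps use the alphabet-size ledger.
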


\subsection{Nonce-refreshing authenticated-encryption state machines}

An authenticated-encryption implementation may have state
\[
    x=(K,N,c,\theta,\mathsf{sid}),
\]
where $K$ is the key, $N$ the nonce, $c$ a counter, $\theta$ a microarchitectural state, and $\mathsf{sid}$ the session identifier. A transcript observer sees $(\mathsf{sid},N,C,T,\mathsf{abort})$; timing, cache, power, and electromagnetic observers additionally see scheduled leakage samples. A nonce-refresh transition is a hidden continuation when it changes $(N,c,\theta)$ while preserving the transcript quotient. Compatibility requires
\[
(K,N,c,\theta,\mathsf{sid})\equiv_\Pi(K,N',c',\theta',\mathsf{sid}')
\Longrightarrow
q_{\Pi\#}K(x,\cdot)=q_{\Pi\#}K(x',\cdot)
\]
up to the declared kernel defect. The scalar boundary term tracks nonce-collision probability, leakage-state drift, and abort-label discrepancy. The real/ideal theorem then gives a single advantage bound whose terms correspond to nonce misuse, observer incompatibility, side-channel residual, and simulator error.

\section{Quotient Dynamics and Trace Lifting}

\subsection{Compatibility with the quotient}

A transition kernel respects an observer quotient when it does not split equivalent states into distinguishable future distributions.

\begin{definition}[Quotient-compatible kernel]
Let $K:X\to\DeltaS(X)$ be a Markov kernel. It is $\Pi$-compatible when for all $x\eqobs x'$ and every measurable union $B$ of equivalence classes in $X/\Pi$,
\[
    K(x)(B)=K(x')(B).
\]
Equivalently, the pushforward kernels $q_{\Pi\#}K(x)$ and $q_{\Pi\#}K(x')$ on $X/\Pi$ are equal whenever $x\eqobs x'$.
\end{definition}

\begin{proposition}[Induced quotient dynamics]
If $K$ is $\Pi$-compatible, then there exists a unique kernel $\bar K:X/\Pi\to\DeltaS(X/\Pi)$ such that
\[
    \bar K(q_{\Pi}(x)) = q_{\Pi\#}K(x)
\]
for every $x\in X$.
\end{proposition}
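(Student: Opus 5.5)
The plan is the standard universal property of a quotient: define $\bar K$ on classes by choosing representatives, check that the choice does not matter, then show uniqueness from the surjectivity of $q_\Pi$.

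\textbf{Definition.} For $\xi\in X/\Pi$, pick any $x$ with $q_\Pi(x)=\xi$; such an $x$ exists because $q_\Pi$ is surjective by construction of $X/\Pi$. Set $\bar K(\xi)=q_{\Pi\#}K(x)$. This is the probability measure on $X/\Pi$ given by $\bar K(\xi)(E)=K(x)(q_\Pi^{-1}(E))$. The $\sigma$-algebra on $X/\Pi$ is the quotient $\sigma$-algebra, so $E$ is measurable exactly when $q_\Pi^{-1}(E)$ is measurable. Since $q_\Pi^{-1}(E)$ is a union of equivalence classes, this is the class of sets that appears in the compatibility definition.

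\textbf{Independence of the representative.} Suppose $x'$ is another representative. Then $x\eqobs x'$, and $\Pi$-compatibility gives $K(x)(q_\Pi^{-1}(E))=K(x')(q_\Pi^{-1}(E))$ for every measurable $E$. So $q_{\Pi\#}K(x)=q_{\Pi\#}K(x')$, and $\bar K$ is well defined. The identity $\bar K(q_\Pi(x))=q_{\Pi\#}K(x)$ then holds for every $x$ directly from the definition.

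\textbf{Uniqueness.} Suppose $\tilde K$ also satisfies $\tilde K(q_\Pi(x))=q_{\Pi\#}K(x)$ for all $x$. Every $\xi$ has the form $q_\Pi(x)$, so $\tilde K(\xi)=\bar K(\xi)$ for every $\xi$.

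\textbf{Main obstacle: measurability in the general case.} We still need $\xi\mapsto\bar K(\xi)(E)$ to be measurable on $X/\Pi$, so that $\bar K$ is a Markov kernel and not just a map into measures. Composing with $q_\Pi$ gives $\bar K(q_\Pi(x))(E)=K(x)(q_\Pi^{-1}(E))$, which is a measurable function of $x$ because $K$ is a kernel. It is also constant on equivalence classes, so each of its preimage sets is saturated. By the definition of the quotient $\sigma$-algebra, a function on $X/\Pi$ whose composition with $q_\Pi$ is measurable is itself measurable. In the finite setting the paper mainly uses, this step is trivial.
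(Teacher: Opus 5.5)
Your proposal is correct and follows the paper's proof. The paper likewise defines $\bar K$ on classes through a representative, uses $\Pi$-compatibility to show the choice does not matter, and gets uniqueness from surjectivity of $q_\Pi$. Your extra check that $\xi\mapsto\bar K(\xi)(E)$ is measurable for the quotient $\sigma$-algebra is valid, and it supplies a step the paper leaves implicit.
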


\begin{proof}
Define $\bar K([x])=q_{\Pi\#}K(x)$. Compatibility says this definition is independent of the representative $x$ of the class $[x]$. Uniqueness follows from surjectivity of $q_{\Pi}$.
\end{proof}

\subsection{Trace lifting}

The next theorem is the basic sufficient condition that allows local quotient reasoning to become trace reasoning.

\begin{theorem}[Quotient trace lifting]\label{thm:trace-lifting}
Let $K_0,K_1$ be $\Pi$-compatible kernels on $X$ with induced quotient kernels $\bar K_0,\bar K_1$. Suppose $q_{\Pi}(x_0^0)=q_{\Pi}(x_0^1)$ and $\bar K_0=\bar K_1$. Then for every horizon $T$ and every non-adaptive observer sequence $\pi_0,\ldots,\pi_T\in\Pi$, the transcript distributions under $\mathsf{Exp}_0$ and $\mathsf{Exp}_1$ are identical. Consequently, the unbounded distinguishing advantage is zero.
\end{theorem}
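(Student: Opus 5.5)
The plan is to show that, in both worlds, the pair process consisting of the quotient class $Q_t=q_\Pi(X_t)$ and the observation $Z_t$ has the same joint law. The transcript $Z_{0:T}$ is then a marginal of identical laws. Two facts drive this. First, every $\pi\in\Pi$ is constant on quotient classes: by the definition of $d_\Pi$, $x\eqobs x'$ means $\pi(\cdot\mid x)=\pi(\cdot\mid x')$ for every $\pi\in\Pi$. So each $\pi_t$ factors as $\pi_t(\cdot\mid x)=\bar\pi_t(\cdot\mid q_\Pi(x))$ for a well-defined kernel $\bar\pi_t$ on $X/\Pi$. Second, by $\Pi$-compatibility and the Proposition on induced quotient dynamics, the pushforward of $K_b(x)$ under $q_\Pi$ is $\bar K_b(q_\Pi(x))$. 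I would work in the finite (or countable) setting of the preliminaries, so that all conditioning is elementary; the measurable case follows by the same kernel computation.

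\textbf{Step 1: the quotient process is Markov with kernel $\bar K_b$.} Fix world $b$. Because the observer sequence is non-adaptive and $Z_s$ is drawn from $\pi_s(\cdot\mid X_s)$ independently of everything else given $X_s$, the next state $X_{t+1}$ is conditionally independent of $Z_{0:t}$ given $X_{0:t}$, and its law given $X_{0:t}$ is $K_b(X_t)$. Pushing forward by $q_\Pi$ gives
\[
\Prb\bigl(Q_{t+1}\in B\mid X_{0:t},Z_{0:t}\bigr)=K_b(X_t)\bigl(q_\Pi^{-1}B\bigr)=\bar K_b(Q_t)(B).
\]
This depends on the history only through $Q_t$. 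Likewise, $\Prb(Z_{t+1}=z\mid X_{0:t+1},Z_{0:t})=\bar\pi_{t+1}(z\mid Q_{t+1})$.

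\textbf{Step 2: induction on $t$.} The claim is that the law of $(Q_{0:t},Z_{0:t})$ is the same for $b=0$ and $b=1$. For the base case, $Q_0$ is the same point mass in both worlds because $q_\Pi(x_0^0)=q_\Pi(x_0^1)$, and $Z_0\sim\bar\pi_0(\cdot\mid Q_0)$. For the inductive step, I would use the tower property over the finer history $(X_{0:t},Z_{0:t})$. Step 1 shows that the conditional law of $(Q_{t+1},Z_{t+1})$ given $(Q_{0:t},Z_{0:t})$ is
\[
\bar K_b(Q_t)(dq)\,\bar\pi_{t+1}(dz\mid q).
\]
Since $\bar K_0=\bar K_1$, this is the same in both worlds. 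Chaining these conditionals gives the explicit product formula
\[
\Prb(z_{0:T})=\sum_{q_{0:T}}\delta_{Q_0}(q_0)\prod_{t}\bar\pi_t(z_t\mid q_t)\prod_{t<T}\bar K(q_t,q_{t+1}),
\]
which is independent of $b$. Marginalizing out $Q_{0:T}$ gives equality of transcript laws. Since the advantage against the class of all tests equals total variation between transcript distributions, the unbounded advantage is zero.

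\textbf{Main obstacle.} The step needing the most care is Step 1's claim that conditioning on the raw history $(X_{0:t},Z_{0:t})$ collapses to dependence on $Q_t$ alone. This requires two things: the observation noise must be drawn independently of future transitions given the current state, and the observer choice must be non-adaptive. Otherwise $\pi_t$ could depend on past $Z$ through something other than the quotient. The non-adaptivity hypothesis is exactly what removes that issue. In the general measurable case, one must additionally ensure that $q_\Pi$ is measurable and that unions of classes form the quotient $\sigma$-algebra. I would state this as a standing assumption rather than prove it.
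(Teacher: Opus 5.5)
Your proof is correct and follows the same route as the paper. Equal initial classes together with $\bar K_0=\bar K_1$ give equal quotient-process laws, and observers factor through $q_\Pi$, so all cylinder probabilities of the transcript agree; your explicit product formula spells out the joint (quotient path, transcript) structure that the paper's sketch leaves implicit. One small correction to your closing remark: choosing observers adaptively from the transcript would not break the argument, because $\pi_t$ would then be a function of $Z_{0:t-1}$, which is already in your conditioning (this is the paper's subsequent corollary on adaptive lifting).
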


\begin{proof}
Because $\bar K_0=\bar K_1$ and the initial quotient states are equal, the quotient processes $q_{\Pi}(X_t^0)$ and $q_{\Pi}(X_t^1)$ have the same law for every finite horizon. If two raw states lie in the same quotient class, every observer in $\Pi$ produces the same output distribution by definition. Therefore every cylinder probability of the transcript process agrees between the two experiments. Equality of all finite cylinder probabilities gives equality of transcript distributions. The total variation distance between the two transcript distributions is zero, hence every test has zero advantage.
\end{proof}

\begin{corollary}[Adaptive observer lifting under policy compatibility]
Assume the hypotheses of Theorem~\ref{thm:trace-lifting} except that the observer at time $t$ is chosen by an adaptive policy $\rho_t$ that depends only on the previous transcript and selects $\pi_t\in\Pi$. Then the conclusion remains true.
\end{corollary}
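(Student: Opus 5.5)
We prove equality of the joint laws of the pair (quotient state, transcript) in the two experiments, by induction on $t$. Let $Q^b_t=q_\Pi(X^b_t)$. The key point is that the adaptive policy $\rho_t$ sees only the transcript prefix $Z_{0:t-1}$, never the raw state. So once the law of $(Q_t, Z_{0:t-1})$ is fixed, the law of the scheduled observer is fixed too. The inductive claim is that for every $t\le T$,
\[
\mathsf{Law}\bigl(Q^0_t, Z^0_{0:t-1}\bigr)=\mathsf{Law}\bigl(Q^1_t, Z^1_{0:t-1}\bigr).
\]
The base case $t=0$ is the hypothesis $q_\Pi(x_0^0)=q_\Pi(x_0^1)$, with an empty prefix. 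If $\rho_t$ is randomized, we assume its internal coins are independent of the state and identically distributed in both worlds, and we fold them into the prefix.

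\textbf{Emission step.} Condition on a prefix $z_{0:t-1}$ with positive probability. Then $\pi_t=\rho_t(z_{0:t-1})$ is one fixed element of $\Pi$, the same in both worlds. The conditional law of $Z_t$ given $X_t=x$ is $\pi_t(\cdot\mid x)$. Because $\pi_t\in\Pi$, this depends on $x$ only through $q_\Pi(x)$, so we can write it as $\bar\pi_t(\cdot\mid Q_t)$.

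This step needs care. $Z_t$ must be conditionally independent of the raw representative once we fix $Q_t$ and the prefix. That holds because the observer's fresh randomness is independent of the past, and its output law is constant on each class.

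It follows that the joint law of $(Q_t,Z_{0:t})$ equals the common law of $(Q_t,Z_{0:t-1})$ composed with the kernel $\bar\pi_t$, evaluated at $\rho_t$ of the prefix. This is the same in both worlds.

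\textbf{Transition step.} Next we extend to $Q_{t+1}$. The conditional law of $X_{t+1}$ given the full past is $K_b(X_t)$. By $\Pi$-compatibility and the Induced quotient dynamics proposition, its pushforward under $q_\Pi$ is $\bar K_b(Q_t)=\bar K(Q_t)$. This depends only on $Q_t$, not on the raw state or on the transcript.

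For this we need the Markov property: $X_{t+1}$ is independent of $Z_{0:t}$ given $X_t$. That holds because the observer samples draw fresh randomness, and the policy does not feed back into the dynamics. Conditioning then gives
\[
\mathsf{Law}(Q_{t+1}, Z_{0:t}) = \mathsf{Law}(Q_t,Z_{0:t})\otimes\bar K,
\]
where $\bar K$ acts on the $Q_t$ coordinate and the $Q_t$ coordinate is then marginalized out. This expression is again world-independent, which closes the induction.

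At $t=T$ the two transcript laws agree, so their total variation distance is zero, and every test has zero advantage, as in Theorem~\ref{thm:trace-lifting}.

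\textbf{Main obstacle.} The hard part is justifying that $Q_t$ together with the prefix is a sufficient statistic under adaptivity. Conditioning on the transcript prefix tilts the law of the raw representative inside the class $Q_t$, and the tilt can differ between worlds. The argument survives only because neither the next observer output nor $q_{\Pi\#}K_b$ depends on which representative sits inside the class. I would make this explicit by writing the joint density of $(X_{0:t},Z_{0:t})$ as a product of kernels. Then I would integrate out each $X_s$ in favour of $Q_s$, using the two facts just named: the emission law is class-constant, and the pushforward kernel is class-constant.
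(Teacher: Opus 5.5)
Your proof is correct and follows the paper's overall plan: induct over $t$, condition on the transcript prefix so that $\rho_t$ fixes a single $\pi_t\in\Pi$, and use that outputs depend on the state only through its class. The difference is in what the induction carries.

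The paper uses Theorem~\ref{thm:trace-lifting} as a black box. It invokes the non-adaptive lifting argument to get equal prefix probabilities, implicitly because along a fixed prefix the policy's choices form a fixed non-adaptive sequence. It then asserts that the next output depends only on ``the quotient state law, which is equal in both experiments.'' You instead re-derive the lifting from the kernel structure. Your invariant is $\mathsf{Law}(Q_t,Z_{0:t-1})$, and you rely on two facts: the emission law is class-constant, and the pushforward $q_{\Pi\#}K_b$ is class-constant.

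Your route is the more careful one. The paper's last step really needs the law of $Q_t$ \emph{conditional on the prefix}, not its marginal. Your remark about within-class tilting is exactly why the joint law must be carried.

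The paper's route buys brevity, and it can be closed without any induction. The transcript alphabets here are finite and the dynamics do not depend on the observations. So the adaptive probability of a full transcript $z_{0:T}$ equals its probability under the frozen non-adaptive sequence $\pi_s=\rho_s(z_{0:s-1})$. Theorem~\ref{thm:trace-lifting} applies to that frozen sequence directly, giving pointwise equality. Randomized policies are handled by averaging over their coins. Your route is self-contained and makes explicit which structural facts the adaptive case relies on.

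One cosmetic point: your invariant at $t=T$ covers only $Z_{0:T-1}$, so apply the emission step once more to obtain the law of $Z_{0:T}$.
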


\begin{proof}
Condition on any transcript prefix with positive probability. By the non-adaptive lifting argument, the two experiments assign the same probability to that prefix. The policy selects the same next observer distribution as a function of that prefix. The next output law depends only on the quotient state law, which is equal in both experiments. Induction over $t$ proves equality of all adaptive transcript distributions.
\end{proof}

\subsection{Approximate lifting}

Exact equality is too strong for computational security and noisy sensors. The approximate version replaces equality by an error accumulation bound.

\begin{definition}[One-step quotient defect]
For kernels $K_0,K_1$ and observer class $\Pi$, define the quotient defect at tolerance zero by
\[
    \delta_{\Pi}(K_0,K_1)=\sup_{x\eqobs x'}\TV\big(q_{\Pi\#}K_0(x),q_{\Pi\#}K_1(x')\big).
\]
\end{definition}

\begin{theorem}[Telescoping trace bound]
Assume that at each time step the quotient transition defect is at most $\delta$ and the initial quotient distributions are within total variation distance $\eta$. Then, for any fixed horizon $T$ and any transcript observer sequence from $\Pi$,
\[
    \TV(\mathsf{Law}_0(Z_{0:T}),\mathsf{Law}_1(Z_{0:T}))\le \eta + T\delta.
\]
\end{theorem}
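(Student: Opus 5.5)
The plan is a coupling/telescoping argument on the quotient processes, reduced to transcripts in the same way as the proof of Theorem~\ref{thm:trace-lifting}. Write $\mu^b_t=\mathsf{Law}(q_\Pi(X^b_t))$ for the quotient marginal in world $b$, and $\mu^b_{0:t}$ for the joint law of the quotient path $(q_\Pi(X^b_0),\ldots,q_\Pi(X^b_t))$. The first step is to show that the transcript law is a fixed Markov-kernel image of the quotient path law. Conditionally on the raw path, the observations $Z_s\sim\pi_s(\cdot\mid X_s)$ are independent, and by definition of $\eqobs$ each factor depends on $X_s$ only through $q_\Pi(X_s)$. The observer sequence is fixed, so the map $\mu_{0:T}\mapsto\mathsf{Law}(Z_{0:T})$ is the same kernel in both worlds. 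By the data-processing inequality for total variation,
\[
\TV(\mathsf{Law}_0(Z_{0:T}),\mathsf{Law}_1(Z_{0:T}))\le \TV(\mu^0_{0:T},\mu^1_{0:T}).
\]

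The subtle point is that the quotient process is Markov only when the kernels are compatible. I would therefore read "quotient transition defect at most $\delta$" as the one-step quotient defect $\delta_\Pi(K_0,K_1)\le\delta$. Because that supremum is taken over all pairs $x\eqobs x'$, including $x=x'$, it bounds $\TV(q_{\Pi\#}K_0(x),q_{\Pi\#}K_1(x'))$ uniformly over representatives of a common class. Hence the next quotient law, conditioned on the whole raw history, depends on that history only through the current class, up to $\delta$, and this is all the argument needs.

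The core step is a hybrid (telescoping) argument on path laws. Define the hybrid $H_j$ as follows: draw the initial state from world $1$, run world-$1$ transitions for the first $j$ steps, and run world-$0$ transitions afterwards. Here "draw the initial state" means sampling a representative of the initial class and conditioning on it, which is harmless by the previous paragraph. Then $H_0$ differs from world $0$ only in the initial law, giving a contribution of at most $\eta$ by data processing. Adjacent hybrids $H_j$ and $H_{j+1}$ share the same path-law prefix up to time $j$ and the same continuation kernel after time $j+1$. They differ only in a single transition, $K_0$ versus $K_1$, applied to the same state. Conditioning on the prefix, the contribution of that transition is at most $\delta$. A further data-processing step through the common continuation then gives $\TV(H_j,H_{j+1})\le\delta$. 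After $T$ transitions the last hybrid is world $1$, so the triangle inequality gives $\eta+T\delta$. The averaging lemma for indexed hybrids is the same telescoping argument seen from the distinguisher side, so one could equally phrase the proof through it.

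The main obstacle is the continuation step in the hybrid. Strictly, the two hybrids may place different raw representatives inside a class, and the common world-$0$ continuation must not amplify this difference. I would handle it by coupling on quotient classes rather than raw states. The bound $\delta_\Pi$ is uniform over representatives, and observer outputs depend only on classes. So one can build a maximal coupling of the two hybrids at step $j+1$ that agrees on the class with probability at least $1-\delta$. After agreement, both hybrids evolve by kernels whose quotient pushforwards coincide up to the same uniform control. If this uniformity is insufficient without compatibility, I would add $\Pi$-compatibility of $K_0$ as a standing assumption, as in Theorem~\ref{thm:trace-lifting}. Under that assumption the quotient continuation is a genuine kernel $\bar K_0$, and data processing applies directly.
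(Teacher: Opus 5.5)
There is a genuine gap, and it is the one you flagged at the end. The statement assumes only $\delta_\Pi(K_0,K_1)\le\delta$, not $\Pi$-compatibility, and your hybrids break down without compatibility.

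Every hop of your chain runs the \emph{same} kernel $K_0$ from two raw states that agree only on their class. Without compatibility, that continuation is not a function of the class, so data processing is unavailable. The hypothesis only gives $\TV(q_{\Pi\#}K_0(x),q_{\Pi\#}K_0(x'))\le 2\delta$ for $x\eqobs x'$, via the triangle inequality through $q_{\Pi\#}K_1(x')$. This slack is paid again at every continuation step.

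Concretely, take the following system:
\begin{itemize}
\item $a\eqobs b$, and $c,d$ are absorbing states in two further, separated classes;
\item $K_0(a)$ puts mass $\tfrac12+\delta$ on $c$ and $\tfrac12-\delta$ on $d$;
\item $K_0(b)$ puts mass $\tfrac12-\delta$ on $c$ and $\tfrac12+\delta$ on $d$;
\item $K_1(a)=K_1(b)$ is uniform on $\{c,d\}$;
\item the initial state is $a$ in world $0$ and $b$ in world $1$.
\end{itemize}
Then $\delta_\Pi(K_0,K_1)=\delta$ and $\eta=0$. Yet already for $T=1$ the hop from world $0$ to $H_0$ costs $2\delta>\eta$, and your chain certifies only $3\delta$ instead of $\delta$.

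Your coupling patch does no better. After the classes agree at step $j+1$, each later $K_0$ step from two different representatives can decouple with probability $2\delta$. So $\TV(H_j,H_{j+1})$ is only bounded by $\delta+2(T-j-1)\delta$, and the telescoped total is of order $\eta+T^2\delta$. Adding compatibility of $K_0$ does make every hop correct, since the $K_0$ quotient path then depends only on the initial class. But that proves a strictly weaker theorem.

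The missing idea is to couple the two worlds \emph{directly} rather than through hybrids, which is the paper's route.
\begin{enumerate}
\item Couple the initial classes maximally, with failure probability at most $\eta$. Lift to raw states by sampling representatives conditionally on their classes.
\item At step $t$, if $q_\Pi(X^0_t)=q_\Pi(X^1_t)$, then $(X^0_t,X^1_t)$ is an admissible pair in the supremum defining $\delta_\Pi(K_0,K_1)$. So $X^0_{t+1}\sim K_0(X^0_t)$ and $X^1_{t+1}\sim K_1(X^1_t)$ can be coupled with class-disagreement probability at most $\delta$.
\item After a disagreement, run the chains independently.
\end{enumerate}
The marginals are the two worlds, and the union bound gives decoupling probability at most $\eta+T\delta$. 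On the complement, the observations can be coupled to coincide, because $\pi_t(\cdot\mid x)$ depends only on the class of $x$.

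Each step thus pays only the cross-kernel defect at two possibly different representatives. That is exactly what the supremum over pairs $x\eqobs x'$ controls, and no within-kernel incompatibility is ever charged. Your first step, data processing from quotient paths to transcripts, is correct and can be kept; it is the hybrid decomposition that has to go.
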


\begin{proof}
Couple the two quotient processes optimally at time $0$, failing with probability at most $\eta$. At each step, condition on the coupled quotient states being equal. The transition defect allows a maximal coupling of the next quotient states that fails with probability at most $\delta$. By the union bound, the probability that the quotient paths decouple by time $T$ is at most $\eta+T\delta$. Conditional on no decoupling, all observer outputs have the same conditional law. Therefore the transcript total variation distance is bounded by the decoupling probability.
\end{proof}

This theorem is intentionally elementary. It gives a conservative but transparent bound. Sharper bounds can be obtained by contraction coefficients, mixing, or martingale arguments, but the linear form already exposes the relevant security warning: small local leakage can become non-small over repeated interaction.

\section{Hidden Continuations}

\subsection{Separation between raw and observed fixedness}

\begin{theorem}[Non-injective observers admit hidden continuations]
Let $X$ be a set and let $\pi:X\to Y$ be a deterministic observer that is not injective. Then there exists a deterministic transition $T:X\to X$ and a state $x\in X$ such that $T(x)\ne x$ but $\pi(T(x))=\pi(x)$. Thus raw fixedness and observational fixedness are distinct.
\end{theorem}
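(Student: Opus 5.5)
The argument is a direct construction. Non-injectivity of $\pi$ gives two states $x\ne x'$ in $X$ with $\pi(x)=\pi(x')$. We build $T$ by hand so that it moves $x$ to $x'$ and does nothing else. The plan is to write down this explicit $T$, check the two required properties, and then read off the separation between raw and observational fixedness.

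Concretely, define $T:X\to X$ by $T(x)=x'$ and $T(y)=y$ for every $y\ne x$. This is a well-defined deterministic map because it is specified pointwise on all of $X$, and no measurability or computability constraints are imposed in the statement. Then $T(x)=x'\ne x$, so $x$ is not a raw fixed point of $T$. On the other hand, $\pi(T(x))=\pi(x')=\pi(x)$. A deterministic observer induces point-mass output distributions $\delta_{\pi(\cdot)}$, so the total variation distance between them at $T(x)$ and at $x$ is zero. With $\Pi=\{\pi\}$ this gives $d_\Pi(T(x),x)=0$, that is, $T(x)\eqobs x$. By the Definition of hidden continuation, $x$ therefore has a hidden continuation under $\Pi$.

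The final sentence of the theorem follows from this one example. The set of raw fixed points of $T$ is $X\setminus\{x\}$, while the set of observational fixed points $\{y: T(y)\eqobs y\}$ is all of $X$. These sets differ, so the two notions of fixedness are distinct.

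If a less trivial witness is wanted, the same idea gives a choice of $T$ that fixes nothing in the class $[x]$. One can take $T$ to be the transposition swapping $x$ and $x'$ and equal to the identity elsewhere. This $T$ is a bijection, and it yields hidden continuations at both $x$ and $x'$.

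There is no real obstacle. The only point to be explicit about is the identification of the deterministic observer $\pi$ with its point-mass kernel, so that equality $\pi(T(x))=\pi(x)$ agrees with the relation $\eqobs$ defined through total variation. That is immediate because $\TV(\delta_a,\delta_b)=\ind[a\ne b]$.
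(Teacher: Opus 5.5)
Your construction is exactly the paper's: choose $x\ne x'$ with $\pi(x)=\pi(x')$, set $T(x)=x'$ and $T(u)=u$ elsewhere; your remark identifying a deterministic observer with its point-mass kernel, so that $\pi(T(x))=\pi(x)$ matches the total-variation relation, is a harmless clarification. One small slip in your optional aside: the transposition of $x$ and $x'$ fixes nothing in $\{x,x'\}$, but it still fixes any other members of the class $[x]$, so a map that fixes nothing in the whole class needs a derangement of that class.
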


\begin{proof}
Because $\pi$ is not injective, choose $x\ne x'$ with $\pi(x)=\pi(x')$. Define $T(x)=x'$ and define $T$ arbitrarily on all other states, for example $T(u)=u$. Then $T(x)\ne x$ while the observer output is unchanged.
\end{proof}

\begin{corollary}[Observer kernels have invisible directions]
Let $C\in\R^{p\times n}$ be a linear output map with nontrivial kernel. For every $v\in\ker C$ and every state $x\in\R^n$, the transition $T(x)=x+v$ is hidden from the observer $y=Cx$.
\end{corollary}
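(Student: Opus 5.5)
The corollary follows from the preceding theorem on non-injective observers once we recognize the linear observer $y=Cx$ as a deterministic observer $\pi:\R^n\to\R^p$, $\pi(x)=Cx$. I will not invoke that theorem abstractly, though. The corollary asserts something stronger: \emph{every} $v\in\ker C$ and \emph{every} base state $x$ give a hidden continuation. So I will verify the two defining conditions of a hidden continuation directly, with the translation $T(x)=x+v$ fixed in advance instead of constructed ad hoc.

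First I dispose of the trivial direction. Take $v=0$. Then $T$ is the identity, and $T(x)=x$ is a raw fixed point, not a hidden continuation in the strict sense of the definition, which requires $T(x)\ne x$. I will read the statement for nonzero $v\in\ker C$ and note this explicitly. Such $v$ exist because the kernel is assumed nontrivial. For $v=0$ the observational-fixedness half still holds trivially.

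Second, fix $v\in\ker C\setminus\{0\}$ and arbitrary $x\in\R^n$. Raw nonfixedness is immediate, since $T(x)-x=v\ne0$. Observational fixedness follows from linearity: $C(T(x))=Cx+Cv=Cx$. The observer is deterministic, so its output distribution is a point mass. Equality of outputs therefore gives $d_{\Pi}(T(x),x)=0$ for $\Pi=\{\pi\}$, that is, $T(x)\eqobs x$. This is exactly the definition of a hidden continuation.

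Finally, for the dynamical reading ("invisible directions"), I will note that the argument iterates. By induction, $C(T^k(x))=Cx$ for all $k$, since $T^k(x)=x+kv$. The hidden motion is therefore unbounded along the line $x+\R v$ while the transcript stays constant. This ties the corollary to the LTI observability-kernel discussion later in the paper. There is no genuine obstacle here. The only point requiring care is the $v=0$ degenerate case and stating that the quotient equivalence is taken with respect to the single deterministic observer $y=Cx$.
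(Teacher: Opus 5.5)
Your proof is correct and matches the paper's, whose entire argument is the linearity computation $C(x+v)=Cx+Cv=Cx$. Your caveat that $v=0$ must be excluded, since the definition of hidden continuation requires $T(x)\ne x$, is a valid sharpening that the paper's statement ``for every $v\in\ker C$'' leaves implicit.
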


\begin{proof}
$C(x+v)=Cx+Cv=Cx$.
\end{proof}

\subsection{One-step fixedness is not trace security}

Observational fixedness at one time does not imply security over time. The following finite counterexample is finite and deterministic.

\begin{theorem}[One-step observational fixedness does not imply trace security]
There exist a finite state space $X$, observer $\pi$, transition $T$, and two initial states $x_0,x_1$ such that $\pi(x_0)=\pi(x_1)$ and $\pi(T(x_0))=\pi(T(x_1))$, but the length-three transcripts distinguish the two worlds with advantage one.
\end{theorem}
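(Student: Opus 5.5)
The plan is to give an explicit finite counterexample and check the three displayed properties by direct computation. I will use the smallest state space on which the two worlds can agree on the first two observations and split only at the third.

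Take $X=\{a_0,b_0,c_0,a_1,b_1,c_1\}$, or more simply $X=\{0,1,2,3,4,5\}$. Let $T$ be the deterministic chains
\[
T(0)=2,\quad T(2)=4,\quad T(4)=4,\qquad T(1)=3,\quad T(3)=5,\quad T(5)=5.
\]
Let $\pi:X\to\{\mathsf{u},\mathsf{v},\mathsf{w}\}$ be the deterministic observer with
\[
\pi(0)=\pi(1)=\mathsf{u},\qquad \pi(2)=\pi(3)=\mathsf{u},\qquad \pi(4)=\mathsf{v},\qquad \pi(5)=\mathsf{w}.
\]
Set the initial states to $x_0=0$ and $x_1=1$, with $\Pi=\{\pi\}$ and the observer applied at every step.

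The verification has three steps.

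\begin{enumerate}[leftmargin=2em]
\item One-step fixedness holds: $\pi(x_0)=\pi(x_1)=\mathsf{u}$, and $\pi(T(x_0))=\pi(2)=\mathsf{u}=\pi(3)=\pi(T(x_1))$. This step is immediate.
\item The length-three transcripts $Z_{0:2}=(\pi(X_0),\pi(X_1),\pi(X_2))$ are deterministic:
\[
(\mathsf{u},\mathsf{u},\mathsf{v})\ \text{in world }0,\qquad (\mathsf{u},\mathsf{u},\mathsf{w})\ \text{in world }1.
\]
\item The test $A(Z)=\ind[Z_2=\mathsf{w}]$ therefore has advantage $|0-1|=1$. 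Equivalently, the total variation distance between the two transcript laws, which are distinct point masses, equals $1$.
\end{enumerate}

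The only real obstacle is interpreting the statement's ``length-three'' transcripts correctly. It must mean the horizon $T=2$, i.e.\ $Z_0,Z_1,Z_2$. On that reading, the separating observation is the third one, which lies beyond the two steps covered by the hypotheses. If the intended reading were instead $Z_1,Z_2,Z_3$, I would lengthen each chain by one more state that keeps output $\mathsf{u}$, and the same argument goes through.

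It is also worth noting why the example does not contradict Theorem~\ref{thm:trace-lifting}. The states $2$ and $3$ are $\Pi$-equivalent, but their successors $4$ and $5$ are not. So $T$ is not $\Pi$-compatible, and this incompatibility is exactly what the counterexample exploits.
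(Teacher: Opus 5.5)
Your construction is the paper's counterexample up to relabeling: two chains $a_i\to b_i\to c_i$ with absorbing $c_i$, an observer that merges the first two layers and separates the absorbing states, and initial states $a_0,a_1$. Your reading of ``length-three'' as $Z_0,Z_1,Z_2$ and the advantage-one test on the third output match the paper's argument, and your remark that $T$ fails $\Pi$-compatibility correctly explains why Theorem~\ref{thm:trace-lifting} is not contradicted.
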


\begin{proof}
Let $X=\{a_0,a_1,b_0,b_1,c_0,c_1\}$. Let the observer output $0$ on $a_0,a_1,b_0,b_1$ and output $i$ on $c_i$ for $i\in\{0,1\}$. Define $T(a_i)=b_i$, $T(b_i)=c_i$, and $T(c_i)=c_i$. Let the two initial states be $x_i=a_i$. At times $0$ and $1$, both worlds produce observer output $0$. At time $2$, world $i$ produces output $i$. Therefore a distinguisher that reads the third output identifies the world with probability one. The one-step equality of observations did not control the future continuation.
\end{proof}

\begin{remark}
This theorem is the formal version of a common side-channel failure. A test may show that one operation leaks no immediate signal, while a later operation reveals a state variable that the earlier operation changed silently.
\end{remark}

\subsection{Hidden continuation monoids}

For deterministic dynamics, hidden continuations compose when they remain inside the same observer class.

\begin{definition}[Hidden continuation set]
For a deterministic transition semigroup $\cT$ acting on $X$, define
\[
    \cH_{\Pi}(x)=\{S\in\cT:Sx\eqobs x\}.
\]
The strict hidden continuation set is $\cH_{\Pi}^{\ne}(x)=\{S\in\cH_{\Pi}(x):Sx\ne x\}$.
\end{definition}

\begin{proposition}[Local monoid property]
If $S,R\in\cT$ satisfy $Sx\eqobs x$ and $RSx\eqobs Sx$, then $RS\in\cH_{\Pi}(x)$. In particular, if every transition in a subsemigroup preserves each observer class, then the hidden continuations at $x$ form a submonoid.
\end{proposition}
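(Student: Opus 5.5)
The plan is to prove the first claim by chaining the two equivalences through the intermediate state $Sx$, and then to specialise the chain to get the submonoid statement. The work happens inside $\eqobs$, which is an equivalence relation: for $\eps=0$, $d_\Pi$ is a pseudometric and the triangle inequality $d_\Pi(x,z)\le d_\Pi(x,y)+d_\Pi(y,z)$ gives transitivity.

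For the first claim, I read $RSx$ as $R(Sx)$, the action of the composite $RS\in\cT$, which lies in $\cT$ by the semigroup property. The hypotheses are $d_\Pi(RSx,Sx)=0$ and $d_\Pi(Sx,x)=0$. Adding them through the triangle inequality for $d_\Pi$ gives $d_\Pi(RSx,x)=0$, that is, $RSx\eqobs x$. This is exactly the condition for $RS\in\cH_\Pi(x)$.

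For the ``in particular'' part, let $\cT'\subseteq\cT$ be a subsemigroup in which every transition preserves each observer class, so $Ry\eqobs y$ for all $R\in\cT'$ and all $y\in X$. Set $\cH'=\cH_\Pi(x)\cap\cT'$; by the hypothesis this is all of $\cT'$. Closure under composition follows from the first part: given $S,R\in\cH'$, the hypothesis applied at $y=Sx$ yields $RSx\eqobs Sx$, and we already have $Sx\eqobs x$. The first part then gives $RS\in\cH'$, and $RS\in\cT'$ because $\cT'$ is a subsemigroup. For the identity element, I adjoin $\mathrm{id}_X$ if it is not already in $\cT$. It satisfies $\mathrm{id}_X x=x\eqobs x$ by reflexivity, and composing with it changes nothing, so $\cH'\cup\{\mathrm{id}_X\}$ is a submonoid.

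The only delicate point is this identity element. A transition semigroup need not contain the identity, so the monoid claim has to be read either with the identity adjoined or for a $\cT$ that is assumed to be a monoid. I will state this convention explicitly. Note also that the strict set $\cH^{\ne}_\Pi(x)$ is generally not closed: for example, $R$ could map $Sx$ back to $x$, so that $RSx=x$. The submonoid claim is therefore made for $\cH_\Pi$, not for the strict set.
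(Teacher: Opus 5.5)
Your proof is correct and follows the paper's argument: the first claim is transitivity of $\eqobs$ through the intermediate state $Sx$, and the submonoid claim is closure under composition plus the identity being observationally fixed. Your explicit treatment of the identity (adjoining $\mathrm{id}_X$ when $\cT$ is only a semigroup) and your restriction to hidden continuations lying in the subsemigroup make precise two points that the paper's proof leaves implicit.
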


\begin{proof}
The first statement follows by transitivity of $\eqobs$: $RSx\eqobs Sx\eqobs x$. The submonoid statement follows because the identity is observationally fixed and composition preserves the equivalence class by assumption.
\end{proof}

This algebraic viewpoint is useful because it distinguishes accidental one-step invisibility from a stable invisible mode. A stable invisible mode is dangerous for security monitoring: it can accumulate arbitrary internal change without leaving the quotient.

\section{Composability and Post-Processing}

\subsection{Post-processing closure}

Security definitions not be invalidated by transforming the adversary's view through a weaker channel. This is the data-processing principle in observer form.

\begin{theorem}[Observer post-processing]
Let $\pi:X\to\DeltaS(Y)$ be an observer and let $R:Y\to\DeltaS(W)$ be any Markov kernel. Define $R\circ\pi:X\to\DeltaS(W)$ by
\[
    (R\circ\pi)(w\mid x)=\sum_{y\in Y}R(w\mid y)\pi(y\mid x).
\]
Then
\[
    \TV((R\circ\pi)(\cdot\mid x),(R\circ\pi)(\cdot\mid x'))\le \TV(\pi(\cdot\mid x),\pi(\cdot\mid x')).
\]
Consequently, adding only post-processings of existing observers cannot refine the observer quotient.
\end{theorem}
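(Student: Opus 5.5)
The plan is to prove the data-processing inequality for total variation directly in the finite setting, and then read off the quotient consequence from the definition of $d_\Pi$.

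\textbf{Step 1: expand the difference.} Fix $x,x'$ and write $P=\pi(\cdot\mid x)$, $Q=\pi(\cdot\mid x')$. By the definition of $R\circ\pi$,
\[
(R\circ\pi)(w\mid x)-(R\circ\pi)(w\mid x')=\sum_{y}R(w\mid y)\bigl(P(y)-Q(y)\bigr).
\]

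\textbf{Step 2: take absolute values and sum.} Apply the triangle inequality inside the sum over $y$, then sum over $w$. Exchange the two finite sums and use $\sum_w R(w\mid y)=1$ for every $y$. This gives
\[
\sum_w\Bigl|\sum_y R(w\mid y)(P(y)-Q(y))\Bigr|\le\sum_y|P(y)-Q(y)|.
\]
Halving both sides yields the claimed inequality. Alternatively, I could use the supremum form of $\TV$: for $A\subseteq W$, the function $f(y)=R(A\mid y)$ takes values in $[0,1]$, and $|\sum_y f(y)(P(y)-Q(y))|\le\TV(P,Q)$ because the supremum over $[0,1]$-valued tests equals $\TV$. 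I would include this form as well, since it extends verbatim to measurable spaces.

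\textbf{Step 3: the quotient consequence.} Let $\Pi'=\Pi\cup\{R\circ\pi:\pi\in\Pi,\ R \text{ a Markov kernel}\}$. By Step 2, each added channel has separation at most that of its source observer, so $d_{\Pi'}=d_\Pi$. Hence $x\equiv_{\Pi'}x'$ if and only if $x\eqobs x'$, which means $X/\Pi'=X/\Pi$. This argument applies to any set of post-processings, including composites of several kernels, because a composition of Markov kernels is again a Markov kernel.

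No step is a real obstacle. The only point needing care is justifying the exchange of sums, or, in the general case, the measurability of $y\mapsto R(A\mid y)$. That measurability is part of the definition of a Markov kernel, so in the measurable setting the supremum-over-tests form handles it.
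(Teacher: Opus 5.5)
Your proof is correct and follows the paper's argument: the paper likewise expands the difference, applies the triangle inequality inside the sum over $y$, exchanges the finite sums, uses $\sum_w R(w\mid y)=1$, and halves. Your explicit check that $d_{\Pi'}=d_\Pi$ for the enlarged class, and your supremum-over-tests variant, simply spell out the quotient consequence and the measurable case that the paper leaves implicit.
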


\begin{proof}
For finite spaces,
\[
\begin{aligned}
2\TV(R\pi_x,R\pi_{x'})
&=\sum_w\left|\sum_y R(w\mid y)(\pi(y\mid x)-\pi(y\mid x'))\right|\\
&\le \sum_w\sum_y R(w\mid y)|\pi(y\mid x)-\pi(y\mid x')|\\
&=\sum_y |\pi(y\mid x)-\pi(y\mid x')|.
\end{aligned}
\]
Divide by two.
\end{proof}

\subsection{Sequential composition}

If independent sessions each leak a small amount, repeated interaction can amplify leakage. The quotient language gives the usual hybrid argument.

\begin{theorem}[Hybrid composition bound]
Suppose a protocol is executed $q$ times and session $j$ changes the transcript distribution by at most $\eps_j$ in total variation under observer class $\Pi$, even conditioned on all previous transcript prefixes. Then the total distinguishing advantage over the $q$-session transcript is at most $\sum_{j=1}^{q}\eps_j$.
\end{theorem}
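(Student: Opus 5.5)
The plan is a standard hybrid argument over sessions. I will formalize the hypothesis as a conditional total-variation bound and then telescope it. Write the $q$-session transcript as $Z=(Z^{(1)},\ldots,Z^{(q)})$, where $Z^{(j)}$ is the view produced by session $j$ under the scheduled observers from $\Pi$. Let $P_j(\cdot\mid z^{(<j)})$ and $Q_j(\cdot\mid z^{(<j)})$ denote the conditional laws of $Z^{(j)}$ given the prefix in worlds $0$ and $1$. The hypothesis reads $\TV(P_j(\cdot\mid z^{(<j)}),Q_j(\cdot\mid z^{(<j)}))\le\eps_j$ for every prefix $z^{(<j)}$ of positive probability.

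Next, define the hybrid laws $H_0,\ldots,H_q$. In $H_k$ the first $q-k$ sessions are drawn from the world-$0$ kernels and the remaining $k$ from the world-$1$ kernels. Each kernel is conditioned on the actual prefix, so $H_k$ is the chain rule $P_1\otimes\cdots\otimes P_{q-k}\otimes Q_{q-k+1}\otimes\cdots\otimes Q_q$ with sequential dependence. Then $H_0=\mathsf{Law}_0(Z)$ and $H_q=\mathsf{Law}_1(Z)$. By the triangle inequality for total variation, $\TV(H_0,H_q)\le\sum_k\TV(H_{k-1},H_k)$. Equivalently, the averaging lemma for indexed hybrids applied with an arbitrary distinguisher gives the same telescoping.

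The key step is bounding one hop. The adjacent hybrids $H_{k-1}$ and $H_k$ agree on the prefix law, since both use the world-$0$ kernels up to session $j=q-k+1$. They also share the same continuation kernels $Q_{j+1},\ldots,Q_q$ after session $j$, and differ only in using $P_j$ versus $Q_j$ at session $j$. Conditioning on the common prefix, I would couple $Z^{(j)}$ maximally, so that it fails with probability at most $\eps_j$, and then run the identical continuation kernel on both sides. Equivalently, the map from $(z^{(<j)},z^{(j)})$ to the full transcript is a Markov kernel applied to both laws, so the Observer post-processing theorem (data processing) shows that the continuation cannot increase the TV. Averaging over the prefix then gives $\TV(H_{k-1},H_k)\le\E_{z^{(<j)}}\TV(P_j,Q_j)\le\eps_j$.

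Summing over the hops gives $\TV(\mathsf{Law}_0(Z),\mathsf{Law}_1(Z))\le\sum_j\eps_j$. The advantage of any test, adaptive or not, is at most this total variation. The main obstacle is ensuring the hybrids are well defined when the continuation kernels depend on the prefix, including adaptive observer choices. I would handle this with the Scheduling interface lemma: observer choices are measurable functions of the public prefix, so every kernel is a function of the transcript prefix alone, and swapping the session-$j$ law leaves later kernels unchanged as functions of the prefix.
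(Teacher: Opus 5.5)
Your proposal is correct and follows the paper's proof: a session-by-session hybrid $H_0,\ldots,H_q$, the triangle inequality for total variation, a per-hop bound of $\eps_j$ from the conditional hypothesis, and the fact that any test's advantage is at most the total variation. The differences are cosmetic. You put the world-$1$ sessions at the end rather than the beginning, and you spell out the one-hop bound (shared prefix, maximal coupling or data processing through the common continuation kernel, averaging over the prefix), which the paper compresses into ``by assumption''.
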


\begin{proof}
Construct hybrids $H_0,\ldots,H_q$, where $H_j$ uses world $1$ for the first $j$ sessions and world $0$ for the remaining sessions. By assumption, adjacent hybrids differ by at most $\eps_j$ in total variation. The triangle inequality gives
\[
    \TV(H_0,H_q)\le\sum_{j=1}^{q}\TV(H_{j-1},H_j)\le\sum_{j=1}^{q}\eps_j.
\]
The distinguishing advantage of any test is bounded by total variation.
\end{proof}

\begin{remark}
The theorem identifies the quantities controlling composition: the conditional leakage of each continuation given the transcript already produced.
\end{remark}

\subsection{Observer refinement monotonicity}

\begin{proposition}[Refinement monotonicity]
If $\Pi\subseteq\Pi'$, then $d_{\Pi}(x,x')\le d_{\Pi'}(x,x')$ for all $x,x'$. Hence $x\equiv_{\Pi'}x'$ implies $x\equiv_{\Pi}x'$, and $X/\Pi'$ is a refinement of $X/\Pi$.
\end{proposition}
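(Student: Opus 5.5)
The argument is a direct comparison of suprema, and I would carry it out in three short steps.

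First, I would prove the inequality of pseudometrics. Fix states $x,x'\in X$. The set of numbers $\{\TV(\pi(\cdot\mid x),\pi(\cdot\mid x')):\pi\in\Pi\}$ is contained in the corresponding set indexed by $\pi\in\Pi'$, because $\Pi\subseteq\Pi'$. The supremum of a subset of $[0,1]$ is at most the supremum of the superset, so $d_{\Pi}(x,x')\le d_{\Pi'}(x,x')$. If $\Pi$ is empty, its supremum is conventionally $0$, and the inequality still holds since total variation is nonnegative.

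Second, I would obtain the implication between the equivalences. If $x\equiv_{\Pi'}x'$, then $d_{\Pi'}(x,x')=0$ by definition. The first step then gives $0\le d_{\Pi}(x,x')\le 0$, so $x\eqobs x'$. The same argument works at any tolerance: $x\equiv_{\Pi',\eps}x'$ implies $x\eqep x'$.

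Third, I would deduce the refinement of the quotients. Because $\equiv_{\Pi'}\subseteq\eqobs$ as relations on $X$, each $\equiv_{\Pi'}$-class is contained in a unique $\eqobs$-class. The map $X/\Pi'\to X/\Pi$ given by $q_{\Pi'}(x)\mapsto q_{\Pi}(x)$ is therefore well defined. It is surjective because $q_{\Pi}$ is, and it satisfies $q_{\Pi}$ equals this map composed with $q_{\Pi'}$. This is exactly the statement that $X/\Pi'$ refines $X/\Pi$.

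None of these steps is a genuine obstacle. The only point requiring care is the convention for an empty observer class and the well-definedness of the factor map, both of which are handled above.
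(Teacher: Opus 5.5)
Your proposal is correct and matches the paper's proof, which also compares the suprema over $\Pi\subseteq\Pi'$ and then says the equivalence and quotient statements follow immediately. Your explicit factor map $q_{\Pi'}(x)\mapsto q_{\Pi}(x)$ and your remark on the empty-class convention just spell out details the paper leaves implicit.
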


\begin{proof}
The supremum over the smaller class cannot exceed the supremum over the larger class. The equivalence and quotient statements follow immediately.
\end{proof}

This elementary result is the mathematical reason that adding a side channel can invalidate a proof. A theorem proved under $\Pi$ does not automatically survive under $\Pi'$.

\section{Operational Side-Channel Observer Classes}

Observer refinement is the security operation behind side-channel analysis. A transcript observer induces a coarse quotient; timing, cache, power, electromagnetic, and profiled observers refine it by adding measurement channels. The refinement is measurable once the leakage variables, calibration protocol, scheduler regime, and confidence envelope are fixed.

\begin{definition}[Side-channel observer families]
For an implementation state $x$ and public transcript $T(x)$, define
\begin{align*}
\Pi_{tr} &= \{x\mapsto T(x)\},\\
\Pi_{time} &= \{x\mapsto (T(x),\tau_j(x)+N_j): j\in J_t\},\\
\Pi_{cache} &= \{x\mapsto (T(x),\mathsf{set}_j(x)\oplus N_j): j\in J_c\},\\
\Pi_{power} &= \{x\mapsto (T(x),\langle w_j,p(x)\rangle+N_j): j\in J_p\},\\
\Pi_{em} &= \{x\mapsto (T(x),\langle u_j,e(x)\rangle+N_j): j\in J_e\},\\
\Pi_{prof} &= \{x\mapsto (T(x),\psi_\theta(\ell(x))+N_\theta): \theta\in\Theta\}.
\end{align*}
The functions $\tau,\mathsf{set},p,e$ denote timing, cache-set, power, and electromagnetic features, while $\psi_\theta$ is a profiled statistic trained on calibration traces. If $\Pi\subseteq\Pi'$, then $d_\Pi\le d_{\Pi'}$ and the quotient under $\Pi'$ refines the quotient under $\Pi$.
\end{definition}

\begin{definition}[Profiled boundary calibration]
A calibration instance consists of transcript-equivalent pairs $(x,x')$, a matched continuation schedule $\Gamma$, calibration traces independent of test traces, a feature map $f_\theta$, and a statistic
\[
B_{prof}(x,x')=d_{stat}\bigl(f_\theta(\ell_\Gamma(x)),f_\theta(\ell_\Gamma(x'))\bigr).
\]
The statistic may be total variation, Mahalanobis distance, classifier advantage, or a likelihood-ratio score. In the scalar sub-Gaussian mean-gap model with proxy variance $\sigma^2$ and model mismatch $\Delta_{model}$,
\[
\eta_{prof}(n,\delta)=\sqrt{2\sigma^2\log(2/\delta)/n}+\Delta_{model}
\]
upper-bounds the estimation error for one pair. A family of $m$ tested pairs uses the simultaneous envelope
\[
\eta_{m}(n,\delta)=\sqrt{2\sigma^2\log(2m/\delta)/n}+\Delta_{model}.
\]
\end{definition}

\begin{definition}[Leakage regimes]
Leakage is stationary when the conditional law of the feature depends only on the representative and scheduled observer. It is drifting when an unobserved parameter $\nu_t$ changes with bounded variation $\sum_t\|\nu_{t+1}-\nu_t\|\le D$. It is scheduler-induced when the environment's session order, cache warmup, nonce schedule, throttling, or oracle pattern changes the leakage law. The validation envelope is $\eta_m$ in the stationary regime, $\eta_m+D$ in the drifting regime, and the supremum over the admissible schedule family in the scheduler-induced regime.
\end{definition}

\begin{figure}[!htbp]
\centering
\includegraphics[width=0.86\linewidth]{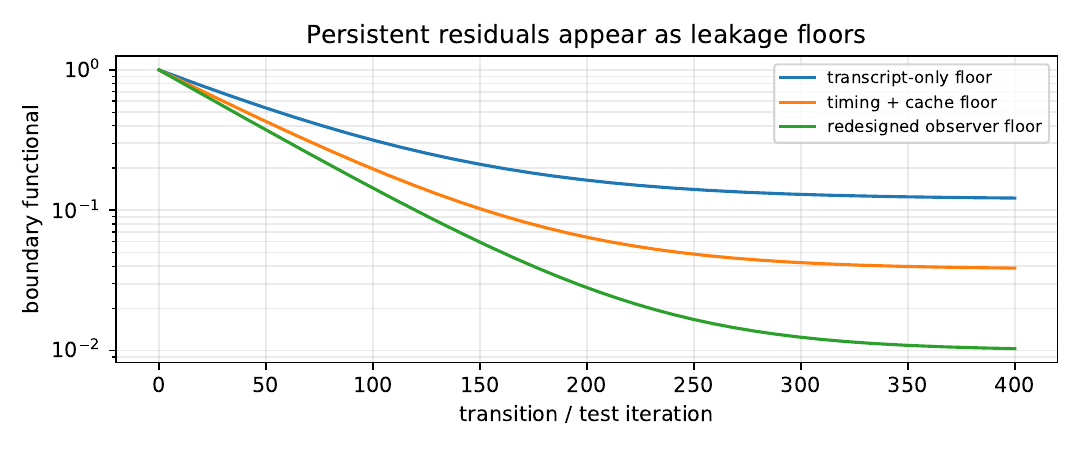}
\caption{Leakage floors under observer refinement. Persistent statistical residuals keep the scalar functional above zero, while implementation redesign or observer redesign lowers the resulting advantage bound.}
\label{fig:side-floor}
\end{figure}
\FloatBarrier

\begin{theorem}[Finite-state leakage refinement]
Let $\mathcal P_{tr}$ be the transcript quotient partition of a finite implementation state machine. Let $\mathcal N$ be a set of representative pairs inside blocks of $\mathcal P_{tr}$. For each pair, run $n$ matched continuations under the declared schedule family and compute $\widehat B_{prof}(x,x')$. Under the calibration assumptions above, with probability at least $1-\delta$,
\[
|\widehat B_{prof}(x,x')-B_{prof}(x,x')|\le \eta_m(n,\delta)
\]
for all $(x,x')\in\mathcal N$. The refinement rule
\[
\widehat B_{prof}(x,x')-\eta_m(n,\delta)>r^\star
\]
splits exactly those tested pairs whose lower confidence bound exceeds the residual floor. Every unsplit tested pair has profiled pseudometric at most $r^\star+2\eta_m(n,\delta)$ on the same event.
\end{theorem}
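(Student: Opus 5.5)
The plan is to prove the three claims in order: simultaneous concentration, then correctness of the split rule, then the bound for unsplit pairs. All three rest on one good event $\mathcal E$ on which every tested statistic is close to its target.

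\textbf{Concentration.} For each pair $(x,x')\in\mathcal N$, write $\widehat B_{prof}(x,x')$ as the plug-in statistic computed from $n$ matched continuations. The calibration traces are independent of the test traces, so $f_\theta$ is fixed when we condition on calibration. By the calibration assumptions in the scalar sub-Gaussian mean-gap model, the sampling error satisfies
\[
\Pr\bigl[|\widehat B_{prof}(x,x')-\E\widehat B_{prof}(x,x')|>\sqrt{2\sigma^2\log(2m/\delta)/n}\bigr]\le \delta/m .
\]
The bias between $\E\widehat B_{prof}$ and $B_{prof}$ is at most $\Delta_{model}$. A union bound over the $m=|\mathcal N|$ pairs gives an event $\mathcal E$ with $\Pr[\mathcal E]\ge 1-\delta$, on which $|\widehat B_{prof}-B_{prof}|\le\eta_m(n,\delta)$ simultaneously for all tested pairs. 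This is exactly the simultaneous envelope of the profiled-calibration definition. If the regime is drifting or scheduler-induced, I would substitute the corresponding envelope from the leakage-regime definition. The main obstacle is this step, since it is where the modelling assumptions do the work. One must check that the matched continuation schedule $\Gamma$ is fixed, or prefix-measurable by the scheduling lemma, before sampling. Otherwise the adaptivity of the schedule would break the independence needed for Hoeffding-type concentration. I would handle this by conditioning on the schedule and on the calibration data, and only then applying the sub-Gaussian tail bound.

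\textbf{Split rule.} The claim that the rule ``splits exactly those tested pairs whose lower confidence bound exceeds the residual floor'' is a restatement of the rule. On $\mathcal E$, $\widehat B_{prof}-\eta_m$ is a valid lower confidence bound for $B_{prof}$. Hence every split pair satisfies $B_{prof}(x,x')>r^\star$, so no pair is split spuriously on $\mathcal E$.

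\textbf{Unsplit pairs.} For an unsplit pair we have $\widehat B_{prof}(x,x')\le r^\star+\eta_m$. On $\mathcal E$ this gives
\[
B_{prof}(x,x')\le \widehat B_{prof}(x,x')+\eta_m\le r^\star+2\eta_m(n,\delta).
\]
The profiled pseudometric is $B_{prof}$ for the stated statistic, so this is the required bound. All three claims then hold on the same event $\mathcal E$ of probability at least $1-\delta$.
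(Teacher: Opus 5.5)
Your proposal is correct and follows the same route as the paper. Both arguments apply per-pair sub-Gaussian concentration at level $\delta/m$ and union-bound over $\mathcal N$ to get the simultaneous event; the lower confidence bound then separates split pairs above $r^\star$, and unsplit pairs satisfy $B_{prof}\le\widehat B_{prof}+\eta_m\le r^\star+2\eta_m$. Your extra step of conditioning on the calibration data and the fixed schedule $\Gamma$ before applying the tail bound spells out independence requirements that the paper leaves implicit in ``the calibration assumptions above.''
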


\begin{proof}
Apply the sub-Gaussian concentration inequality to each pair and union-bound over $\mathcal N$. The lower-confidence split rule gives separation above $r^\star$ for split pairs. If a pair is not split, its empirical score is at most $r^\star+\eta_m$, and the simultaneous event gives a true score at most $r^\star+2\eta_m$.
\end{proof}

\begin{proposition}[Regression criterion for implementation revisions]
Let $\mathcal R_k$ be an implementation revision and let $\widehat B_k$ be the profiled boundary matrix over the same state-pair set, schedule family, and observer family. Revision $k+1$ preserves the tested observer quotient at floor $r^\star$ when
\[
\max_{(x,x')\in\mathcal N}\bigl(\widehat B_{k+1}(x,x')-\eta_m(n,\delta)\bigr)_+
\le r^\star
\]
for all pairs that were unsplit in revision $k$, and when every newly split pair is recorded as a refinement of the quotient partition before the compatibility and emulation bounds are reused.
\end{proposition}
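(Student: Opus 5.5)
The plan is to derive the regression criterion from the simultaneous confidence event of the Finite-State Leakage Refinement theorem, applied to revision $k+1$ on the same pair set $\mathcal N$, schedule family, and observer family. Reusing that theorem is legitimate because those three objects are unchanged. It gives the event $E_{k+1}$, of probability at least $1-\delta$, on which
\[
|\widehat B_{k+1}(x,x')-B_{k+1}(x,x')|\le\eta_m(n,\delta)
\]
holds for every $(x,x')\in\mathcal N$. If revision $k$ is to be compared as well, I would intersect with the analogous event $E_k$. By a union bound this costs confidence $1-2\delta$. Alternatively, I would run revision $k+1$ with fresh calibration traces at level $\delta$ and condition on the revision-$k$ partition as fixed data.

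Next I would fix an unsplit pair from revision $k$. The displayed hypothesis gives $\widehat B_{k+1}(x,x')-\eta_m\le r^\star$. This is exactly the negation of the split rule $\widehat B-\eta_m>r^\star$, so the pair remains unsplit in revision $k+1$. On $E_{k+1}$ its true profiled pseudometric is then at most $r^\star+2\eta_m(n,\delta)$. So every block of the revision-$k$ tested partition survives, with the same tolerance that the refinement theorem attaches to unsplit pairs. This is what ``preserves the tested observer quotient at floor $r^\star$'' should mean, and I would state that interpretation explicitly.

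The second clause covers pairs that become split. For such a pair, the lower confidence bound exceeds $r^\star$, so on $E_{k+1}$ we have $B_{k+1}>r^\star$ and the pair is genuinely separated. Refinement monotonicity then says the quotient must be refined: adding this distinction corresponds to enlarging the effective observer class. The compatibility hypothesis (quotient-compatible kernels and the kernel defect) and the emulation theorem are stated relative to a fixed partition, so the defects $\delta_{\mathrm{obs},t}$ and $\delta_{K,t}$ are valid only for the recorded, refined partition. This justifies the requirement that new splits be recorded before the bounds are reused.

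I expect the main obstacle to be interpretive rather than computational. The statement takes the maximum over all of $\mathcal N$ but restricts the condition to previously unsplit pairs, and it leaves the confidence bookkeeping across revisions implicit. I would resolve this by restricting the maximum to the unsplit pairs, and by stating the result on the event $E_{k+1}$ (or on $E_k\cap E_{k+1}$) with the corresponding confidence, so that the conclusion is a high-probability statement matching the finite-state refinement theorem.
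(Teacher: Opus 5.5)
Your proposal is correct and follows the paper's argument. The paper's proof notes only that the first condition preserves the unsplit relation on the simultaneous concentration event, and that the recording clause blocks reuse of a proof over a coarser partition once a distinction has become measurable. Your version makes the same points explicit: the hypothesis is the negation of the lower-confidence split rule, unsplit pairs have true score at most $r^\star+2\eta_m(n,\delta)$ on $E_{k+1}$, and you state the confidence bookkeeping and the restriction of the maximum to previously unsplit pairs, both of which the paper leaves implicit.
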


\begin{proof}
The first condition preserves the unsplit relation on the simultaneous concentration event. The second condition prevents reuse of a proof over a coarser partition after the observer class has made a distinction measurable.
\end{proof}

\subsection{Validation pipeline}

A finite-state validation run is specified by the following data.
\begin{enumerate}[leftmargin=2em]
\item A state-pair generator enumerates or samples $(x,x')$ with $T(x)=T(x')$ under $\Pi_{tr}$.
\item A matched continuation schedule $\Gamma$ fixes session identifiers, nonce refreshes, cache warmup, adversarial query order, and post-processing.
\item Timing, cache, power, electromagnetic, or profiled feature traces are acquired under the chosen leakage regime.
\item A statistic $\widehat B_{prof}$ and simultaneous envelope $\eta_m(n,\delta)$ are computed.
\item Blocks of the transcript quotient are split according to the lower-confidence rule.
\item Quotient compatibility and the scalar boundary recurrence are recomputed on the refined partition.
\end{enumerate}
In pseudocode, the finite-state test is
\[
\begin{array}{l}
\textsf{Pairs}\leftarrow\{(x,x'):T(x)=T(x')\};\quad
\textsf{Traces}\leftarrow\textsf{RunMatched}(\textsf{Pairs},\Gamma,n);\\
\widehat B\leftarrow\textsf{ProfiledStatistic}(\textsf{Traces});\quad
\eta\leftarrow\textsf{Envelope}(n,|\textsf{Pairs}|,\delta);\\
\textsf{Split}\leftarrow\{(x,x'):\widehat B(x,x')-\eta>r^\star\};\quad
\mathcal P'\leftarrow\textsf{Refine}(\mathcal P_{tr},\textsf{Split}).
\end{array}
\]

\begin{example}[Finite-state side-channel case study]
Consider two implementation states $s_1,s_2$ that emit the same transcript symbol and later return to the same public control state. A transcript observer has $s_1\equiv_{\Pi_{tr}}s_2$. A profiled power observer returns
\[
L_j=\alpha\operatorname{HW}(S_j)+N_j,
\qquad N_j\sim\mathcal N(0,\sigma^2).
\]
With $n$ matched traces per representative, the simultaneous lower bound for the mean displacement is
\[
\underline\Delta=\left(|\widehat\mu_1-\widehat\mu_2|-2\sigma\sqrt{2\log(4m/\delta)/n}\right)_+.
\]
The profiled lower functional is $\underline B_{prof}=\underline\Delta^2/(2\sigma^2)$. If $\underline B_{prof}>r^\star$, the quotient block splits under $\Pi_{tr}\cup\Pi_{prof}$. If $\underline B_{prof}\le r^\star$, the tested continuation remains inside the declared leakage floor for that observer and schedule family.
\end{example}

\section{Control-Theoretic Interpretation}

\subsection{Linear observability as equality reflection}

Consider a discrete-time linear system
\[
    x_{t+1}=Ax_t,\qquad y_t=Cx_t,
\]
with $x_t\in\R^n$ and $y_t\in\R^p$. The finite-horizon observation map is
\[
    \cO_T x = \begin{bmatrix} Cx \\ CAx \\ \vdots \\ CA^{T}x\end{bmatrix}.
\]
Two initial states are observer-equivalent over horizon $T$ when $\cO_Tx=\cO_Tx'$.

\begin{definition}[Equality-reflecting observer]
An observer map $\cO:X\to Y$ is equality-reflecting when $\cO(x)=\cO(x')$ implies $x=x'$.
\end{definition}

\begin{theorem}[Observability equals equality reflection]
For the linear system above, the horizon-$T$ observer $\cO_T$ is equality-reflecting if and only if
\[
    \operatorname{rank}\begin{bmatrix} C \\ CA \\ \vdots \\ CA^{T}\end{bmatrix}=n.
\]
The hidden continuation directions are exactly $\ker \cO_T$.
\end{theorem}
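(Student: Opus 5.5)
The plan is to reduce both claims to linear algebra of the stacked observability matrix $\cO_T$. First I will record that the horizon-$T$ observer is linear: $\cO_T x=(Cx,CAx,\ldots,CA^Tx)$ is left multiplication by the block matrix $\begin{bmatrix} C\\ CA\\ \vdots\\ CA^T\end{bmatrix}\in\R^{p(T+1)\times n}$. Consequently $\cO_Tx=\cO_Tx'$ holds exactly when $\cO_T(x-x')=0$, that is, when $x-x'\in\ker\cO_T$. This converts equality reflection into a statement about the kernel.

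For the equivalence, I will argue in two directions. If $\ker\cO_T=\{0\}$, then $\cO_Tx=\cO_Tx'$ forces $x-x'=0$, so the observer is equality-reflecting. Conversely, equality reflection applied with $x'=0$ shows that $\cO_Tx=0$ implies $x=0$, so the kernel is trivial. The rank--nullity theorem then gives $\operatorname{rank}\cO_T=n-\dim\ker\cO_T$. Hence a trivial kernel is equivalent to rank $n$, which is the displayed condition.

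For the hidden-continuation statement, I will use the observer-equivalence convention fixed just before the theorem: two states are equivalent when $\cO_Tx=\cO_Tx'$. By the first step, the states equivalent to $x$ are exactly the coset $x+\ker\cO_T$. A displacement $v$ therefore produces an unobserved move $x\mapsto x+v$ (a hidden continuation in the sense of the earlier definition, with $\Pi$ the single observer $\cO_T$) precisely when $v\in\ker\cO_T$. It is a strict hidden continuation when in addition $v\neq 0$. This is the horizon-$T$ analogue of the corollary on observer kernels, with $C$ replaced by the stacked map.

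I expect no genuine obstacle. The one point needing care is interpretive: "hidden continuation directions" must be read as the displacements $v$ with $x+v\equiv x$. With that reading the claim is immediate from the coset description. I would also remark that $\ker\cO_T$ is $A$-invariant once $T\ge n-1$, by Cayley--Hamilton. This is not needed for the equivalence, but it shows these directions form a stable invisible mode in the sense of the local monoid proposition.
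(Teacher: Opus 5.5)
Your proof is correct and matches the paper's argument: linearity reduces equality reflection to $\ker\cO_T=\{0\}$, rank--nullity turns that into the rank-$n$ condition, and hidden continuation directions are identified with kernel vectors because $\cO_T(x+v)=\cO_T x$ exactly when $v\in\ker\cO_T$. Your Cayley--Hamilton remark on $A$-invariance for $T\ge n-1$ is also correct, but the theorem does not need it.
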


\begin{proof}
The map $\cO_T$ is linear. It is equality-reflecting if and only if $\cO_Tx=\cO_Tx'$ implies $x-x'=0$, which is equivalent to $\ker\cO_T=\{0\}$. For a linear map from $\R^n$, this is equivalent to rank $n$. If $v\in\ker\cO_T$, then $x$ and $x+v$ produce the same output trajectory over the horizon, so $v$ is a hidden continuation direction. Conversely, every hidden direction satisfies $\cO_Tv=0$.
\end{proof}

\subsection{Stealthy attacks}

Let the attacked system be
\[
    x_{t+1}=Ax_t+Bu_t+Ea_t,\qquad y_t=Cx_t,
\]
where $a_t$ is an adversarial input. A stealthy attack is an input sequence that changes the internal state or cost while preserving the observation transcript.

\begin{proposition}[Finite-horizon stealth condition]
Assume the nominal and attacked systems start from the same state. The attack sequence $a_{0:T-1}$ is output-stealthy over horizon $T$ if and only if
\[
    C\sum_{k=0}^{t-1}A^{t-1-k}Ea_k=0\qquad \text{for all }t=1,\ldots,T.
\]
\end{proposition}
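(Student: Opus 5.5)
The plan is to compute the state trajectories of the nominal and attacked systems explicitly and compare their outputs, since stealthiness means exactly that the two output sequences coincide on the horizon. Write $x_t$ for the nominal state, driven by $x_{t+1}=Ax_t+Bu_t$, and $x_t^a$ for the attacked state, driven by $x^a_{t+1}=Ax^a_t+Bu_t+Ea_t$. Both are driven by the same control sequence $u_{0:T-1}$ and start from $x_0^a=x_0$.

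First, set $e_t=x^a_t-x_t$. By linearity, $e_0=0$ and $e_{t+1}=Ae_t+Ea_t$, because the $Bu_t$ terms cancel. A short induction on $t$ then gives the closed form
\[
e_t=\sum_{k=0}^{t-1}A^{t-1-k}Ea_k,\qquad t\ge1.
\]
The inductive step is $Ae_t+Ea_t=\sum_{k=0}^{t-1}A^{t-k}Ea_k+Ea_t$, which is the same formula at index $t+1$.

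Second, the output discrepancy is $y^a_t-y_t=Ce_t$. At $t=0$ it vanishes automatically because $e_0=0$. Output-stealth over horizon $T$ means $y^a_t=y_t$ for every time index in the horizon. I read this as $t=0,\dots,T$, so that the observation window matches $\cO_T$; the $t=0$ condition is trivial, which is why the displayed family runs only over $t=1,\dots,T$. Both directions of the equivalence then follow at once: stealth holds if and only if $Ce_t=0$ for $t=1,\dots,T$, and substituting the closed form for $e_t$ gives exactly the displayed condition.

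I do not expect a real obstacle here. The only point needing care is the convention for the horizon: whether the output $y_T$ is included, and whether the condition is meant to hold for every initial state and control input. Because the discrepancy $Ce_t$ is independent of both $x_0$ and $u$, the condition is uniform in them, and I will say so explicitly. I will also note the connection to the earlier theorem on observability: in the quotient language, the attack keeps the deviation $e_t$ inside $\ker C$ at every step up to $T$. This makes the stealthy attack a hidden continuation in the sense of the earlier corollary on invisible directions of observer kernels.
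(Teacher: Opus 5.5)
Your proposal is correct and matches the paper's proof: you subtract the nominal trajectory, solve the deviation recursion $e_{t+1}=Ae_t+Ea_t$ with $e_0=0$ in closed form, and note that stealth is exactly $Ce_t=0$ on the horizon. Your remarks that the $t=0$ case is trivial and that $Ce_t$ does not depend on $x_0$ or $u$ are clarifications the paper leaves implicit.
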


\begin{proof}
Subtract the nominal trajectory from the attacked trajectory. The deviation $e_t$ satisfies $e_0=0$ and $e_{t+1}=Ae_t+Ea_t$, hence
\[
    e_t=\sum_{k=0}^{t-1}A^{t-1-k}Ea_k.
\]
The output deviation is $Ce_t$. Stealth is exactly the condition $Ce_t=0$ for all $t$.
\end{proof}

This proposition is the control version of a side channel. The attacker moves through the kernel of the observer. A security proof that ignores the relevant sensor map cannot detect the attack.

\section{A Worked LTI Sensor-Design Template}
\label{sec:lti-example}

Let
\[
A=\begin{bmatrix}1&1&0\\0&1&0\\0&0&0.8\end{bmatrix},\qquad
C_0=\begin{bmatrix}1&0&0\end{bmatrix},
\]
and take horizon $T=3$. The observability matrix is
\[
\mathcal O_T(C_0)=\begin{bmatrix}C_0\\C_0A\\C_0A^2\end{bmatrix}
=\begin{bmatrix}1&0&0\\1&1&0\\1&2&0\end{bmatrix},
\]
so $\rank\mathcal O_T(C_0)=2$ and $\Ker\mathcal O_T(C_0)=\operatorname{span}\{e_3\}$. The third coordinate is a hidden continuation direction. Adding the row $c_3=[0\;0\;1]$ gives
\[
C_1=\begin{bmatrix}1&0&0\\0&0&1\end{bmatrix},
\qquad \Ker\mathcal O_T(C_1)=\{0\}.
\]
The quotient interpretation is direct: $C_0$ identifies all states that differ by $e_3$, while $C_1$ separates that direction.

\begin{figure}[!htbp]
\centering
\includegraphics[width=0.82\linewidth]{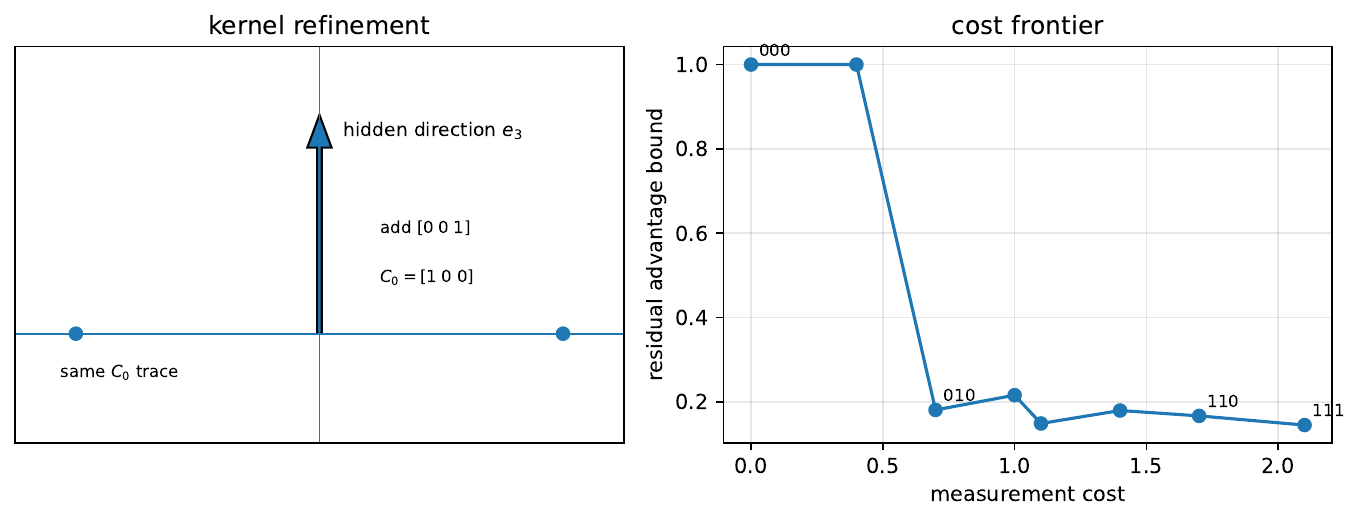}
\caption{LTI observer redesign and cost frontier. The added row removes the hidden-continuation kernel; the relaxed sensor-design audit records the corresponding reduction in residual advantage bound as measurement cost increases.}
\label{fig:lti-design}
\end{figure}
\FloatBarrier

For Gaussian output observations $Y_{0:T}=\mathcal O_T(C)x+N$, $N\sim\mathcal N(0,\Sigma_N)$, define
\[
    B_C(x,x')=\|\Sigma_N^{-1/2}\mathcal O_T(C)(x-x')\|.
\]
The total variation between the two Gaussian output laws is bounded by a monotone function of $B_C$. Hence adding a sensor row changes a control-theoretic quantity, the observability kernel, and a security quantity, the admissible observer pseudometric, in the same step.

\begin{proposition}[Sensor row improvement]
If a candidate row $c$ satisfies $cA^t v\ne0$ for some $v\in\Ker\mathcal O_T(C)$ and some $0\le t\le T$, then adding $c$ strictly reduces $\Ker\mathcal O_T(C)$. Equivalently, the observer partition strictly refines on the span of the tested continuation direction.
\end{proposition}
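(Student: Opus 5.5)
The plan is a short linear-algebra argument on observability matrices, followed by a translation into the quotient language. Write $C'=\begin{bmatrix}C\\ c\end{bmatrix}$ for the augmented output map. After permuting rows, the horizon-$T$ observability matrix $\mathcal O_T(C')$ consists of all rows of $\mathcal O_T(C)$ together with the additional rows $c, cA, \ldots, cA^T$. Hence
\[
\Ker\mathcal O_T(C')=\Ker\mathcal O_T(C)\cap\bigcap_{t=0}^{T}\Ker(cA^t)\subseteq\Ker\mathcal O_T(C).
\]
This is the linear instance of the refinement monotonicity proposition: enlarging the set of sensor rows enlarges the observer family, so the kernel can only shrink.

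Strictness follows from the hypothesis. The tested direction $v$ lies in $\Ker\mathcal O_T(C)$, and $cA^tv\neq0$ for some $0\le t\le T$. Therefore $\mathcal O_T(C')v\neq0$, so $v\notin\Ker\mathcal O_T(C')$, and the inclusion above is proper.

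The dimension drop is exactly one. The restriction of the linear functional $w\mapsto(cA^tw)_{t\le T}$ to $\Ker\mathcal O_T(C)$ is a nonzero map. Its kernel inside $\Ker\mathcal O_T(C)$ is $\Ker\mathcal O_T(C')$. Because $c$ is a single row, this map has rank at least one, which already gives the strict reduction; no further computation is needed.

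For the ``equivalently'' clause, we use the observability-equals-equality-reflection theorem. Two states $x,x'$ are identified by the horizon-$T$ observer exactly when $x-x'\in\Ker\mathcal O_T(\cdot)$. Under $C$, the states $x$ and $x+\alpha v$ share a class for every $\alpha$. Under $C'$, for every $\alpha\neq0$ they are separated, since $\mathcal O_T(C')(\alpha v)=\alpha\,\mathcal O_T(C')v\neq0$. So every block of $X/\Pi_C$ meeting the line $x+\operatorname{span}\{v\}$ splits, while every $C'$-block is still contained in a $C$-block by the kernel inclusion. That is precisely a strict refinement on the span of the tested direction. The Gaussian quantity $B_{C'}(x,x+\alpha v)>0$ also yields a positive pseudometric gap in the security language.

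The main obstacle is only in stating the ``equivalently'' clause precisely, since ``refines on the span'' is informal. I would interpret it as: the affine lines through $v$ no longer lie in single blocks. I would then prove both directions through the kernel characterization rather than attempt a literal partition-level equivalence.
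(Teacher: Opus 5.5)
Your strictness argument is correct and is essentially the paper's. In both, $\Ker\mathcal O_T(C')=\Ker\mathcal O_T(C)\cap\bigcap_{t\le T}\Ker(cA^t)$, and the tested $v$ lies in the old kernel but not in this intersection. Your coset treatment of the ``equivalently'' clause goes a little beyond the paper, which does not argue that clause separately, and it is fine.

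The third paragraph should be deleted, because its headline claim is false. Adding one sensor row adds the $T+1$ rows $c,cA,\ldots,cA^T$ to the observability matrix. So the restricted map $w\mapsto(cA^tw)_{t\le T}$ on $\Ker\mathcal O_T(C)$ takes values in $\mathbb R^{T+1}$, and its rank is bounded only by $\min\{T+1,\dim\Ker\mathcal O_T(C)\}$. The kernel can therefore drop by more than one dimension.

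For a counterexample, take the matrix $A$ of the worked LTI example and $C=[0\;0\;1]$. Then $CA^t=[0\;0\;0.8^t]$ and $\Ker\mathcal O_T(C)=\operatorname{span}\{e_1,e_2\}$. Adding $c=[1\;0\;0]$ gives $cA^t=[1\;t\;0]$. For $T\ge1$ the rows $c$ and $cA$ are independent on that kernel, so $\Ker\mathcal O_T(C')=\{0\}$, a drop of two.

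Your justification in that paragraph, ``because $c$ is a single row, this map has rank at least one,'' is also a non sequitur. The rank is at least one because the restricted map sends $v$ to a nonzero vector, which is exactly what your second paragraph already used. Strictness needs only that lower bound, so removing the paragraph leaves a complete proof.
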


\begin{proof}
After adding $c$, a hidden vector lies in the old kernel and in the kernel of each new row $cA^t$. The displayed condition excludes at least one old hidden direction, so the intersection kernel is a strict subspace.
\end{proof}

Let $\bar C_1,\ldots,\bar C_m$ be candidate sensor rows and let $s\in[0,1]^m$ be a relaxed sensor-selection vector. Define the affine information matrix
\[
G(s)=\theta I+\sum_{i=1}^m s_iG_i,
\qquad
G_i=\mathcal O_T(\bar C_i)^\top\mathcal O_T(\bar C_i)\succeq0,
\]
with $\theta>0$. A reusable observer-design relaxation is
\[
\begin{aligned}
\min_{s\in[0,1]^m}
&\quad -\alpha\log\det G(s)+\beta\operatorname{tr}\Sigma_e(s)+\gamma c^\top s\\
\text{subject to}
&\quad c^\top s\le C_{max}.
\end{aligned}
\]
The map $s\mapsto\log\det G(s)$ is concave on the positive-definite cone because $G(s)$ is affine and positive definite. Therefore $-\log\det G(s)$ is convex. When $\operatorname{tr}\Sigma_e(s)$ is convex in the same information parametrization and the cost is linear, the relaxation is a convex program. For the original binary kernel-dimension problem, the log-det term is a convex surrogate for removing poorly observed directions, not an exact kernel-dimension objective.

\begin{theorem}[KKT stationarity for relaxed observer design]
Assume the relaxation is convex and satisfies Slater's condition. At an optimum $s^\star$, there are multipliers $\lambda\ge0$, $\nu_i^+\ge0$, and $\nu_i^-\ge0$ such that
\[
-\alpha\Tr(G(s^\star)^{-1}G_i)
+\beta\frac{\partial}{\partial s_i}\Tr\Sigma_e(s^\star)
+\gamma c_i+\lambda c_i+\nu_i^+-\nu_i^-=0,
\]
with $\lambda(c^\top s^\star-C_{max})=0$, $\nu_i^+s_i^\star=0$, and $\nu_i^-(1-s_i^\star)=0$. The stationarity equation compares marginal quotient refinement, estimation improvement, and measurement cost.
\end{theorem}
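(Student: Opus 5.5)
This is the standard KKT theorem for a convex program with affine inequality constraints. The plan is to compute the gradient of the objective explicitly and read off the conditions. I would rewrite the feasible set as the polyhedron cut out by $c^\top s-C_{max}\le0$, $-s_i\le0$ and $s_i-1\le0$. The objective is
\[
f(s)=-\alpha\log\det G(s)+\beta\Tr\Sigma_e(s)+\gamma c^\top s .
\]
It is convex by the concavity argument given just before the statement, together with the assumed convexity of $\Tr\Sigma_e$. Slater's condition then gives strong duality and the existence of multipliers at any optimum $s^\star$. Sufficiency is also immediate, but only necessity is claimed. Since all constraints are affine, a refined Slater condition (feasibility alone) would in fact suffice. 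I will nevertheless invoke the stated Slater hypothesis so that the argument still works if $\Tr\Sigma_e$ is only convex and subdifferentiable.

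Next I compute the gradient of the log-det term. Because $G(s)=\theta I+\sum_i s_iG_i$ is affine in $s$ with $\partial G/\partial s_i=G_i$, Jacobi's formula gives
\[
\partial_{s_i}\log\det G(s)=\Tr(G(s)^{-1}G_i).
\]
This is well defined on all of $[0,1]^m$, since $\theta>0$ makes $G(s)\succeq\theta I\succ0$ there. The linear cost contributes $\gamma c_i$. Assuming $\Tr\Sigma_e$ is differentiable at $s^\star$, as the notation $\partial/\partial s_i$ in the statement presupposes, this yields the first three terms of the stationarity equation.

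I then form the Lagrangian with multiplier $\lambda$ on the budget constraint, multiplier $\nu_i^+$ on $-s_i\le0$, and multiplier $\nu_i^-$ on $s_i-1\le0$. Stationarity reads $\nabla f(s^\star)+\lambda c-\nu^++\nu^-=0$, and complementary slackness gives $\lambda(c^\top s^\star-C_{max})=0$, $\nu_i^+s_i^\star=0$ and $\nu_i^-(1-s_i^\star)=0$.

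The main obstacle is the sign convention. With the complementary-slackness pairing exactly as stated ($\nu_i^+$ tied to $s_i^\star=0$, $\nu_i^-$ tied to $s_i^\star=1$), the standard Lagrangian produces $-\nu_i^++\nu_i^-$ in the stationarity equation, not $+\nu_i^+-\nu_i^-$. I will check this against the direct optimality argument: at a lower-bound-active coordinate the derivative of $f$ must be nonnegative, so the multiplier has to enter with a minus sign. I would then either relabel the multipliers, so that $\nu_i^+$ is attached to the upper bound, or state explicitly that the displayed equation holds after swapping the roles of $\nu^\pm$. Either choice keeps the paper's complementary-slackness display consistent.

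Finally, I would verify the interpretive sentence in the statement. $\Tr(G^{-1}G_i)$ is the marginal gain in log-det information, and hence the marginal quotient refinement. The $\beta$ term is the marginal estimation improvement, and $(\gamma+\lambda)c_i$ is the effective marginal cost.
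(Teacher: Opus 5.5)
Your route is the paper's: its proof is exactly the Jacobi-formula derivative $-\Tr(G(s)^{-1}G_i)$ of the log-det term, followed by an appeal to the standard Slater-based KKT conditions for the box and budget constraints, and your Lagrangian bookkeeping is right. Your sign catch is also correct, and the paper's proof misses it: with $\nu_i^+s_i^\star=0$ and $\nu_i^-(1-s_i^\star)=0$, stationarity must read $\ldots-\nu_i^++\nu_i^-=0$, and the printed version is false (take $m=1$, $\beta=0$, $C_{max}>c_1>0$ and $\gamma c_1>\alpha\Tr(G_1)/\theta$; then $s_1^\star=0$, complementary slackness forces $\lambda=\nu_1^-=0$, and the printed equation would give $\nu_1^+=-\partial_1 f(0)<0$), so one of the two displays must change, since relabeling alters the complementary-slackness pairing rather than preserving it, and in your direct check the quantity that must be nonnegative at a lower-bound-active coordinate is $\partial_i f(s^\star)+\lambda c_i$, not $\partial_i f(s^\star)$ alone.
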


\begin{proof}
The derivative of $-\log\det G(s)$ in coordinate $s_i$ is $-\Tr(G(s)^{-1}G_i)$. The remaining terms are differentiable by assumption, and the displayed equations are the KKT conditions for the box and budget constraints.
\end{proof}

\begin{theorem}[Observer redesign to bounded advantage reduction]
Let $s,s'\in[0,1]^m$ satisfy $G(s')\succeq G(s)$ and $\bar\chi(s')\le\bar\chi(s)$. For a tested continuation set $V$ and statistical tolerance $\epsilon_{stat}>0$, define
\[
R(s)=\frac{\bar\chi(s)}{a\mu}
+\sup_{v\in V}\frac{\epsilon_{stat}}{\max\{\|G(s)^{1/2}v\|,\theta^{1/2}\|v\|\}}.
\]
If $\rho$ is nondecreasing, the worst-case distinguishing advantage for continuations in $V$ is bounded by $\rho(R(s))$. If $R(s')<R(s)$, then the refined design satisfies
\[
    \sup_{v\in V}\Adv_{s'}(v)\le \rho(R(s'))<\rho(R(s))
\]
whenever $\rho$ is strictly increasing on the interval between the two residuals.
\end{theorem}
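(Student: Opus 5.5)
The proof splits $R(s)$ into its residual-floor term and its statistical-resolution term and controls each separately. The first claim, that the worst-case advantage over $V$ is at most $\rho(R(s))$, should come from the boundary advantage functional together with the scalar recurrence. Under the standing hypotheses $\phi(z)\ge\mu z$, $a_t\ge a$ and $\chi_t\le\bar\chi(s)$, the recurrence $\E Z_{t+1}\le(1-a\mu)\E Z_t+\bar\chi(s)$ gives $\limsup\E Z_t\le\bar\chi(s)/(a\mu)$ up to the transient term. This is the same conversion used in the Observer-quotient emulation theorem, and it accounts for the first summand of $R(s)$.

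For the second summand I will use the Gaussian output model. For a continuation $v$, the Mahalanobis separation under design $s$ is governed by the information form $v^\top G(s)v=\|G(s)^{1/2}v\|^2$. Because $G(s)\succeq\theta I$, this is at least $\theta\|v\|^2$, which is why the $\max$ with $\theta^{1/2}\|v\|$ is harmless. A tolerance $\epsilon_{stat}$ in the measured boundary statistic then translates into a state-space uncertainty of at most $\epsilon_{stat}/\|G(s)^{1/2}v\|$ along $v$. I would treat this as the residual contributed by the statistical terms $\eta_t$, folded into the argument of $\rho$. Taking $\sup_{v\in V}$ and using monotonicity of $\rho$ gives $\sup_{v\in V}\Adv_s(v)\le\rho(R(s))$.

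The comparison claim then needs three steps.
\begin{enumerate}[leftmargin=2em]
\item Since $G(s')\succeq G(s)$, we have $\|G(s')^{1/2}v\|\ge\|G(s)^{1/2}v\|$ for every $v$, so the statistical term is nonincreasing in the design.
\item Since $\bar\chi(s')\le\bar\chi(s)$, the floor term is also nonincreasing, so $R(s')\le R(s)$ holds automatically.
\item The strict hypothesis $R(s')<R(s)$, combined with strict monotonicity of $\rho$ on $[R(s'),R(s)]$, gives $\rho(R(s'))<\rho(R(s))$. Chaining this with the first claim applied at $s'$ yields the displayed inequality.
\end{enumerate}

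The main obstacle is the first claim, specifically justifying that the statistical contribution enters additively inside the argument of $\rho$ as $\epsilon_{stat}/\|G^{1/2}v\|$, rather than as an additive $\eta$ outside $\rho$ as in the boundary advantage functional. I would first try to read it as a definitional convention: the boundary surrogate is augmented by the estimation-error radius along $v$, which is the inverse signal-to-noise ratio. If that reading fails, the fallback is to bound the Gaussian total variation by $\|G^{1/2}v\|/2$, as in the IND-CPA timing bound, and to absorb the tolerance into $\rho$ by enlarging it.
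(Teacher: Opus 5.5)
Your proposal is correct and follows the paper's proof: Loewner monotonicity $\|G(s')^{1/2}v\|\ge\|G(s)^{1/2}v\|$ handles the statistical term, $\bar\chi(s')\le\bar\chi(s)$ handles the floor $\bar\chi/(a\mu)$, and strict monotonicity of $\rho$ gives the strict comparison. Like you, the paper does not derive $\sup_{v\in V}\Adv_s(v)\le\rho(R(s))$ beyond invoking $\rho$ as the boundary-to-advantage conversion, so your definitional-convention reading matches it. You should drop the Gaussian-TV fallback, which would not work: a bound like $\|G^{1/2}v\|/2$ grows under refinement, whereas the term $\epsilon_{stat}/\|G^{1/2}v\|$ shrinks.
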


\begin{proof}
The Loewner order gives $\|G(s')^{1/2}v\|\ge\|G(s)^{1/2}v\|$ for all $v$, so the statistical uncertainty term cannot increase. The scalar boundary recurrence contributes $\bar\chi(s)/(a\mu)$, which is nonincreasing by assumption. The map $\rho$ converts the resulting residual comparison into an advantage comparison.
\end{proof}

\section{Utility--Leakage Observer Design}

\subsection{Observer design as an optimization problem}

A defender often chooses what to reveal. Revealing more can improve utility, monitoring, or controllability; revealing less can reduce leakage. Let $S$ be a secret, $X$ an internal state correlated with $S$, and $Y$ an observer output generated by a channel $W:Y\mid X$. Let $U(W)$ be a utility functional, $L(W)$ a leakage functional, and $C(W)$ a measurement or computation cost. The generic observer-design problem is
\[
    \max_{W\in\cW}\; U(W)-\lambda L(W)-\mu C(W),\qquad \lambda,\mu\ge 0.
\]
The design variable is therefore an observer channel, and the same channel controls both utility and leakage.

\begin{theorem}[Existence of an optimal finite observer]
Let $X,Y$ be finite and let $\cW$ be a nonempty compact subset of the channel simplex $\{W:Y\mid X\}$. If $U,L,C$ are continuous on $\cW$, then the observer-design problem has an optimal solution.
\end{theorem}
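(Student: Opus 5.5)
This is a direct application of the Weierstrass extreme value theorem. The plan has three steps.

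First, I will fix the ambient space. Since $X$ and $Y$ are finite, a channel $W:Y\mid X$ is a matrix $(W(y\mid x))_{x\in X,y\in Y}$ in $\R^{|X|\times|Y|}$. Its rows lie in $\DeltaS(Y)$, so the channel simplex is the product $\DeltaS(Y)^{X}$. This product is a closed and bounded subset of a finite-dimensional Euclidean space, hence compact. All topology in the argument is the Euclidean one, so that "compact subset of the channel simplex" and "continuous on $\cW$" have a fixed meaning. Equivalently, one can use the metric $\max_x\TV(W(\cdot\mid x),W'(\cdot\mid x))$, which is equivalent on a finite-dimensional space.

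Second, I will define the objective $F(W)=U(W)-\lambda L(W)-\mu C(W)$ for the fixed constants $\lambda,\mu\ge0$. Finite linear combinations of real-valued continuous functions are continuous, so $F$ is continuous on $\cW$.

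Third, since $\cW$ is nonempty and compact and $F:\cW\to\R$ is continuous, the extreme value theorem gives that $F(\cW)$ is a nonempty compact subset of $\R$. Such a set contains its supremum, so there is $W^\star\in\cW$ with $F(W^\star)=\sup_{W\in\cW}F(W)$. This $W^\star$ is an optimal observer, which proves the claim.

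There is no real obstacle. The only point needing care is that $U$, $L$ and $C$ must be real-valued and finite on $\cW$. Continuity on $\cW$ already implies this, but it is worth noting for leakage functionals such as max-information or KL-type quantities, which can blow up at the boundary of the simplex. If such a functional were allowed to take the value $+\infty$, I would instead use lower semicontinuity of $\lambda L+\mu C$ and upper semicontinuity of $U$. That still yields upper semicontinuity of $F$, and an upper semicontinuous function attains its maximum on a compact set.
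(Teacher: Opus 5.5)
Your proposal is correct and takes the same route as the paper, which observes that $J(W)=U(W)-\lambda L(W)-\mu C(W)$ is continuous on the compact set $\cW$ and applies the Weierstrass theorem. Your extra remarks on the Euclidean topology and the upper-semicontinuous variant are sound but not needed for the statement as given.
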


\begin{proof}
The objective $J(W)=U(W)-\lambda L(W)-\mu C(W)$ is continuous on the compact set $\cW$. By the Weierstrass theorem, it attains a maximum.
\end{proof}

\begin{remark}
Compactness and continuity give the baseline mathematical conditions under which a proposed security filter is a well-posed observer design rather than only a procedure.
\end{remark}

\subsection{A convex leakage-control case}

Suppose the defender chooses a channel $W(y\mid x)$. Let $P_{S,X}$ be fixed and set
\[
    P_{S,Y}(s,y)=\sum_x P_{S,X}(s,x)W(y\mid x).
\]
A common leakage measure is mutual information $I(S;Y)$. Consider the constrained problem
\[
\begin{aligned}
    \text{maximize}\quad & U(W)\\
    \text{subject to}\quad & I(S;Y)\le \ell,\\
    & W(\cdot\mid x)\in\DeltaS(Y)\quad\forall x.
\end{aligned}
\]

\begin{theorem}[KKT form for finite leakage design]
Assume $U$ is differentiable and concave in $W$, and the feasible set has a strictly feasible point with $I(S;Y)<\ell$. If $W^\star$ is an interior optimum, then there exists $\lambda\ge0$ and row-normalization multipliers $\nu_x$ such that
\[
    \frac{\partial U}{\partial W(y\mid x)}(W^\star)-\lambda\frac{\partial I(S;Y)}{\partial W(y\mid x)}(W^\star)-\nu_x=0
\]
for every $x,y$, with complementary slackness
\[
    \lambda(I(S;Y)-\ell)=0.
\]
For positive joint probabilities, the derivative of the leakage term is
\[
    \frac{\partial I(S;Y)}{\partial W(y\mid x)}
    =\sum_s P_{S,X}(s,x)\log\frac{P_{S,Y}(s,y)}{P_S(s)P_Y(y)}.
\]
\end{theorem}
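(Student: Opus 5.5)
The proof has two parts. First we obtain the KKT conditions for the constrained program. Then we compute the partial derivatives of the mutual information.

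\textbf{KKT conditions.} View $W$ as a vector in $\R^{|X|\times|Y|}$. The constraints are of three kinds:
\begin{itemize}
\item the leakage inequality $I(S;Y)\le\ell$;
\item the row equalities $\sum_y W(y\mid x)=1$;
\item the positivity constraints $W(y\mid x)\ge0$.
\end{itemize}
The mutual information $I(S;Y)$ is convex in the channel $W$ for fixed $P_{S,X}$. To see this, note that $P_{S,Y}$ is linear in $W$, so the channel $S\to Y$ factors as $W\circ P_{X\mid S}$, and $I$ is convex in the channel. The feasible set is therefore convex, and the problem maximizes a concave function over it. By assumption there is a point with $I(S;Y)<\ell$, and it can be chosen with all entries positive by mixing with the interior of the simplex, since convexity of $I$ preserves strict feasibility. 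This is a Slater point, so strong duality holds and KKT multipliers exist at any optimum.

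Because $W^\star$ is interior, the positivity constraints are inactive and their multipliers vanish by complementary slackness. We are left with $\lambda\ge0$ for the leakage inequality and free multipliers $\nu_x$ for the row equalities. Stationarity of the Lagrangian $U-\lambda(I-\ell)-\sum_x\nu_x(\sum_yW(y\mid x)-1)$ gives the displayed equation, and complementary slackness gives $\lambda(I-\ell)=0$. Differentiability of $I$ at $W^\star$ is needed here; it holds because the joint probabilities are positive, which we verify in the next step.

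\textbf{Derivative of the leakage term.} Write
\[
I(S;Y)=\sum_{s,y}P_{S,Y}(s,y)\log P_{S,Y}(s,y)-\sum_y P_Y(y)\log P_Y(y)-\sum_sP_S(s)\log P_S(s).
\]
The last term does not depend on $W$. The entries depend on $W$ through
\[
\frac{\partial P_{S,Y}(s,y')}{\partial W(y\mid x)}=P_{S,X}(s,x)\ind[y'=y],\qquad \frac{\partial P_Y(y')}{\partial W(y\mid x)}=P_X(x)\ind[y'=y].
\]
Differentiating $p\log p$ gives $\log p+1$. The first sum therefore contributes $\sum_sP_{S,X}(s,x)(\log P_{S,Y}(s,y)+1)$, and the second contributes $-P_X(x)(\log P_Y(y)+1)$. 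Since $\sum_sP_{S,X}(s,x)=P_X(x)$, the $+1$ terms cancel. Writing $-P_X(x)\log P_Y(y)=-\sum_sP_{S,X}(s,x)\log P_Y(y)$ and inserting $\log P_S(s)$ yields the stated formula up to the term
\[
-\sum_sP_{S,X}(s,x)\log P_S(s).
\]
This term does not depend on $y$, so it can be absorbed into the row multiplier $\nu_x$. It also arises naturally if $I$ is differentiated in the form $\sum P_{S,Y}\log\frac{P_{S,Y}}{P_SP_Y}$, holding $P_S$ fixed and using the same cancellations.

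\textbf{Main obstacle.} The delicate point is the additive ambiguity just described. Which per-row constant appears depends on how $I$ is extended off the simplex. I would state explicitly that the derivative is taken on the extension $\sum P_{S,Y}\log\frac{P_{S,Y}}{P_SP_Y}$ with $P_S$ held fixed, and note that any per-row constant is absorbed by $\nu_x$, so the stationarity system is unchanged.
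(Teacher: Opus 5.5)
Your proposal is correct and follows essentially the paper's route: the standard Slater-based KKT theorem for a concave objective with the convex constraint $I(S;Y)-\ell\le 0$, where interiority of $W^\star$ kills the positivity multipliers, followed by direct differentiation of $I(S;Y)$ in which the $+1$ terms cancel through $P_Y(y)=\sum_s P_{S,Y}(s,y)$. Your additions are refinements of that same argument rather than a different approach: you prove convexity of $I$ via $P_{Y\mid S}=W\circ P_{X\mid S}$, you choose a strictly positive Slater point, and you note that the entropy-form extension shifts the gradient by the $y$-independent term $\sum_s P_{S,X}(s,x)\log P_S(s)$, which is absorbed into $\nu_x$ (whereas the paper's log-ratio form with $P_S$ held fixed gives the displayed formula exactly).
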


\begin{proof}
The feasible region is described by simplex constraints and the leakage constraint. Under strict feasibility, the standard KKT conditions apply for concave maximization with convex inequality written as $I(S;Y)-\ell\le0$. The derivative follows by differentiating
\[
I(S;Y)=\sum_{s,y}P_{S,Y}(s,y)\log\frac{P_{S,Y}(s,y)}{P_S(s)P_Y(y)}
\]
with respect to $W(y\mid x)$, using $\partial P_{S,Y}(s,y)/\partial W(y\mid x)=P_{S,X}(s,x)$ and cancellation of the terms arising from $P_Y(y)=\sum_sP_{S,Y}(s,y)$.
\end{proof}

\subsection{Dynamic observer control}

The observer may be selected over time. Let $s_t$ denote an information state, $w_t$ an observer/sensor action, and $a_t$ an adversarial action. A robust observer-control recursion can be written as
\[
    V_t(s)=\max_{w\in\cW(s)}\min_{a\in\cA(s)}\left\{r(s,w,a)-\lambda \ell(s,w,a)+\E[V_{t+1}(s')\mid s,w,a]\right\},
\]
with terminal value $V_T$. This equation is not presented as a new dynamic-programming theorem; it is the natural control representation of the security game. The reward measures monitoring utility, the leakage term measures adversarial information, and the minimization encodes worst-case attack.

\begin{proposition}[Monotone value under observer enrichment]
If the defender's available observer actions are enlarged from $\cW(s)$ to $\cW'(s)\supseteq\cW(s)$ at every state, and the leakage penalty is already included in the objective, then the robust value cannot decrease.
\end{proposition}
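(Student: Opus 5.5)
We argue by backward induction on the time index, comparing the two robust value functions $V_t$ (built from $\cW(s)$) and $V'_t$ (built from $\cW'(s)$) pointwise in the information state. The claim to be established is $V'_t(s)\ge V_t(s)$ for every $t\le T$ and every $s$. The base case $t=T$ is immediate, because the terminal value $V_T$ is common to both recursions and does not involve the observer action set.

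For the inductive step, we assume $V'_{t+1}\ge V_{t+1}$ pointwise and fix $s$. For each $w\in\cW(s)$ and $a\in\cA(s)$, define the stage objectives
\[
J_t(s,w,a)=r(s,w,a)-\lambda\ell(s,w,a)+\E[V_{t+1}(s')\mid s,w,a]
\]
and $J'_t(s,w,a)$ by the same expression with $V'_{t+1}$ in place of $V_{t+1}$. Two monotonicity facts then carry the step:
\begin{enumerate}[leftmargin=2em]
\item Conditional expectation is monotone, so $J'_t\ge J_t$ for every fixed pair $(w,a)$. Since $\min_a$ preserves pointwise order, $\min_a J'_t(s,w,a)\ge\min_a J_t(s,w,a)$ for each $w\in\cW(s)$.
\item A maximum over a larger set cannot be smaller, so
\[
V'_t(s)=\max_{w\in\cW'(s)}\min_a J'_t\ge\max_{w\in\cW(s)}\min_a J'_t\ge\max_{w\in\cW(s)}\min_a J_t=V_t(s).
\]
\end{enumerate}
The hypothesis that the leakage penalty is already in the objective is what makes this go through: $\ell$ enters $J_t$ as a fixed function of $(s,w,a)$. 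Enlarging the action set therefore only adds candidate actions, and each of them is scored with its own leakage cost. No previously available action has its score changed.

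The only real obstacle is a standing well-posedness issue. The maxima and minima must exist, or else we replace them by $\sup$ and $\inf$ and the argument goes through verbatim. The conditional expectations must also be well defined and integrable, for instance with $V_{t+1}$ bounded on a finite or compact information-state space. A further point is that the transition law of $s'$ given $(s,w,a)$ must be the same in both recursions for actions in $\cW(s)$, so that enlargement does not alter the dynamics of old actions. We would state these as implicit standing assumptions of the recursion rather than prove them.
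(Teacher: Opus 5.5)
Your proof is correct and follows the paper's argument: backward induction on the Bellman recursion, with the maximization at each step taken over the superset $\cW'(s)\supseteq\cW(s)$. You also spell out a step the paper leaves implicit, namely that the inductive hypothesis $V'_{t+1}\ge V_{t+1}$ passes through the monotone conditional expectation and the inner minimization before the superset comparison is applied.
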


\begin{proof}
At each Bellman step, the maximization is taken over a superset. The value of a supremum over a larger feasible set is at least the value over the original set. Backward induction proves the proposition.
\end{proof}

\begin{remark}
This proposition is sometimes misunderstood. Enlarging the observer class available to the defender may improve the defender's optimized value. Enlarging the observer class available to the adversary refines the adversary quotient and may decrease security. The direction depends on who controls the observation.
\end{remark}

\section{Boundary Functionals on Observer Quotients}

The boundary-functional layer is stated with the hypotheses needed for stochastic approximation rather than hiding them in prose. The probability space $(\Omega,\cF,\Prb)$ is complete; $(\cF_k)$ is the observation filtration generated by public transcripts, side observations, controller memory, and environment scheduling decisions; all controls and observer choices are predictable with respect to this filtration; all boundary surrogates are nonnegative, integrable, and adapted; the state process is either compact-valued or remains almost surely in a proper sublevel set of the boundary gap; and conditional expectations are taken with respect to the declared filtration. These assumptions are exactly what allow the scalar boundary functional to be composed with cryptographic scheduling and projected control.

The quotient calculus above identifies the observer class in which a security statement is interpreted. The remaining question is how a defender establishes movement toward a secure quotient set without reading the secret state, the exact leakage objective, or the raw continuation variable. This section adds that boundary-functional layer. The construction places the boundary-observed limit-seeking argument inside the security game: the analytical object is a scalar boundary process, not a raw-state oracle.

\begin{definition}[Secure quotient manifold]
Let $q_\Pi:X\to X/{\eqobs}$ be the quotient map induced by an observer class $\Pi$. A set $\cM_\Pi\subseteq X/{\eqobs}$ is a secure quotient manifold when reaching it means that the relevant adversarial view, leakage statistic, or protocol transcript has entered the admissible security region. Its raw lift is
\[
    M_\Pi:=q_\Pi^{-1}(\cM_\Pi)\subseteq X.
\]
For example, $\cM_\Pi$ may be the set of quotient classes whose distinguishing advantage is at most a target level, or the set of monitor states satisfying a leakage budget.
\end{definition}

\begin{definition}[Separating quotient boundary gap]
A continuous map $B_\Pi:X\to\R_+$ is a separating quotient boundary gap for $\cM_\Pi$ when
\[
    B_\Pi(x)=0 \quad\Longleftrightarrow\quad q_\Pi(x)\in\cM_\Pi,
\]
and, for every $\eps>0$,
\[
    b_\eps:=\inf\{B_\Pi(x):\operatorname{dist}(q_\Pi(x),\cM_\Pi)\ge \eps\}>0.
\]
If, on a tube around $M_\Pi$, constants $0<c\le C$ and $r\ge 1$ satisfy
\[
    c\,\operatorname{dist}(q_\Pi(x),\cM_\Pi)^r\le B_\Pi(x)\le C\,\operatorname{dist}(q_\Pi(x),\cM_\Pi)^r,
\]
then $B_\Pi$ is coercive of order $r$.
\end{definition}

\begin{definition}[Oracle-free quotient functional]
Let $\cF_k^y$ be the observation filtration generated by transcripts, sensors, side-channel measurements, and internal monitor state. A nonnegative process $(z_k)$ is an oracle-free quotient surrogate when $z_k$ is $\cF_k^y$-adapted, while the feedback and stopping logic do not read any of
\[
    \operatorname{dist}(q_\Pi(x_k),\cM_\Pi),\qquad \ind\{q_\Pi(x_k)\in\cM_\Pi\},\qquad L(x_k),
\]
where $L$ denotes an exact leakage or objective value used only in the proof. The surrogate is boundary-consistent when $|z_k-B_\Pi(x_k)|\to0$ almost surely.
\end{definition}

\begin{definition}[Boundary dissipativity]
Let $\varphi:\R_+\to\R_+$ be continuous and positive definite. The closed loop is quotient-boundary dissipative when
\[
    \E[z_{k+1}\mid \cF_k^y]\le z_k-a_k\varphi(z_k)+\chi_k,
    \qquad a_k\ge0,\quad \chi_k\ge0,
\]
with $\sum_k a_k=\infty$ and $\sum_k\E\chi_k<\infty$.
\end{definition}

The proof is quotient-level. Raw dynamics may move in directions that are invisible to the admitted observer, while the theorem is stated over the quotient trace and the scalar boundary process. The boundary gate is the bounded quantity through which leakage, reliability, feasibility, and residual noise enter the distinguishing bound.

\begin{lemma}[Persistent drift forces decay to zero]
Let $(z_k)$ be nonnegative, integrable, and adapted, and suppose
\[
    \E[z_{k+1}\mid\cF_k^y]\le z_k-a_k\varphi(z_k)+\chi_k
\]
with $\varphi$ continuous and positive definite, $\sum_k a_k=\infty$, and $\sum_k\chi_k<\infty$ almost surely. Then $z_k\to0$ almost surely.
\end{lemma}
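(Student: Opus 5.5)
This is the Robbins--Siegmund almost-supermartingale argument, followed by a contradiction step that uses positive definiteness of $\varphi$. Nonnegativity of $\varphi$ and $a_k$ gives
\[
\E[z_{k+1}\mid\cF_k^y]\le z_k+\chi_k-a_k\varphi(z_k),
\]
which has the Robbins--Siegmund form with multiplicative factor $\beta_k=0$, additive term $\chi_k$ and subtracted term $a_k\varphi(z_k)$. Since $\sum_k\chi_k<\infty$ almost surely, the theorem yields two facts: $z_k$ converges almost surely to a finite random limit $z_\infty\ge0$, and $\sum_k a_k\varphi(z_k)<\infty$ almost surely.

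If I cannot quote Robbins--Siegmund with only almost-sure summability of $\chi_k$, I will prove it directly. First I handle the integrable case $\sum_k\E\chi_k<\infty$, as in the boundary-dissipativity definition. Set
\[
U_k=z_k+\sum_{j<k}a_j\varphi(z_j)-\sum_{j<k}\chi_j.
\]
This is a supermartingale bounded below by $-\sum_j\chi_j$, which is integrable, so $U_k$ converges almost surely. Both $z_k\ge0$ and the partial sums of $a_j\varphi(z_j)$ are nondecreasing-controlled, so each converges separately: the series is nondecreasing and bounded by $U_k+\sum\chi_j$, hence converges, and then $z_k$ converges as a difference of convergent sequences. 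For merely almost-sure summability I localize with the stopping times $\tau_N=\inf\{k:\sum_{j\le k}\chi_j>N\}$. I would restrict to $\chi_k$ predictable, or else use the standard Robbins--Siegmund localization. Running the same argument up to $\tau_N$ and letting $N\to\infty$ covers almost every $\omega$.

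Next I identify the limit. Fix $\omega$ in the full-measure event where $z_k(\omega)\to z_\infty(\omega)$, $\sum_k a_k\varphi(z_k(\omega))<\infty$, and $\sum_k a_k=\infty$. Suppose $z_\infty(\omega)=c>0$. By continuity of $\varphi$ at $c$ and positive definiteness, $\varphi(c)>0$, so there is $K$ with $\varphi(z_k(\omega))\ge\varphi(c)/2$ for all $k\ge K$. Then
\[
\sum_{k\ge K}a_k\varphi(z_k)\ge \tfrac{\varphi(c)}{2}\sum_{k\ge K}a_k=\infty,
\]
which is a contradiction. Hence $z_\infty=0$ almost surely.

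The main obstacle is the first step under the weak hypothesis that $\sum_k\chi_k<\infty$ holds only almost surely, without integrability. The supermartingale construction needs a lower bound, and localization is only clean when the partial sums of $\chi$ are predictable. I would try first to invoke Robbins--Siegmund (1971) verbatim, since its statement allows almost-sure summability. Failing that, I would note that the section's standing assumptions (the boundary-dissipativity definition requires $\sum_k\E\chi_k<\infty$) already put us in the integrable case, where the direct argument goes through.
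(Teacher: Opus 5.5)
Your proposal is correct and follows the paper's proof. Robbins--Siegmund gives almost-sure convergence of $z_k$ to a finite limit and summability of $\sum_k a_k\varphi(z_k)$, and positivity of $\varphi$ near a nonzero limit together with $\sum_k a_k=\infty$ then forces $z_\infty=0$. Your self-contained supermartingale argument, with stopping-time localization for adapted $\chi_k$ that are only almost surely summable, fills in the step the paper merely cites.
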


\begin{proof}
This is the Robbins--Siegmund almost-supermartingale argument in the present notation. The inequality gives convergence of $z_k$ to some finite $z_\infty\ge0$ and summability of $\sum_k a_k\varphi(z_k)$. If $z_\infty>0$ on an event of positive probability, then along that event $z_k$ is eventually bounded below by a positive level. Since $\varphi$ is positive on every compact interval away from zero, $\varphi(z_k)$ is eventually bounded below by a positive constant. The divergence of $\sum_k a_k$ then contradicts summability of $\sum_k a_k\varphi(z_k)$. Hence $z_\infty=0$ almost surely.
\end{proof}

\begin{theorem}[Boundary-bounded quotient security]
Assume $B_\Pi$ is a separating quotient boundary gap and $(z_k)$ is a boundary-consistent oracle-free quotient surrogate satisfying boundary dissipativity. Then
\[
    z_k\to0\quad\text{and}\quad \operatorname{dist}(q_\Pi(x_k),\cM_\Pi)\to0
\]
almost surely. The conclusion uses no raw-state equality oracle, no exact target-membership oracle, and no exact leakage-objective oracle.
\end{theorem}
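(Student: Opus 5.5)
The proof is a three-step chain: apply the drift lemma to get $z_k\to0$, use boundary consistency to transfer this to $B_\Pi(x_k)\to0$, and then use the separation property of $B_\Pi$ to convert vanishing gap into vanishing quotient distance.

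\textbf{Step 1.} Apply the lemma on persistent drift forcing decay to zero. Boundary dissipativity supplies the recursion $\E[z_{k+1}\mid\cF_k^y]\le z_k-a_k\varphi(z_k)+\chi_k$ with $\varphi$ continuous and positive definite, $a_k\ge0$, $\chi_k\ge0$ and $\sum_k a_k=\infty$. The surrogate is nonnegative, integrable and adapted by the standing assumptions of this section. The one mismatch is that dissipativity gives $\sum_k\E\chi_k<\infty$, whereas the lemma asks for $\sum_k\chi_k<\infty$ almost surely. Since $\chi_k\ge0$, monotone convergence gives $\E\sum_k\chi_k=\sum_k\E\chi_k<\infty$, so $\sum_k\chi_k<\infty$ almost surely. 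The lemma then yields $z_k\to0$ almost surely.

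\textbf{Step 2.} Boundary consistency states $|z_k-B_\Pi(x_k)|\to0$ almost surely. Intersecting this event with the one from Step 1 gives $B_\Pi(x_k)\to0$ almost surely.

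\textbf{Step 3.} Fix an $\omega$ in the full-measure event from Step 2 and suppose, for contradiction, that $\operatorname{dist}(q_\Pi(x_k),\cM_\Pi)\not\to0$. Then there are $\eps>0$ and a subsequence along which $\operatorname{dist}(q_\Pi(x_{k_j}),\cM_\Pi)\ge\eps$. By the separating property, $B_\Pi(x_{k_j})\ge b_\eps>0$ for every $j$, which contradicts $B_\Pi(x_k)\to0$. Hence the quotient distance tends to zero almost surely. Coercivity is not needed here; it would only add a rate, namely $\operatorname{dist}\le (B_\Pi/c)^{1/r}$.

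\textbf{Step 4 (oracle-freeness).} Oracle-freeness is checked by inspection. The feedback and stopping logic read only the $\cF_k^y$-adapted quantity $z_k$. The quantities $B_\Pi$, $\operatorname{dist}$, the membership indicator and $L$ appear only inside the proof, in Steps 2 and 3, never in the closed loop. So the conclusion uses no raw-state equality oracle, no exact target-membership oracle and no exact leakage-objective oracle.

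\textbf{Main obstacle.} The delicate point is Step 1: checking that the hypotheses match the drift lemma exactly. This means passing from summability of $\E\chi_k$ to almost-sure summability, and making sure the Robbins--Siegmund argument inside that lemma tolerates random or merely nonnegative $a_k$. The $a_k$ are predictable, so the almost-supermartingale argument applies conditionally without change. Steps 2 and 3 are deterministic once a single full-measure event is fixed.
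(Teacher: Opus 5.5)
Your proof is correct and follows the paper's argument exactly: the persistent-drift lemma gives $z_k\to0$, boundary consistency transfers this to $B_\Pi(x_k)\to0$, and the separation constant $b_\eps>0$ rules out a subsequence staying $\eps$ away from $\cM_\Pi$. Your monotone-convergence step, passing from $\sum_k\E\chi_k<\infty$ to almost-sure summability of $\chi_k$, is a detail the paper leaves implicit, but it is needed to invoke the drift lemma as stated.
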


\begin{proof}
The previous lemma gives $z_k\to0$. Boundary consistency gives $B_\Pi(x_k)\to0$. If the quotient distance did not converge to zero, then for some $\eps>0$ a subsequence would satisfy $\operatorname{dist}(q_\Pi(x_{k_j}),\cM_\Pi)\ge\eps$. By separation, $B_\Pi(x_{k_j})\ge b_\eps>0$, contradicting $B_\Pi(x_k)\to0$.
\end{proof}

\begin{proposition}[Cryptographic advantage as a boundary consequence]
Suppose a distinguishing advantage functional satisfies, for constants $L_A\ge0$ and residuals $\rho_k\ge0$,
\[
    \operatorname{Adv}^{\Pi}_{\cA}(x_k)\le L_A B_\Pi(x_k)+\rho_k
\]
for every adversary $\cA$ in the admissible class. Under the hypotheses of the boundary-bounded quotient theorem,
\[
    \limsup_{k\to\infty}\operatorname{Adv}^{\Pi}_{\cA}(x_k)\le \limsup_{k\to\infty}\rho_k.
\]
If $\rho_k\to0$, the adversary advantage vanishes along the bounded quotient trajectory.
\end{proposition}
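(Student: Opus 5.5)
The plan is to take limsup on both sides of the assumed pointwise inequality and then show that the boundary term tends to zero almost surely, using the boundary-bounded quotient theorem.

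\textbf{Step 1.} Fix an admissible adversary $\cA$. For every $k$ we have
\[
\operatorname{Adv}^{\Pi}_{\cA}(x_k)\le L_A B_\Pi(x_k)+\rho_k .
\]
Taking $\limsup_{k\to\infty}$ and using subadditivity of limsup gives
\[
\limsup_k\operatorname{Adv}^{\Pi}_{\cA}(x_k)\le L_A\limsup_k B_\Pi(x_k)+\limsup_k\rho_k .
\]
Since $L_A\ge0$ is finite, the product $L_A\cdot 0$ causes no $\infty\cdot0$ ambiguity.

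\textbf{Step 2.} This is the key step: show $B_\Pi(x_k)\to0$ almost surely. By the boundary-bounded quotient theorem, $z_k\to0$ almost surely. Its proof (via the persistent-drift lemma) uses the dissipativity hypothesis in the form $\sum_k\chi_k<\infty$ a.s. The dissipativity definition only gives $\sum_k\E\chi_k<\infty$, but this implies $\sum_k\chi_k<\infty$ a.s. by monotone convergence, since $\chi_k\ge0$. Boundary consistency gives $|z_k-B_\Pi(x_k)|\to0$ a.s., so the triangle inequality yields $B_\Pi(x_k)\to0$ on the intersection of the two full-measure events. Hence $\limsup_k B_\Pi(x_k)=0$ a.s.

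\textbf{Step 3.} Combining the two steps gives $\limsup_k\operatorname{Adv}^{\Pi}_{\cA}(x_k)\le\limsup_k\rho_k$ almost surely. The null set does not depend on $\cA$, because the bound on $B_\Pi(x_k)$ is adversary-free. So the statement holds simultaneously for all admissible adversaries, and the same argument applies to the supremum over $\cA$. If $\rho_k\to0$, the right-hand side is $0$; since advantages are nonnegative, this gives $\operatorname{Adv}^{\Pi}_{\cA}(x_k)\to0$.

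The only real obstacle is the almost-sure qualifier. The conclusion holds pathwise on the full-measure event where $z_k\to0$ and boundary consistency holds, and I will state it as an almost-sure statement. If $\operatorname{Adv}$ is instead meant as a deterministic quantity evaluated along the trajectory, the same argument applies to every realization on that event.
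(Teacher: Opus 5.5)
Your proposal is correct and follows the same route as the paper. Both arguments obtain $B_\Pi(x_k)\to0$ almost surely from the boundary-bounded quotient theorem (via $z_k\to0$ and boundary consistency) and then pass to the limsup in the assumed inequality. Your extra remarks are correct refinements of details the paper leaves implicit: that $\sum_k\E\chi_k<\infty$ gives $\sum_k\chi_k<\infty$ almost surely by monotone convergence, and that the exceptional null set does not depend on the adversary.
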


\begin{proof}
Since $B_\Pi(x_k)\to0$ almost surely, the displayed upper bound gives the stated limsup. The proof separates surrogate convergence from the cryptographic reduction that upper-bounds advantage by the boundary gap.
\end{proof}

\begin{remark}[What was missing before]
The earlier quotient sections show that an observer can fail to see a continuation. The boundary layer gives a bounded statement over time by forcing an adapted scalar quotient gap to vanish. This is the bridge between the cryptographic advantage bound and the dissipativity estimate.
\end{remark}

\section{Rates, Residual Leakage, and Augmented Gates}

The almost-sure theorem is qualitative. A monitor with persistent noise or a side channel with nonzero floor requires a quantitative residual statement. The following rates express that residual as a leakage floor.

\begin{definition}[Boundary PL inequality]
The quotient boundary gap satisfies a boundary Polyak--Lojasiewicz inequality with modulus $\mu>0$ and order one when
\[
    \varphi(r)\ge \mu r\qquad\text{for all }r\ge0.
\]
More generally, order $p\ge1$ means $\varphi(r)\ge\mu r^p$.
\end{definition}

\begin{theorem}[Boundary rates]
Assume boundary dissipativity and the linear boundary PL inequality $\varphi(r)\ge\mu r$.
\begin{enumerate}[leftmargin=2em]
    \item If $a_k\equiv a$ with $0<a\mu<1$ and $\chi_k\le\bar\chi$, then
    \[
        \E z_k\le (1-a\mu)^k\E z_0+\frac{\bar\chi}{a\mu}.
    \]
    \item If $a_k=c/(k+1)$ and $\E\chi_k=O(k^{-(1+s)})$ for $s>0$, then
    \[
        \E z_k=O\bigl(k^{-\min(c\mu,s)}\bigr),
    \]
    with the usual additional logarithm when $c\mu=s$.
\end{enumerate}
If $B_\Pi$ is coercive of order $r$, the same envelopes transfer to $\E[\operatorname{dist}(q_\Pi(x_k),\cM_\Pi)^r]$ up to the coercivity constant and the surrogate-consistency error.
\end{theorem}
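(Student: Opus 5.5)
The plan is to take total expectations in the boundary-dissipativity inequality and work with the deterministic sequence $u_k:=\E z_k$. Since $z_k\ge0$ is integrable and adapted, the tower property applied to $\E[z_{k+1}\mid\cF_k^y]\le z_k-a_k\varphi(z_k)+\chi_k$ gives $u_{k+1}\le u_k-a_k\E\varphi(z_k)+\E\chi_k$. The linear PL inequality $\varphi(r)\ge\mu r$ holds pointwise, so $\E\varphi(z_k)\ge\mu u_k$. Because the $a_k$ are deterministic in both regimes, this yields the scalar recursion
\[
u_{k+1}\le(1-a_k\mu)u_k+\E\chi_k .
\]
I deliberately avoid Jensen's inequality here, since linearity of the lower bound on $\varphi$ is all that is needed.

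\emph{Constant steps.} With $a_k\equiv a$ and $0<a\mu<1$, the factor $1-a\mu$ lies in $(0,1)$. Unrolling the recursion gives
\[
u_k\le(1-a\mu)^k u_0+\bar\chi\sum_{j<k}(1-a\mu)^j .
\]
The geometric sum is bounded by $1/(a\mu)$, which proves item 1. One can check this by induction, using the fact that $\bar\chi/(a\mu)$ is a fixed point of the map $u\mapsto(1-a\mu)u+\bar\chi$.

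\emph{Decaying steps.} With $a_k=c/(k+1)$ and $\E\chi_k\le Dk^{-(1+s)}$, I use the standard Chung-type argument. Write $\Phi_{j,k}=\prod_{i=j}^{k-1}(1-c\mu/(i+1))$. For $i$ larger than some $i_0$ (chosen so every factor is positive), the bound $\log(1-x)\le-x$ gives $\Phi_{j,k}\le C(j/k)^{c\mu}$. The first finitely many indices are absorbed into the constant. Unrolling then gives
\[
u_k\lesssim k^{-c\mu}u_{i_0}+k^{-c\mu}\sum_{j\le k}j^{c\mu-1-s}.
\]
The sum is split into three cases:
\begin{itemize}
\item If $c\mu<s$, the sum converges, so $u_k=O(k^{-c\mu})$.
\item If $c\mu>s$, the sum is $O(k^{c\mu-s})$, so $u_k=O(k^{-s})$.
\item If $c\mu=s$, the sum is $O(\log k)$, which produces the extra logarithm.
\end{itemize}
Together these give $O(k^{-\min(c\mu,s)})$. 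The main bookkeeping obstacle is the product estimate for $\Phi_{j,k}$ and the finitely many early indices where $c\mu/(i+1)\ge1$. I would handle these by starting the recursion at $i_0>c\mu$ and bounding $u_{i_0}$ crudely by summing the recursion with $u_{i+1}\le u_i+\E\chi_i$.

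\emph{Transfer to distance.} Coercivity of order $r$ on the tube gives $\operatorname{dist}(q_\Pi(x_k),\cM_\Pi)^r\le c^{-1}B_\Pi(x_k)$. I would then write $B_\Pi(x_k)\le z_k+|z_k-B_\Pi(x_k)|$ and take expectations, which gives $\E\operatorname{dist}^r\le c^{-1}(\E z_k+\E|z_k-B_\Pi(x_k)|)$. Both envelopes therefore transfer, up to the constant $c^{-1}$ and the surrogate-consistency error. I would flag one point explicitly: this requires the trajectory to stay in the tube, which is covered by the standing sublevel-set assumption of this section. Otherwise, the statement is restricted to the event that the trajectory lies in the tube.
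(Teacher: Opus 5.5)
Your proposal is correct and follows the paper's own route. You take total expectations and use the pointwise bound $\varphi(z_k)\ge\mu z_k$ to get the scalar recursion $\E z_{k+1}\le(1-a_k\mu)\E z_k+\E\chi_k$. From there you sum the geometric series for constant gain, apply the Chung comparison for $a_k=c/(k+1)$, and transfer to the distance via $\operatorname{dist}^r\le B_\Pi/c$. Along the way you spell out what the paper only cites or leaves implicit: the deterministic gains, the product estimate and early indices in the Chung step, the tube caveat, and the explicit consistency error $\E|z_k-B_\Pi(x_k)|$.
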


\begin{proof}
Taking total expectations in the dissipativity inequality and using $\E\varphi(z_k)\ge \mu\E z_k$ gives a scalar comparison recursion. Constant gain yields the affine contraction and its geometric series. The diminishing-gain statement is the standard Chung comparison for $m_{k+1}\le(1-c\mu/(k+1))m_k+O(k^{-(1+s)})$. Coercivity gives $\operatorname{dist}(q_\Pi(x_k),\cM_\Pi)^r\le B_\Pi(x_k)/c$.
\end{proof}

\begin{proposition}[Input-to-state leakage floor]
Assume $a_k\ge a>0$, $\chi_k\le\bar\chi$, and $\varphi$ is continuous, strictly increasing, positive definite, and convex. Define
\[
    r^\star=\varphi^{-1}(\bar\chi/a).
\]
Then the deterministic envelope $m_k=\E z_k$ satisfies
\[
    \limsup_{k\to\infty}m_k\le r^\star.
\]
If $B_\Pi$ is coercive of order $r$, then
\[
    \limsup_{k\to\infty}\E\operatorname{dist}(q_\Pi(x_k),\cM_\Pi)^r
    \le \frac{1}{c}\,\varphi^{-1}(\bar\chi/a).
\]
\end{proposition}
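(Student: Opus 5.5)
Take total expectations in the dissipativity inequality and work only with the deterministic envelope $m_k=\E z_k$. Since $a_k\ge a$ is deterministic and $\varphi$ is convex, Jensen gives $\E\varphi(z_k)\ge\varphi(m_k)$. Hence
\[
m_{k+1}\le m_k-a_k\varphi(m_k)+\bar\chi\le m_k-a\varphi(m_k)+\bar\chi .
\]
The second inequality uses $\varphi\ge0$ and $a_k\ge a$. If the gains $a_k$ are random but predictable, I would instead require them to be bounded below by $a$ pathwise and use $\E[a_k\varphi(z_k)]\ge a\,\E\varphi(z_k)$. This reduces everything to the scalar comparison map $F(m)=m-a\varphi(m)+\bar\chi$.

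Next I analyze the recursion $m_{k+1}\le F(m_k)$ around $r^\star=\varphi^{-1}(\bar\chi/a)$. Here $\varphi$ is strictly increasing and continuous, so $\varphi^{-1}$ is defined on the range of $\varphi$; if $\bar\chi/a$ exceeds that range, I set $r^\star=\infty$ and the claim is trivial. For every $\epsilon>0$, the level $\gamma_\epsilon:=a\varphi(r^\star+\epsilon)-\bar\chi$ is strictly positive by strict monotonicity. Whenever $m_k\ge r^\star+\epsilon$, we get $m_{k+1}\le m_k-\gamma_\epsilon$. So the envelope cannot stay above $r^\star+\epsilon$ forever, since it would decrease linearly while remaining nonnegative. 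It therefore enters $[0,r^\star+\epsilon)$ after finitely many steps.

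The main obstacle is showing the envelope does not later escape far above $r^\star$. A step from below $r^\star+\epsilon$ can overshoot, because $F$ need not be monotone when $a\varphi'$ is large. The jump is controlled by $m_{k+1}\le m_k+\bar\chi-a\varphi(m_k)$. Below $r^\star$ the increment is at most $\bar\chi$, and between $r^\star$ and $r^\star+\epsilon$ it is at most $\bar\chi-a\varphi(m_k)\le0$. So the sequence is eventually confined to $[0,r^\star+\epsilon+\bar\chi]$, and after each excursion above $r^\star+\epsilon$ it decreases again.

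This leaves $\limsup m_k\le r^\star+\bar\chi$ in general. To sharpen it to exactly $r^\star$, I would use the implicit reading of the recursion in which the dissipation acts at $m_{k+1}$, or equivalently assume the step condition $a\varphi(r)\le r$ near $r^\star$. That condition is the analogue of $0<a\mu<1$ in the boundary-rates theorem. Under it, $F$ is nondecreasing on $[0,\infty)$ and $F(r^\star)=r^\star$. Then $m_k\le r^\star+\epsilon$ implies $m_{k+1}\le F(r^\star+\epsilon)\le r^\star+\epsilon$, so the interval is invariant. Letting $\epsilon\downarrow0$ gives $\limsup m_k\le r^\star$.

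In the linear case $\varphi(r)=\mu r$, this recovers $\bar\chi/(a\mu)$ from the boundary-rates theorem, which is a consistency check. I would state the monotonicity of $F$ explicitly as the step-size condition being used.

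Finally, the coercivity transfer is immediate. On the tube, $\operatorname{dist}(q_\Pi(x_k),\cM_\Pi)^r\le B_\Pi(x_k)/c$. Taking expectations, and using boundary consistency to replace $\E B_\Pi(x_k)$ by $\E z_k$ up to a vanishing error, gives $\limsup_k\E\operatorname{dist}(q_\Pi(x_k),\cM_\Pi)^r\le r^\star/c$. The vanishing-error step uses uniform integrability, which follows from the compact or sublevel-set standing assumption.
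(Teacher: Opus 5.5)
Your reduction is the paper's own: Jensen for convex $\varphi$, then $a_k\ge a$ and $\chi_k\le\bar\chi$, giving $m_{k+1}\le F(m_k)$ with $F(m)=m-a\varphi(m)+\bar\chi$, and coercivity for the distance claim. The paper finishes in one line: ``above $r^\star$ the right-hand side has negative drift, so the comparison recursion cannot have an asymptotic envelope above $r^\star$.'' Your overshoot objection is exactly where that line fails, and in fact the proposition is false as stated. Take $\varphi(r)=2r$, $a_k\equiv a=1$ and $\chi_k\equiv\bar\chi>0$, so that $r^\star=\bar\chi/2$. Take the deterministic sequence $z_{2j}=0$, $z_{2j+1}=\bar\chi$. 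Then $z_{2j+1}=\bar\chi\le 0-0+\bar\chi$ and $z_{2j+2}=0\le\bar\chi-2\bar\chi+\bar\chi$, so every hypothesis holds, yet $\limsup_k m_k=\bar\chi=2r^\star$. The distance claim fails the same way when $z_k=B_\Pi(x_k)=\operatorname{dist}(q_\Pi(x_k),\cM_\Pi)$ with $c=r=1$. So you were right to refuse the unconditional conclusion. What the recursion actually gives is $\limsup_k m_k\le M:=\max_{0\le m\le r^\star}F(m)$. A step from $[0,r^\star]$ lands at most at $M\ge F(r^\star)=r^\star$, and a step from above $r^\star$ strictly decreases. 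Your $\gamma_\epsilon$ argument then forces entry into the invariant set $[0,\max(M,r^\star+\epsilon)]$. This sharpens your $r^\star+\bar\chi$, and it is attained in the example above.

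Your repair has the right shape but states the wrong condition. What is needed is that $F$ be nondecreasing on $[0,r^\star]$, i.e.\ that $a\varphi$ be $1$-Lipschitz there; then $M=F(r^\star)=r^\star$. For convex $\varphi$ this is $a\,\varphi'_{-}(r^\star)\le 1$ (left derivative), which in the linear case reduces to $a\mu\le1$, the analogue of the boundary-rates step condition. Your condition ``$a\varphi(r)\le r$ near $r^\star$'' coincides with this only for linear $\varphi$. For general convex $\varphi$ it is strictly weaker and does not suffice. With $\varphi(r)=r^2$, $a=1$, $\bar\chi=0.81$, one has $r^\star=0.9$ and $a\varphi(r)\le r$ for all $r\le 1$. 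Yet $z_k$ alternating between $0.5$ and $F(0.5)=1.06$ satisfies the recursion, because $F(1.06)=0.7464\ge 0.5$, so $\limsup_k m_k=1.06>r^\star$. Your condition also does not make $F$ nondecreasing on $[0,\infty)$; that is not needed anyway, and it fails for $\varphi(r)=r^2$. With monotonicity of $F$ on $[0,r^\star]$ stated as the hypothesis (or under your implicit-step reading), your argument is complete. Your coercivity transfer is fine and more careful than the paper's, which silently identifies $z_k$ with $B_\Pi(x_k)$. Only $\E(B_\Pi(x_k)-z_k)^+\to 0$ is needed, and this follows by dominated convergence because $B_\Pi(x_k)$ is bounded under either standing assumption (compact state or sublevel set).
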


\begin{proof}
Jensen's inequality gives $\E\varphi(z_k)\ge\varphi(\E z_k)$, hence
\[
    m_{k+1}\le m_k-a\varphi(m_k)+\bar\chi.
\]
Above $r^\star$, the right-hand side has negative drift. The comparison recursion therefore cannot have an asymptotic envelope above $r^\star$. The distance statement again follows by coercivity.
\end{proof}

\begin{figure}[!htbp]
\centering
\includegraphics[width=0.72\linewidth]{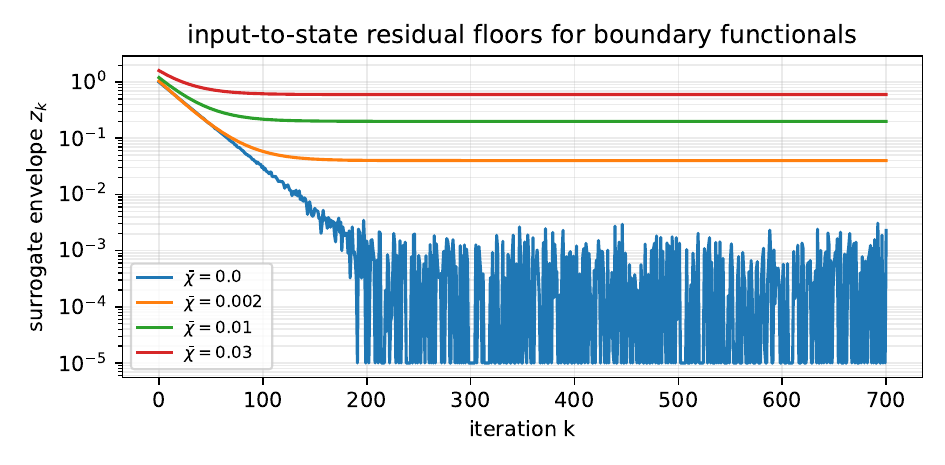}
\caption{Persistent residuals enter as leakage floors. The scalar comparison recursion gives the input-to-state behavior: summable error collapses to zero, bounded error leaves a quantified floor.}
\label{fig:residual-floor}
\end{figure}
\FloatBarrier

\begin{definition}[Augmented security gate]
Let
\[
    \xi_k=(x_k,z_k,r_k,c_k,\ell_k),
\]
where $z_k$ is the quotient boundary surrogate, $r_k\in[0,1]$ is reliability, $c_k$ is resource or feasibility slack, and $\ell_k$ records recurrence or repeated terminal rejections. For weights $\lambda,\mu,\nu\ge0$, define
\[
    G_k(\xi_k)=z_k+\lambda(1-r_k)^+ +\mu(-c_k)^+ +\nu R(\ell_k),\qquad R\ge0.
\]
\end{definition}

\begin{proposition}[Augmented gate reduction]
If $G_k$ is adapted and satisfies
\[
    \E[G_{k+1}\mid\cF_k^y]\le G_k-a_k\varphi(G_k)+\chi_k,
\]
with $\sum_k a_k=\infty$ and $\sum_k\chi_k<\infty$, then $G_k\to0$ almost surely. In particular, if $0\le z_k\le G_k$ and $z_k$ is boundary-consistent, the quotient trajectory approaches $\cM_\Pi$.
\end{proposition}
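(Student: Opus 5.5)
The plan is to reduce the statement directly to the lemma ``Persistent drift forces decay to zero'' applied to the scalar process $(G_k)$, and then to transfer convergence to $z_k$ and to the quotient distance through the boundary-bounded quotient theorem. The first step is to check that $G_k$ meets that lemma's standing hypotheses. $G_k$ is adapted by assumption. It is nonnegative because each summand is: $z_k\ge0$ as a boundary surrogate, the positive parts are nonnegative, the weights $\lambda,\mu,\nu\ge0$, and $R\ge0$. Integrability I would take either as part of ``adapted scalar gate'' under the section's standing assumptions (nonnegative, integrable, adapted boundary surrogates), or derive it inductively from the drift inequality. Indeed, $\E G_{k+1}\le \E G_k+\E\chi_k$ once $G_0$ is integrable, since $a_k\varphi(G_k)\ge0$. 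Here $\varphi$ is the continuous positive definite function from boundary dissipativity.

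With these hypotheses in place, the lemma gives $G_k\to0$ almost surely. For completeness the argument is the Robbins--Siegmund one. $G_k$ converges almost surely to a finite limit $G_\infty$, and $\sum_k a_k\varphi(G_k)<\infty$. If $G_\infty>0$ on a set of positive probability, then $\varphi(G_k)$ is eventually bounded below there by continuity and positive definiteness. That contradicts $\sum_k a_k=\infty$.

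For the second claim, the sandwich $0\le z_k\le G_k$ gives $z_k\to0$ almost surely. Boundary consistency, $|z_k-B_\Pi(x_k)|\to0$, then gives $B_\Pi(x_k)\to0$. The separation property of $B_\Pi$ (the positive infimum $b_\eps$ off the $\eps$-neighbourhood) forces $\operatorname{dist}(q_\Pi(x_k),\cM_\Pi)\to0$ by the same subsequence contradiction used in the boundary-bounded quotient theorem. This requires $B_\Pi$ to be a separating quotient boundary gap, which I would state as inherited from the section's setting.

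The main obstacle is hypothesis bookkeeping rather than analysis. The lemma as stated needs $\sum_k\chi_k<\infty$ almost surely, which is given here directly. The delicate points are integrability of $G_k$ (the terms $(-c_k)^+$ and $R(\ell_k)$ are not a priori integrable) and the implicit separating assumption on $B_\Pi$. I would handle the first by the inductive expectation bound above, assuming $\E G_0<\infty$ and $\sum_k\E\chi_k<\infty$ as in the boundary dissipativity definition. Failing that, I would use the conditional (localized) version of Robbins--Siegmund, which only needs $G_k$ finite almost surely.
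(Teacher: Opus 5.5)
Your proposal is correct and follows the paper's proof: apply the persistent-drift lemma to $G_k$, which is nonnegative because each summand is, and then pass from $0\le z_k\le G_k$ through boundary consistency and the separating gap, as in the boundary-bounded quotient theorem. The paper's proof is one line and leaves implicit the integrability of $G_k$ and the requirement that $B_\Pi$ be a separating quotient boundary gap; your bookkeeping on both points supplies those details.
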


\begin{proof}
Apply the persistent-drift lemma to $G_k$. Since every summand in $G_k$ is nonnegative, $G_k\to0$ forces the leakage surrogate, reliability deficit, resource violation, and recurrence penalty to vanish together.
\end{proof}

\section{Oracle-Free Sensor Control and Safety Filters}

The boundary surrogate can be used as a control objective. A defender may choose sensors, telemetry, padding, throttling, masking, or probing actions. The action is admissible only if it respects resource constraints and does not create a new unsafe channel. The resulting synthesis problem is a constrained control problem over observer channels.

\begin{definition}[Preferred boundary action]
A preferred action $u_k^{\rm pref}$ is boundary-admissible when it is $\cF_k^y$-measurable and is chosen to decrease the boundary gate using only observed quantities. It may use randomized excitation, but it may not read $\operatorname{dist}(q_\Pi(x_k),\cM_\Pi)$, target membership, or an exact leakage objective.
\end{definition}

\begin{proposition}[One-step realization]
Suppose that for each $k$ the controller can choose an admissible action such that the next gate satisfies
\[
    \E[G_{k+1}\mid\cF_k^y]\le G_k-a_k\varphi(G_k)+\chi_k.
\]
Then the closed loop satisfies the augmented gate theorem, hence the quotient boundary functional converges to zero under summable residual.
\end{proposition}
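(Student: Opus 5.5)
The claim reduces to the Augmented gate reduction proposition, which in turn rests on the persistent-drift (Robbins--Siegmund) lemma. The plan is to build the closed loop one step at a time and check that its gate process meets the hypotheses of that proposition.

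First, I would construct the closed-loop policy recursively. At each $k$, the hypothesis provides an admissible action $u_k$ achieving the one-step inequality. I would choose it via a measurable selection, so that $u_k$ is $\cF_k^y$-measurable, that is, predictable with respect to the observation filtration. Admissibility (in the sense of the preferred boundary action definition) guarantees that the selection never reads $\operatorname{dist}(q_\Pi(x_k),\cM_\Pi)$, target membership, or an exact leakage value. The resulting gate process $G_k$ is then adapted, nonnegative as a sum of nonnegative terms, and integrable by the standing assumptions of the boundary-functional section.

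Second, I would confirm that the inequalities concatenate across $k$: each step's inequality holds conditionally on $\cF_k^y$, and the recursive construction fixes all earlier actions. Together with $\sum_k a_k=\infty$ and summable $\chi_k$ (the ``summable residual'' clause), this makes the augmented gate reduction proposition applicable, giving $G_k\to0$ almost surely.

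Third, since $0\le z_k\le G_k$, the boundary surrogate also tends to $0$. Boundary consistency then gives $B_\Pi(x_k)\to0$, and the boundary-bounded quotient security theorem (through the separation constant $b_\eps$) yields $\operatorname{dist}(q_\Pi(x_k),\cM_\Pi)\to0$.

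I expect the main obstacle to be the first step: making the step-by-step selection measurable and predictable, so that the conditional inequalities compose into a single adapted supermartingale-type process rather than a family of inequalities tied to different policies. My first approach is to assume the admissible action set is compact or countable and invoke a measurable-selection argument. Failing that, I would pass to $\epsilon$-optimal selections and absorb the slack $\epsilon_k$, with $\sum_k\epsilon_k<\infty$, into $\chi_k$.
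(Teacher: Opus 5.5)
Your proposal is correct and takes the paper's route: the paper's proof just observes that the displayed one-step inequality is exactly the dissipativity hypothesis of the augmented gate reduction proposition and applies it, which is your second and third steps. The measurable-selection obstacle you anticipate does not arise, because under the preferred-boundary-action definition an admissible action is already $\cF_k^y$-measurable, so the hypothesis itself supplies the predictable closed loop and no compactness, countability, or $\epsilon$-slack argument is needed.
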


\begin{proof}
The displayed one-step inequality is exactly the dissipativity hypothesis for $G_k$. The previous proposition applies.
\end{proof}

For continuous-time realization, let
\[
    \dot x=f(x)+g(x)u
\]
and let $b(x,t)\ge0$ define the admissible safe set. A time-varying control-barrier condition has the form
\[
    L_f b(x,t)+L_g b(x,t)u+\partial_t b(x,t)\ge -\alpha_b(b(x,t)).
\]
Capability constraints are written as affine inequalities
\[
    c_j(x,t)+d_j(x,t)^\top u\ge0,
    \qquad j\in J.
\]
The realized action is the projection
\[
    u^\circ(x,t)=\argmin_{u}\;\frac12\|u-u^{\rm pref}(x,t)\|^2
\]
subject to the barrier and capability inequalities.

\begin{theorem}[Safety-compatible quotient surrogate]
Assume the feasible set of the projection above is nonempty with a uniform Slater margin on compact subsets. Then the projected controller is well-defined and preserves the safe set $\{b\ge0\}$. If the boundary-gate dynamics under $u^{\rm pref}$ satisfy $\dot G\le-\varphi(G)+\epsilon(t)$ and the projection correction $\Delta=u^\circ-u^{\rm pref}$ enters as
\[
    \dot G\le-\varphi(G)+\epsilon(t)+L_G\|\Delta(t)\|,
\]
then:
\begin{enumerate}[leftmargin=2em]
    \item if $\epsilon\in L^1$ and $\int_0^\infty\|\Delta(t)\|dt<\infty$, then $G(t)\to0$;
    \item if $\|\epsilon\|_\infty\le\bar\epsilon$ and $\|\Delta\|_\infty\le\bar\Delta$, then
    \[
        \limsup_{t\to\infty}G(t)\le \varphi^{-1}(\bar\epsilon+L_G\bar\Delta).
    \]
\end{enumerate}
Thus an interior safety margin gives exact convergence of the surrogate, while a boundary-active safety filter gives a quantified residual.
\end{theorem}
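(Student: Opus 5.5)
I would split the argument into three parts: well-posedness and forward invariance of the projected controller, then the two asymptotic claims for the scalar gate, both proved by a deterministic comparison argument that mirrors the discrete persistent-drift lemma and the input-to-state leakage-floor proposition.

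\textbf{Well-posedness and safety.} The projection is a strictly convex quadratic program in $u$ whose constraints are affine in $u$: the barrier inequality and the capability inequalities $c_j+d_j^\top u\ge0$. A nonempty feasible set makes the minimizer exist and be unique. To get regularity in $(x,t)$, I would use the uniform Slater margin on compact sets, which keeps the multipliers bounded and the active constraints nondegenerate. With that, the standard parametric-QP result gives local Lipschitz continuity of $u^\circ$, provided $f,g,b,c_j,d_j,u^{\rm pref}$ are locally Lipschitz; I would state this as an implicit standing assumption. Closed-loop solutions therefore exist locally. Along them, $\dot b\ge-\alpha_b(b)$. Taking $\alpha_b$ to be an extended class-$\mathcal K$ function and applying the comparison lemma shows that $b(x(t),t)\ge0$ whenever $b(x(0),0)\ge0$, so $\{b\ge0\}$ is forward invariant.

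\textbf{Claim 1.} Set $w(t)=\epsilon(t)+L_G\|\Delta(t)\|$, so that $w\ge0$, $w\in L^1$, and $\dot G\le-\varphi(G)+w$ with $G\ge0$. Integrating gives
\[
G(t)+\int_0^t\varphi(G)\,ds\le G(0)+\|w\|_{L^1}.
\]
Hence $G$ is bounded and $\int_0^\infty\varphi(G)<\infty$.

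Next I would show that $G(t)$ converges. The function $G(t)-\int_0^t w$ is nonincreasing and bounded below, so it has a limit; since $\int_0^t w$ also converges, $G(t)$ converges to some $G_\infty$. If $G_\infty>0$, then $\varphi(G)$ is eventually bounded below by $\min\varphi$ over $[G_\infty/2,\sup G]$, which is positive because $\varphi$ is continuous and positive definite. That lower bound contradicts integrability of $\varphi(G)$, so $G_\infty=0$. This is the continuous-time analogue of the persistent-drift lemma with $a\equiv1$.

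\textbf{Claim 2.} Now $w\le\bar w:=\bar\epsilon+L_G\bar\Delta$, and I would take $\varphi$ strictly increasing as in the leakage-floor proposition. For any $\eta>0$, set $r_\eta=\varphi^{-1}(\bar w)+\eta$. Whenever $G>r_\eta$,
\[
\dot G\le-\varphi(G)+\bar w\le-\bigl(\varphi(r_\eta)-\bar w\bigr)<0.
\]
This uniform negative drift means $G$ reaches $[0,r_\eta]$ in finite time, and it cannot exit afterwards because at the boundary value $r_\eta$ the right-hand side is negative. Therefore $\limsup G\le r_\eta$, and letting $\eta\to0$ gives the stated bound $\varphi^{-1}(\bar\epsilon+L_G\bar\Delta)$. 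The final sentence of the theorem follows by reading the two cases. An interior safety margin makes the barrier constraint eventually inactive, so $\Delta\to0$ and is integrable under the standing assumptions; a boundary-active filter only gives $\bar\Delta$.

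\textbf{Main obstacle.} The hardest step is the well-posedness part: showing $u^\circ$ is locally Lipschitz and that solutions exist globally. Global existence needs $G$ bounded, which the integral estimate supplies, and $x$ confined to a compact set, which I would take from the paper's standing compactness hypothesis. The two asymptotic claims are routine comparison arguments once the differential inequality holds along solutions in the absolutely continuous sense.
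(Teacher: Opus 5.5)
Your proposal is correct. For well-posedness and safety you use the same ingredients as the paper: the unique Euclidean projection onto a nonempty closed convex polyhedron, then the comparison lemma for $\dot b\ge-\alpha_b(b)$. For the two asymptotic claims you take a different and more self-contained route.

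\textbf{Item 1.} The paper integrates the inequality to get $\int_0^\infty\varphi(G)\,dt<\infty$. It then invokes Barbalat's lemma ``under uniform continuity''. That condition is neither assumed in the statement nor verified. It is also not evident a priori, because $\dot G$ is controlled only from above and only by an integrable function.

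You instead observe that $G(t)-\int_0^t w$ is nonincreasing and bounded below, which gives convergence of $G$ directly. Integrability of $\varphi(G)$ and positivity of $\varphi$ away from $0$ then force the limit to be $0$. This is the continuous-time analogue of the Robbins--Siegmund argument the paper itself uses in the discrete persistent-drift lemma. It works under the stated hypotheses, with $G\ge0$ and absolute continuity implicit, and it makes Barbalat unnecessary. In fact, a convergent $G$ supplies the missing uniform continuity a posteriori.

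\textbf{Item 2.} The paper only cites ``the scalar comparison equation''. Your sublevel-set argument with $r_\eta=\varphi^{-1}(\bar\epsilon+L_G\bar\Delta)+\eta$ and $\eta\to0$ is the explicit version. It correctly relies on the strict monotonicity of $\varphi$ that the notation $\varphi^{-1}$ presupposes.

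The paper's version buys brevity; yours closes under the theorem's actual hypotheses.

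\textbf{Two caveats where you go beyond the paper.}

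First, a uniform Slater margin gives MFCQ and bounded multipliers. It does not make the active constraint gradients linearly independent. So local Lipschitz continuity of $u^\circ$, which you need for closed-loop existence and which the paper's proof silently assumes, should be declared as a standing assumption. Alternatively, derive it from an added LICQ or constant-rank condition; do not credit it to the Slater margin.

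Second, your reading that an interior margin makes $\Delta$ eventually zero requires $u^{\rm pref}$ itself to be eventually feasible, not merely $b$ bounded away from $0$. The paper's closing sentence is likewise an informal gloss rather than a proved consequence.
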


\begin{proof}
The projection is a Euclidean projection onto a nonempty closed convex polyhedron, hence exists and is unique. Since $u^\circ$ satisfies the barrier inequality, the standard comparison argument for $\dot b\ge-\alpha_b(b)$ keeps $b$ nonnegative. The gate estimates follow by integrating the displayed differential inequality. In the $L^1$ case, Barbalat's lemma gives $G(t)\to0$ under uniform continuity. In the bounded-input case, the scalar comparison equation yields the stated asymptotic gain.
\end{proof}

\subsection{When projection preserves descent and when it creates a floor}

Let $\tilde x_{k+1}$ be the boundary-descending nominal update and let $x_{k+1}=\tilde x_{k+1}+\Delta_k$ be the projected safe update. If the quotient gap $B$ is $L_B$-Lipschitz on the safe tube, then
\[
B(x_{k+1})\le B(\tilde x_{k+1})+L_B\|\Delta_k\|.
\]
Consequently a nominal drift inequality becomes
\[
\E[Z_{k+1}\mid\cF_k]\le Z_k-a_k\phi(Z_k)+\chi_k+L_B\E[\|\Delta_k\|\mid\cF_k].
\]
The projection is harmless when $\sum_k\E\|\Delta_k\|<\infty$; it creates an ISS floor when $\limsup_k\E\|\Delta_k\|\le \bar\Delta$.

\begin{theorem}[Projected-surrogate dichotomy]
Assume the measurability, compactness, and stability hypotheses of the boundary theorem, and assume the safe projection is feasible at every step. If $\sum_k\E\|\Delta_k\|<\infty$, then the projected loop has the same almost-sure quotient convergence as the nominal loop. If $a_k\equiv a>0$, $\phi(r)\ge\mu r$, and $\limsup_k\E\|\Delta_k\|\le\bar\Delta$, then
\[
\limsup_k \E Z_k\le \frac{\bar\chi+L_B\bar\Delta}{a\mu}.
\]
Thus projection either disappears into the summable residual or becomes a measurable leakage-control floor.
\end{theorem}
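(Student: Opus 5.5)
The proof combines the Lipschitz perturbation of the nominal drift inequality, recorded just before the statement, with two earlier results: the persistent-drift lemma for the summable case and the constant-gain comparison from the boundary-rates theorem for the floor case. Since $B$ is $L_B$-Lipschitz on the safe tube and $x_{k+1}=\tilde x_{k+1}+\Delta_k$, we have $B(x_{k+1})\le B(\tilde x_{k+1})+L_B\|\Delta_k\|$. Transferring this to the surrogate $Z_k$ through boundary consistency and the nominal drift inequality gives the perturbed recursion
\[
\E[Z_{k+1}\mid\cF_k]\le Z_k-a_k\phi(Z_k)+\tilde\chi_k,\qquad \tilde\chi_k:=\chi_k+L_B\E[\|\Delta_k\|\mid\cF_k].
\]
Both regimes then reduce to statements already proved for the unprojected loop, with $\chi_k$ replaced by $\tilde\chi_k$. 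Feasibility of the projection at every step makes $\Delta_k$ well-defined and $\cF_{k+1}$-measurable, so $\tilde\chi_k$ is adapted and nonnegative.

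\emph{Summable case.} Assume $\sum_k\E\|\Delta_k\|<\infty$. By monotone convergence, $\E\sum_k\E[\|\Delta_k\|\mid\cF_k]=\sum_k\E\|\Delta_k\|<\infty$, so $\sum_k\E[\|\Delta_k\|\mid\cF_k]<\infty$ almost surely. The boundary dissipativity hypothesis gives $\sum_k\E\chi_k<\infty$, hence $\sum_k\chi_k<\infty$ almost surely, and therefore $\sum_k\tilde\chi_k<\infty$ almost surely. The persistent-drift (Robbins--Siegmund) lemma then yields $Z_k\to0$ almost surely. The boundary-bounded quotient security theorem, which uses separation of $B_\Pi$ and boundary consistency, gives $\operatorname{dist}(q_\Pi(x_k),\cM_\Pi)\to0$ almost surely, which is the same conclusion as for the nominal loop.

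\emph{Floor case.} Assume $a_k\equiv a>0$ and $\phi(r)\ge\mu r$. Taking total expectations gives
\[
m_{k+1}\le(1-a\mu)m_k+\bar\chi+L_B\E\|\Delta_k\|,\qquad m_k:=\E Z_k,
\]
where I read $\bar\chi$ as the uniform bound $\chi_k\le\bar\chi$ from the rates theorem, and I assume $0<a\mu<1$ as in that theorem. Fix $\epsilon>0$ and choose $k_0$ such that $\E\|\Delta_k\|\le\bar\Delta+\epsilon$ for all $k\ge k_0$. Unrolling the recursion from $k_0$ gives
\[
m_k\le(1-a\mu)^{k-k_0}m_{k_0}+\frac{\bar\chi+L_B(\bar\Delta+\epsilon)}{a\mu}.
\]
Letting $k\to\infty$ and then $\epsilon\to0$ yields $\limsup_k\E Z_k\le(\bar\chi+L_B\bar\Delta)/(a\mu)$. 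If $a\mu\ge1$, the recursion instead gives $m_{k+1}\le\bar\chi+L_B\E\|\Delta_k\|$ directly, and this is already below the claimed bound.

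\emph{Main obstacle.} The delicate step is justifying the transfer from $B(x_{k+1})$ to the surrogate $Z_{k+1}$: the Lipschitz bound is stated for the gap $B$, while the drift is stated for $Z$. I would handle this with the hypothesis that the surrogate is a Lipschitz image of $B$ along the projected step, or else by absorbing the consistency error $|Z_k-B(x_k)|$ into $\chi_k$, where it is summable in the first regime and contributes to $\bar\chi$ in the second. A secondary technical point is the need for $m_{k_0}<\infty$, which follows from integrability of $Z_k$.
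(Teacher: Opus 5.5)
Your main line is the paper's proof written out. The paper likewise inserts the Lipschitz correction $L_B\E[\|\Delta_k\|\mid\cF_k]$ into the scalar drift, applies Robbins--Siegmund with the enlarged summable residual, and runs the affine ISS recursion with residual $\bar\chi+L_B\bar\Delta$. Your conditional-to-almost-sure summability step and the $\epsilon$-unrolling are correct for $0<a\mu<1$.

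The step that fails is your closing remark on $a\mu\ge1$. From $m_{k+1}\le\bar\chi+L_B\E\|\Delta_k\|$ you get only $\limsup_k m_k\le\bar\chi+L_B\bar\Delta$. For $a\mu>1$ this is \emph{larger} than the claimed floor $(\bar\chi+L_B\bar\Delta)/(a\mu)$, so the comparison runs the wrong way. It cannot be rescued from the comparison recursion either. Take $c=\bar\chi+L_B\bar\Delta$ and $1<a\mu\le2$. The alternating sequence $0,c,0,c,\dots$ satisfies $m_{k+1}\le(1-a\mu)m_k+c$ but has $\limsup_k m_k=c>c/(a\mu)$, because the negative coefficient $1-a\mu$ destroys monotone comparison. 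Keep the restriction $0<a\mu\le1$; at $a\mu=1$ your remark gives exactly the bound. The paper's statement does not display this restriction, but its affine ISS recursion tacitly inherits $0<a\mu<1$ from the boundary-rates theorem, so nothing is lost relative to the paper.

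On your ``main obstacle'': the paper does not derive the transfer from $B(x_{k+1})$ to $Z_{k+1}$ either. It asserts the perturbed drift inequality in the paragraph before the theorem. Taking that display as the hypothesis, as your main line does, is the faithful reading. Your fallback of absorbing $|Z_k-B(x_k)|$ into $\chi_k$ would need more than boundary consistency. Consistency gives only almost-sure convergence of $|Z_k-B_\Pi(x_k)|$ to zero. It does not give summability (needed in the first regime) or convergence of its expectation (needed in the second).
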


\begin{proof}
Insert the Lipschitz correction into the scalar drift. The summable case is Robbins--Siegmund with enlarged summable residual. The persistent case is the affine ISS recursion with residual $\bar\chi+L_B\bar\Delta$.
\end{proof}

\section{Orientation-Free Excitation and Hidden-Manifold Example}

A random probe may help discover a boundary, but its sign is not the functional. The relevant invariant is the scalar dissipativity inequality. This distinction matters in security because a probe direction can be attacker-controllable, biased, or meaningless under quotient collapse.

\begin{definition}[Orientation-free excitation]
An adapted perturbation $\eta_k$ is orientation-free when
\[
    \E[\eta_k\mid\cF_k^y]=0,
    \qquad
    \E[\eta_k\eta_k^\top\mid\cF_k^y]=\sigma_k^2 I.
\]
\end{definition}

\begin{lemma}[Directional cancellation]
For every square-integrable $\cF_k^y$-measurable vector $g_k$,
\[
    \E[\langle g_k,\eta_k\rangle\mid\cF_k^y]=0.
\]
\end{lemma}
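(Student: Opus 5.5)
The plan is to "pull the measurable factor out" of the conditional expectation, after making sure every expression involved is integrable. The only hypotheses needed are the first-moment part of orientation-free excitation, $\E[\eta_k\mid\cF_k^y]=0$, and the square-integrability of $g_k$; the covariance condition fixes the second moment of $\eta_k$, which is what integrability requires.

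First I will settle integrability. The condition $\E[\eta_k\eta_k^\top\mid\cF_k^y]=\sigma_k^2 I$ is only meaningful if $\eta_k$ is square-integrable, or at least conditionally square-integrable. I will take this as implicit in the definition, and, if needed, assume $\sigma_k^2$ is integrable (for instance bounded), so that $\E\|\eta_k\|^2=d\,\E\sigma_k^2<\infty$. Cauchy--Schwarz then gives $\E|\langle g_k,\eta_k\rangle|\le (\E\|g_k\|^2)^{1/2}(\E\|\eta_k\|^2)^{1/2}<\infty$. Hence the conditional expectation on the left-hand side is well defined.

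The main step is to expand coordinatewise: $\langle g_k,\eta_k\rangle=\sum_i g_k^{(i)}\eta_k^{(i)}$. Each coordinate $g_k^{(i)}$ is $\cF_k^y$-measurable, and each product $g_k^{(i)}\eta_k^{(i)}$ is integrable by the bound above. By linearity and the "taking out what is known" property, $\E[g_k^{(i)}\eta_k^{(i)}\mid\cF_k^y]=g_k^{(i)}\E[\eta_k^{(i)}\mid\cF_k^y]=0$. Summing over $i$ gives the claim.

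The only delicate point is the pull-out step. The standard result is usually stated for bounded measurable factors, or for factors for which the product is integrable. I have verified the product integrability, so that version applies. As a fallback, I would truncate to $g_k\ind\{\|g_k\|\le n\}$ and pass to the limit by conditional dominated convergence, using the dominating function $\|g_k\|\,\|\eta_k\|$. If the paper only intends conditional square-integrability of $\eta_k$, the same argument goes through with conditional Cauchy--Schwarz, which gives almost-sure finiteness of $\E[|\langle g_k,\eta_k\rangle|\mid\cF_k^y]$.
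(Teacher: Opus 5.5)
Your proof is correct and follows the paper's own argument: pull the $\cF_k^y$-measurable factor $g_k$ out of the conditional expectation and use $\E[\eta_k\mid\cF_k^y]=0$. The paper does this in one line ("by conditional linearity"); your Cauchy--Schwarz integrability check, and its conditional variant, supply the justification for the pull-out step that the paper leaves implicit.
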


\begin{proof}
By conditional linearity,
$\E[\langle g_k,\eta_k\rangle\mid\cF_k^y]=\langle g_k,\E[\eta_k\mid\cF_k^y]\rangle=0$.
\end{proof}

\begin{lemma}[Second-order radialization]
Let $h:\R^d\to\R$ be twice continuously differentiable near $x$. For small $\sigma>0$ and an orientation-free perturbation $\eta$ with finite third moment,
\[
    \E[h(x+\sigma\eta)-h(x)\mid\cF^y]
    =\frac{\sigma^2}{2}\operatorname{tr}\!\left(\nabla^2h(x)\E[\eta\eta^\top\mid\cF^y]\right)+O(\sigma^3).
\]
In particular, balanced perturbation contributes no signed first-order boundary invariant; it reveals curvature only at second order.
\end{lemma}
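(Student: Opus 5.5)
The plan is a second-order Taylor expansion with integral (or Lagrange) remainder, followed by a term-by-term conditional expectation that uses the orientation-free moment conditions. Work conditionally on $\cF^y$, so that $x$ and $\sigma$ are fixed. Write
\[
h(x+\sigma\eta)-h(x)=\sigma\langle\nabla h(x),\eta\rangle+\frac{\sigma^2}{2}\eta^\top\nabla^2h(x)\eta+R_\sigma(\eta),
\]
where the remainder $R_\sigma(\eta)$ is controlled by the modulus of continuity of $\nabla^2h$ on a neighbourhood of $x$.

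The three terms are handled in order.

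\emph{First-order term.} Here $g=\nabla h(x)$ is $\cF^y$-measurable, so the Directional cancellation lemma kills it: its conditional expectation is zero. This gives the ``no signed first-order invariant'' clause.

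\emph{Quadratic term.} Use the trace identity $\eta^\top H\eta=\operatorname{tr}(H\eta\eta^\top)$ and linearity of conditional expectation. This yields
\[
\frac{\sigma^2}{2}\operatorname{tr}\!\bigl(\nabla^2h(x)\,\E[\eta\eta^\top\mid\cF^y]\bigr).
\]
Under the orientation-free normalization this equals $\frac{\sigma^2}{2}\sigma_k^2\Delta h(x)$, which is a radial (Laplacian) quantity. That is the sense of ``radialization.''

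\emph{Remainder.} This is the main obstacle. Mere $C^2$ regularity gives only $R_\sigma=o(\sigma^2)|\eta|^2$, not $O(\sigma^3)$. To get the stated $O(\sigma^3)$ rate I would assume, or extract from the ``near $x$'' hypothesis, that $\nabla^2h$ is locally Lipschitz with constant $L_2$ on a ball $B(x,r)$. I would then split on the event $\{\sigma|\eta|\le r\}$.

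On that event the Lagrange form gives $|R_\sigma|\le \frac{L_2}{6}\sigma^3|\eta|^3$. The finite third moment then bounds its conditional expectation by $O(\sigma^3)$.

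On the complement, I would need a growth assumption on $h$, for instance $h$ bounded or of at most cubic growth. Combined with Markov's inequality, $\Prb(|\eta|>r/\sigma\mid\cF^y)\le \sigma^3\E|\eta|^3/r^3$, this makes the contribution $O(\sigma^3)$ as well. If no such growth control is available, I would instead take $\eta$ bounded, as for Rademacher or spherical probes, in which case the complement is empty for small $\sigma$.

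I would state this regularity and integrability convention explicitly as the interpretation of the lemma's hypotheses, since the $O(\sigma^3)$ claim needs slightly more than $C^2$. Collecting the three pieces gives the displayed identity.
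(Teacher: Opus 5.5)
Your argument is the paper's argument: a second-order Taylor expansion, directional cancellation for the linear term, and the identity $\eta^\top H\eta=\operatorname{tr}(H\eta\eta^\top)$ for the quadratic term. The only difference is that the paper simply writes the remainder as $O(\sigma^3\|\eta\|^3)$, whereas you correctly note that this needs a locally Lipschitz Hessian plus growth or boundedness control beyond the stated $C^2$ hypothesis. Those added assumptions are genuinely needed: for example, $h(t)=|t|^{5/2}$ at $x=0$ with Rademacher $\eta$ gives an error of exact order $\sigma^{5/2}$, not $O(\sigma^3)$.
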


\begin{proof}
Taylor expansion gives $h(x+\sigma\eta)-h(x)=\sigma\langle\nabla h(x),\eta\rangle+\frac{\sigma^2}{2}\eta^\top\nabla^2h(x)\eta+O(\sigma^3\|\eta\|^3)$. Conditional expectation kills the first-order term by directional cancellation and converts the quadratic term to the trace expression.
\end{proof}

In a hidden-manifold instance, the raw state can circulate around a target boundary while the admissible observable is only a scalar residual. The relevant assertion is therefore the decay of the observed boundary statistic to its residual floor; a trajectory plot is secondary to the dissipativity inequality that follows.

\begin{example}[Boundary-observed side-channel hardening]
Let a device have internal state $x_k$ and an observer quotient induced by timing, power, and transcript channels. The secure quotient manifold $\cM_\Pi$ is the set of quotient classes whose leakage is below a budget. A randomized padding or scheduling action supplies orientation-free excitation; a monitor reads only $z_k$, an estimated boundary residual. If the defender proves
\[
    \E[z_{k+1}\mid\cF_k^y]\le z_k-a_k\varphi(z_k)+\chi_k,
\]
then the leakage surrogate converges. If the monitor has a persistent noise floor, the ISS proposition reports the remaining adversary advantage as a floor. The argument asserts convergence to the declared quotient boundary rather than equality of internal representatives.
\end{example}

\section{Finite-Model Proof Obligations}

For a finite observer-quotient game, a proof is determined by the raw experiments, the admitted observer family, the quotient relation, the transition kernel, the adversary class, and the boundary-to-advantage map. Changing any one of these objects changes the theorem. The transition kernel is compatible with the quotient or carries an explicit one-step defect; the environment's post-processing remains inside the closure of the observer family; and the scalar boundary functional upper-bounds the distinguishing advantage of every scheduled observer.

\begin{proposition}[Finite audit bound]
Let $X$ be finite and let $\sigma_\Pi(x)$ be the observer signature of $x$. Suppose that, for all states with the same signature,
\[
    \TV\bigl(q_{\Pi\#}K(x,\cdot),q_{\Pi\#}K(x',\cdot)\bigr)\le \delta_K,
    \qquad
    |B_\Pi(x)-B_\Pi(x')|\le \delta_B,
\]
and that every admitted post-processing map has representation defect at most $\delta_{post}$. For any adaptive environment running for $T$ continuation steps,
\[
    \Adv_T\le T(\delta_K+\delta_{post})+\rho\!\bigl(\E z_T+T\delta_B\bigr)+\delta_{sim}(T),
\]
where $z_T$ is the terminal boundary surrogate and $\rho$ is the declared boundary-to-advantage map.
\end{proposition}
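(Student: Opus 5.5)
The plan is to specialize the emulation theorem and its hybrid sequence $H_0,\ldots,H_{T+2}$ to the finite setting. The finite audit hypotheses will supply the per-step defects $\delta_{K,t},\delta_{\mathrm{post},t},\delta_{\mathrm{sim},t}$. The boundary term will be handled separately, because $\delta_B$ enters inside $\rho$ rather than additively. Since $X$ is finite, the quotient is exactly the partition into observer signatures $\sigma_\Pi$. Every observer in $\Pi$ is therefore constant in law on each block, so the observer defect $\delta_{\mathrm{obs},t}$ vanishes. This accounts for its absence from the displayed bound.

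\textbf{Step 1: kernel and post-processing hops.} For $1\le j\le T$, hybrid $H_j$ replaces the first $j$ transitions by ideal quotient-pushforward transitions, with the representative resampled from the conditional law given the signature and the public prefix. By the scheduling lemma, the scheduled observer and the post-processing map $g_t$ are fixed by the prefix. Conditioned on the prefix, the real and ideal next quotient states then differ in total variation by at most $\sup_{\sigma_\Pi(x)=\sigma_\Pi(x')}\TV(q_{\Pi\#}K(x,\cdot),q_{\Pi\#}K(x',\cdot))\le\delta_K$. The representative-sampling lemma shows that the resampled representative produces identical observer laws. Applying post-processing closure to $g_t\circ\pi_t$ costs an additional $\delta_{post}$. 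The coupling argument of the telescoping trace bound, run conditionally on the prefix as in the adaptive lifting corollary, then gives a total loss of $T(\delta_K+\delta_{post})$ over these hops.

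\textbf{Step 2: interface hop.} Hybrid $H_{T+1}$ replaces abort bits, leakage labels and oracle formatting. The abort-synchronization and leakage-label lemmas bound this hop by the summed interface defects, which I collect as $\delta_{sim}(T)$.

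\textbf{Step 3: boundary hop (the main obstacle).} In the ideal execution only the signature is tracked, so the terminal boundary value is evaluated at a resampled representative rather than the real state. The plan is to show by induction that each resampling moves $B_\Pi$ by at most $\delta_B$, since the resampled state lies in the same signature class. After $T$ steps the boundary value seen by the ideal functionality therefore exceeds the real surrogate by at most $T\delta_B$ in expectation, giving $\E z_T^{\mathsf{id}}\le\E z_T+T\delta_B$.

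The delicate point is that $\delta_B$ is a pointwise bound, whereas $z_T$ is an adapted surrogate rather than $B_\Pi$ itself. I would first try to take $z_t$ to be $B_\Pi$ of the current representative, so that the pointwise bound transfers and Lipschitz issues are avoided. If the surrogate cannot be taken this way, I would fall back on the sequential composition clause of the morphism composition law, with $s$ the identity (Lipschitz constant $1$), to accumulate the distortion additively. Monotonicity of $\rho$ then turns the inequality $\E z_T^{\mathsf{id}}\le\E z_T+T\delta_B$ into the terminal contribution $\rho(\E z_T+T\delta_B)$.

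\textbf{Step 4: summing.} The triangle inequality over the hops, as in the averaging lemma, adds the contributions of Steps 1–3 and yields the stated bound on $\Adv_T$.
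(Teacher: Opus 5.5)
Your proposal matches the paper's own argument. You condition on the public prefix so the scheduled observer and post-processing map are fixed, run a telescoping hybrid over the $T$ continuations charging $\delta_K+\delta_{post}$ per step, add one simulator-interface hop for $\delta_{sim}(T)$, and let the boundary mismatch $\delta_B$ enter through the monotone map $\rho$. The extra detail you supply (reusing the emulation theorem's hybrids, noting that $\delta_{\mathrm{obs},t}=0$ because the signature partition is the exact quotient, and accumulating $T\delta_B$ via the identity-$s$ composition law) is consistent with the paper's sketch, which likewise leaves the relation between the surrogate $z_T$ and $B_\Pi$ implicit.
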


\begin{proof}
Condition on each transcript prefix and expose the next observer selected by the scheduler. The finite signature condition gives a per-step quotient-pushforward defect $\delta_K$, post-processing contributes $\delta_{post}$, and the boundary mismatch contributes $\delta_B$ through the monotone map $\rho$. A telescoping hybrid over the $T$ continuations gives the sum; the simulator-interface hop contributes $\delta_{sim}(T)$.
\end{proof}

The same obligation appears in the linear control reading. The observer signature is the finite-horizon output map, quotient compatibility is equality of block transition vectors, and a hidden continuation is a vector in the output kernel. A proof that adds a sensor row therefore records the old kernel, the refined kernel, the resulting boundary pseudometric, and the change in the residual advantage floor.

\section{Case Studies}

\subsection{Hidden counter with delayed leakage}

Consider a protocol that increments an internal counter $n$ on every request. The functional output is independent of $n$ until a maintenance packet reveals $n\bmod 2$. A one-step test that excludes maintenance packets sees no leakage. In our notation, the observer class $\Pi_0$ contains only ordinary outputs, so $(s,n)\eqobs(s,n+1)$ under $\Pi_0$. When the maintenance observer is added, the quotient refines and the continuation is no longer hidden. The delayed leakage counterexample in Theorem~5.2 is exactly this structure.

\subsection{Side-channel-aware encryption}

Let the raw state contain $(k,r,m,c,\tau)$: key, randomness, message, ciphertext, and timing state. A transcript-only observer sees $c$. A side-channel observer sees $(c,\tau)$. A proof under the first observer class establishes only $d_{\Pi_{ct}}(x_0,x_1)\approx0$. It says nothing about $d_{\Pi_{ct+time}}(x_0,x_1)$. The refinement monotonicity proposition explains the failure: $\Pi_{ct}\subseteq\Pi_{ct+time}$, so the second quotient is finer.

\subsection{Telemetry design in a controlled plant}

Let a plant have state $x=(x_s,x_h)$, where $x_s$ is safety-relevant and $x_h$ is hidden under the current sensor $C$. If $x_h$ can later drive $x_s$, then ignoring $x_h$ is unsafe even if current outputs match. The design problem is to add a sensor row $c$ that maximizes detection utility subject to bandwidth and privacy constraints. This is precisely the finite observer optimization problem of Section~8.

\section{Computational Observer Classes}

The preceding sections used statistical distance because it makes the quotient geometry explicit. Cryptographic security usually restricts the distinguisher. We now state the computational version without changing the underlying quotient language.

\begin{definition}[Resource-bounded observer pseudometric]
Let $\cA_t$ be a class of distinguishers with resource bound $t$, for example circuit size, running time, number of oracle queries, or a mixed physical-computational budget. For two states $x,x'$ and observer class $\Pi$, define
\[
    d_{\Pi,t}(x,x')=\sup_{\pi\in\Pi}\sup_{A\in\cA_t}
    \left|\Prb[A(\pi(x))=1]-\Prb[A(\pi(x'))=1]\right|.
\]
Write $x\equiv_{\Pi,t,\eps}x'$ when $d_{\Pi,t}(x,x')\le\eps$.
\end{definition}

The statistical quotient $X/\Pi$ is replaced by a resource-indexed approximate quotient. Increasing the resource bound refines the quotient. This is the computational analogue of observer refinement.

\begin{proposition}[Resource monotonicity]
If $\cA_t\subseteq\cA_{t'}$ and $\Pi\subseteq\Pi'$, then
\[
    d_{\Pi,t}(x,x')\le d_{\Pi',t'}(x,x')
\]
for all states $x,x'$.
\end{proposition}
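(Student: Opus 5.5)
The plan is to argue directly from the definition of $d_{\Pi,t}$ as an iterated supremum, using the elementary fact that a supremum over a larger index set dominates the supremum over a smaller one. This is the same mechanism as the refinement monotonicity proposition, now applied to two indices at once. Fix states $x,x'$ and write
\[
f(\pi,A)=\left|\Prb[A(\pi(x))=1]-\Prb[A(\pi(x'))=1]\right|.
\]
This quantity depends only on the pair $(\pi,A)$ and not on which class the pair is drawn from. Hence $d_{\Pi,t}(x,x')=\sup_{(\pi,A)\in\Pi\times\cA_t}f(\pi,A)$, where the iterated supremum equals the joint supremum over the product.

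Next I observe that $\Pi\subseteq\Pi'$ and $\cA_t\subseteq\cA_{t'}$ give $\Pi\times\cA_t\subseteq\Pi'\times\cA_{t'}$. For every admissible pair $(\pi,A)$ in the smaller product we then have $f(\pi,A)\le d_{\Pi',t'}(x,x')$. Taking the supremum over the smaller product yields the claimed inequality.

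If one prefers to avoid the product reformulation, the same conclusion follows in two steps. First enlarge the inner supremum from $\cA_t$ to $\cA_{t'}$ for each fixed $\pi\in\Pi$. Then enlarge the outer supremum from $\Pi$ to $\Pi'$. Chaining these gives $d_{\Pi,t}\le d_{\Pi,t'}\le d_{\Pi',t'}$.

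The only point needing care, and the one I expect to be the main obstacle, is making sure the admissibility of $A$ does not depend on $\pi$, for instance through an output-length-dependent input format. If the distinguisher class were indexed by the observer's output space, I would read $\cA_t$ as the family of distinguishers applicable to every $\pi\in\Pi'$. This is implicit in the definition, and under that reading inclusion is preserved.

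As an immediate corollary, $x\equiv_{\Pi',t',\eps}x'$ implies $x\equiv_{\Pi,t,\eps}x'$. So the resource-indexed quotient refines as the resource bound and the observer class grow, which is the computational analogue of observer refinement.
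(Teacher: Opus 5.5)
Your proposal is correct and follows the paper's own one-line proof: both suprema in $d_{\Pi,t}$ range over subsets of the index sets used in $d_{\Pi',t'}$. The product-set reformulation and your remark that admissibility of $A$ must not depend on $\pi$ only spell out the same argument more carefully.
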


\begin{proof}
Both suprema are taken over subsets of the larger feasible sets.
\end{proof}

\begin{definition}[Negligible observer security]
For a security parameter $\kappa$, a family of experiment pairs is computationally observer-quotient secure if for every polynomial $p$ there exists a negligible function $\nu$ such that
\[
    \operatorname{Adv}_{\Pi_\kappa,T(\kappa)}^{\cA_{p(\kappa)}}(\mathsf{Exp}_{0,\kappa},\mathsf{Exp}_{1,\kappa})
    \le \nu(\kappa).
\]
\end{definition}

\begin{proposition}[Standard indistinguishability as a special case]
Let $\Pi_{ct}$ contain only the ciphertext/transcript observer of an encryption experiment, and let $\cA$ be the usual polynomial-time chosen-plaintext adversary class. Then observer-quotient security of the left-or-right experiments is exactly IND-style security for that transcript model.
\end{proposition}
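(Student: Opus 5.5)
The plan is to unfold both definitions until they coincide syntactically. With $\Pi=\Pi_{ct}$ there is a single observer, $\pi_{ct}(x_i)=(\mathsf{sid}_i,C_i)$. A scheduler on a singleton class has no choice to make, so the transcript $Z_{0:T}$ of the observer-quotient experiment $\mathsf{Exp}_b$ is exactly the sequence of (session identifier, ciphertext) pairs returned to the adversary in the left-or-right game with hidden bit $b$. I will set up the correspondence so that the internal state $x_i=(K,r_i,m_{0,i},m_{1,i},C_i,\mathsf{sid}_i)$ evolves under a kernel $K_b$. This kernel samples $K\leftarrow\mathsf{KeyGen}$ once, then for each adaptively chosen query pair samples $r_i$ and sets $C_i=\mathsf{Enc}_K(m_{b,i};r_i)$, with horizon $T=Q$, the query bound. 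The first step is to check that this is a well-formed experiment in the sense of Section~4. The message pairs are chosen by the adversary from the public prefix, which the scheduling lemma permits. Session identifiers carry no secret information, so they can be dropped or simulated without changing any distinguishing probability.

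The second step is to compare the advantages. Because the adversary class $\cA$ is declared to be the usual PPT chosen-plaintext adversaries, I will show that $\operatorname{Adv}^{\cA}_{\Pi_{ct},T}(\mathsf{Exp}_0,\mathsf{Exp}_1)$ equals the supremum over $A\in\cA$ of $|\Pr[A=1\mid b=0]-\Pr[A=1\mid b=1]|$ in the LOR-CPA game. This requires two maps between adversaries.

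\begin{itemize}
\item \emph{From a CPA adversary to an observer-quotient adversary.} Given a CPA adversary, the same machine, together with its message-choice strategy, is an observer-quotient adversary that reads $Z_{0:T}$.
\item \emph{From an observer-quotient adversary to a CPA adversary.} Conversely, any observer-quotient adversary uses only $Z_{0:T}$ and its own adaptive choices. The interface returns no raw representative, so the adversary is a CPA adversary with the same view.
\end{itemize}

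Since the two maps preserve output distributions exactly in each world, the two suprema coincide. Negligibility for every polynomial $p$, as in the negligible-security definition, is then literally the IND-CPA definition.

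The main obstacle is a modelling subtlety rather than a calculation. The observer-quotient experiment as defined in Section~4 has the adversary act only on the final transcript, whereas CPA adversaries choose queries adaptively. I would resolve this by invoking the computational OQSG of Section~5, where $\mathsf{View}^{b,A,S}_{0:T}$ already incorporates the adaptive interface. The key point is that "the usual chosen-plaintext class" includes the query strategy as part of $A$. I would also state explicitly that the equality holds only for the transcript model, namely $\Pi_{ct}$, and not for any refined class. By refinement monotonicity, a refined class may only increase the advantage, so the equality cannot be expected to survive there.
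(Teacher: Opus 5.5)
Your proposal is correct and follows the same route as the paper. With $\Pi_{ct}$ containing a single observer that emits exactly the adversary-visible transcript, the left-or-right experiments are the standard LOR-CPA worlds, so the observer-quotient advantage is literally the IND-CPA advantage. Your explicit two-way adversary correspondence, and your point that adaptive query choice must be read through the interactive view $\mathsf{View}^{b,A,S}_{0:T}$ rather than the transcript-only adversary of the basic experiment, spell out what the paper's one-line ``notation change'' argument leaves implicit.
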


\begin{proof}
The two experiments are the standard left-or-right worlds. The observer emits precisely the transcript delivered to the adversary. Therefore the OQSG advantage is the standard distinguishing advantage, with notation changed but no mathematical content lost.
\end{proof}

\begin{remark}
The proposition records the embedding of the standard transcript model. The quotient formulation becomes informative when the observer class changes, for example from ciphertext-only to ciphertext-plus-timing or ciphertext-plus-cache-state.
\end{remark}

\section{Lattice and Order Structure of Observers}

Observer classes can be compared. This gives a small algebra that is useful when a proof is moved from one threat model to another.

\begin{definition}[Observer preorder]
For two observer classes $\Pi_1,\Pi_2$, write $\Pi_1\preceq \Pi_2$ when every distinction made by $\Pi_1$ is also made by $\Pi_2$; equivalently,
\[
    x\equiv_{\Pi_2}x' \implies x\equiv_{\Pi_1}x'
\]
for all $x,x'$.
\end{definition}

If $\Pi_1\subseteq\Pi_2$, then $\Pi_1\preceq\Pi_2$, but the preorder is slightly more general because post-processings do not add information.

\begin{theorem}[Union refines, intersection coarsens]
Let $\Pi_1,\Pi_2$ be observer classes. Then
\[
    x\equiv_{\Pi_1\cup\Pi_2}x'
    \quad\Longleftrightarrow\quad
    x\equiv_{\Pi_1}x'\ \text{and}\ x\equiv_{\Pi_2}x'.
\]
Thus the quotient induced by $\Pi_1\cup\Pi_2$ is the common refinement of the two quotients.
\end{theorem}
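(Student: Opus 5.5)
The plan is to unfold the definition of $\eqobs$ through the pseudometric $d_{\Pi}$ and use the fact that a supremum over a union splits into a maximum of two suprema. For any observer classes, $x\equiv_{\Pi}x'$ means $d_{\Pi}(x,x')=0$. Since total variation is nonnegative, this holds exactly when $\TV(\pi(\cdot\mid x),\pi(\cdot\mid x'))=0$ for every $\pi\in\Pi$. The main computation is
\[
d_{\Pi_1\cup\Pi_2}(x,x')=\max\{d_{\Pi_1}(x,x'),\,d_{\Pi_2}(x,x')\}.
\]
This holds because every $\pi\in\Pi_1\cup\Pi_2$ lies in $\Pi_1$ or in $\Pi_2$, which gives $\le$. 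The reverse inequality is the refinement monotonicity proposition applied to $\Pi_i\subseteq\Pi_1\cup\Pi_2$. I will check the convention that $\sup\emptyset=0$, so that the identity stays valid when one of the classes is empty.

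The equivalence then follows from a single observation. A maximum of two nonnegative numbers is zero if and only if both numbers are zero. So $d_{\Pi_1\cup\Pi_2}(x,x')=0$ if and only if $d_{\Pi_1}(x,x')=0$ and $d_{\Pi_2}(x,x')=0$. This is the displayed statement.

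For the quotient claim, I will translate the equivalence into partitions. The classes of $\equiv_{\Pi_1\cup\Pi_2}$ are exactly the nonempty intersections $[x]_{\Pi_1}\cap[x]_{\Pi_2}$. That partition is the meet of the two partitions in the refinement order, i.e.\ their common refinement. To justify "common refinement" in the strong sense, I will argue two things. First, it refines both partitions, which is refinement monotonicity again. Second, any partition refining both has each block contained in some $[x]_{\Pi_1}$ and some $[x]_{\Pi_2}$, hence in their intersection, so it refines $X/(\Pi_1\cup\Pi_2)$. In the language of the observer preorder, $\Pi_1\cup\Pi_2$ is the least upper bound of $\Pi_1$ and $\Pi_2$ with respect to $\preceq$.

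The proof has no real obstacle. The only delicate point is the bookkeeping of the sup/max identity: one should confirm that suprema of $[0,1]$-valued quantities behave as expected even if they are not attained. This is harmless because the equivalence only concerns whether the value is zero.
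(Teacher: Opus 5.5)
Your proof is correct and is essentially the paper's argument: the paper simply observes that equal output laws for every observer in $\Pi_1\cup\Pi_2$ is the same as equal output laws for every observer in $\Pi_1$ and every observer in $\Pi_2$, which your identity $d_{\Pi_1\cup\Pi_2}=\max\{d_{\Pi_1},d_{\Pi_2}\}$ merely repackages (your own remark that $d_\Pi(x,x')=0$ iff every $\TV$ term vanishes already gives the equivalence directly). Your explicit check that the classes are the intersections $[x]_{\Pi_1}\cap[x]_{\Pi_2}$ and form the meet of the two partitions usefully justifies the ``Thus'' clause, which the paper leaves unproved.
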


\begin{proof}
The left side means every observer in the union gives equal output distributions. This is equivalent to equality for every observer in $\Pi_1$ and every observer in $\Pi_2$ separately.
\end{proof}

\begin{theorem}[Post-processing closure leaves the quotient unchanged]
Let $\operatorname{Post}(\Pi)$ be the class of all observers obtained by composing an element of $\Pi$ with an arbitrary Markov post-processing. Then
\[
    x\equiv_{\Pi}x'\quad\Longleftrightarrow\quad x\equiv_{\operatorname{Post}(\Pi)}x'.
\]
\end{theorem}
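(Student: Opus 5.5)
The plan is to prove the two implications separately, using the observer post-processing theorem for the nontrivial direction and the fact that the identity kernel is an admissible Markov post-processing for the other.

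\emph{Direction $(\Leftarrow)$.} I first check that $\Pi\subseteq\operatorname{Post}(\Pi)$. For each $\pi\in\Pi$, take $R$ to be the identity kernel $R(w\mid y)=\ind\{w=y\}$; then $R\circ\pi=\pi$. Suppose $x\equiv_{\operatorname{Post}(\Pi)}x'$. Then every $\pi\in\Pi$, viewed as $\mathrm{id}\circ\pi$, gives equal output distributions at $x$ and $x'$, so $x\equiv_\Pi x'$. Equivalently, this is the refinement monotonicity proposition applied to $\Pi\subseteq\operatorname{Post}(\Pi)$.

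\emph{Direction $(\Rightarrow)$.} Assume $x\equiv_\Pi x'$, so $\TV(\pi(\cdot\mid x),\pi(\cdot\mid x'))=0$ for every $\pi\in\Pi$. Take an arbitrary element $R\circ\pi$ of $\operatorname{Post}(\Pi)$. By the observer post-processing theorem,
\[
\TV\bigl((R\circ\pi)(\cdot\mid x),(R\circ\pi)(\cdot\mid x')\bigr)\le\TV\bigl(\pi(\cdot\mid x),\pi(\cdot\mid x')\bigr)=0 .
\]
Hence every observer in $\operatorname{Post}(\Pi)$ agrees at $x$ and $x'$, i.e.\ $d_{\operatorname{Post}(\Pi)}(x,x')=0$. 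An even simpler route is available: $\pi(\cdot\mid x)=\pi(\cdot\mid x')$ as distributions, and pushing equal distributions through the same kernel $R$ yields equal distributions, so the statement does not actually need the contraction bound.

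\emph{Main obstacle.} The only delicate point is measure-theoretic and arises when the output spaces are not finite, because the sum formula for $R\circ\pi$ must then be replaced by $\int R(\cdot\mid y)\,\pi(dy\mid x)$. Equality of the two composed laws still follows, since the integral of a fixed measurable kernel against equal measures is equal. This keeps the argument valid beyond the finite case covered by the displayed proof of the post-processing theorem.

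Combining the two directions shows that $\Pi$ and $\operatorname{Post}(\Pi)$ induce the same equivalence relation, and therefore the same quotient $X/\Pi$.
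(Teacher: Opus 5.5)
Your proof is correct and matches the paper's argument. The paper uses $\Pi\subseteq\operatorname{Post}(\Pi)$ for the reverse direction and, for the forward direction, the observation that pushing equal laws through the same kernel keeps them equal, which is exactly your ``even simpler route''; the contraction bound and the measure-theoretic remark are sound but not needed.
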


\begin{proof}
Since $\Pi\subseteq\operatorname{Post}(\Pi)$, the right side implies the left. Conversely, if $x\equiv_{\Pi}x'$, then every $\pi\in\Pi$ has equal output law on $x,x'$. Applying any post-processing kernel to equal laws keeps them equal.
\end{proof}

\begin{definition}[Observer-stable property]
A predicate $P$ on states is observer-stable under $\Pi$ if $x\eqobs x'$ and $P(x)$ imply $P(x')$.
\end{definition}

\begin{proposition}[Quotient factorization of stable predicates]
A predicate $P$ is observer-stable under $\Pi$ if and only if there exists a predicate $\bar P$ on $X/\Pi$ such that $P=\bar P\circ q_{\Pi}$.
\end{proposition}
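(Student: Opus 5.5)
The argument proves the two implications separately and treats the equivalence $\eqobs$ as the kernel of the quotient map $q_{\Pi}$. It uses only two facts established earlier: $\eqobs$ is an equivalence relation (noted right after the observer pseudometric definition), and $q_{\Pi}:X\to X/\Pi$ is surjective with $q_{\Pi}(x)=q_{\Pi}(x')$ exactly when $x\eqobs x'$.

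For the reverse direction, suppose $P=\bar P\circ q_{\Pi}$. Take $x\eqobs x'$ with $P(x)$. Then $q_{\Pi}(x)=q_{\Pi}(x')$, so $\bar P(q_{\Pi}(x'))=\bar P(q_{\Pi}(x))$ holds, which is $P(x')$. This direction is immediate.

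For the forward direction, assume $P$ is observer-stable. Define $\bar P$ on a class $Q\in X/\Pi$ by declaring $\bar P(Q)$ true iff $P(x)$ holds for some representative $x$ with $q_{\Pi}(x)=Q$. Surjectivity of $q_{\Pi}$ ensures every class has a representative, so $\bar P$ is defined everywhere.

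It remains to verify $P=\bar P\circ q_{\Pi}$.
\begin{itemize}
\item If $P(x)$ holds, then $\bar P(q_{\Pi}(x))$ holds by definition, with $x$ itself as the witness.
\item If $\bar P(q_{\Pi}(x))$ holds, pick a witness $x'$ with $q_{\Pi}(x')=q_{\Pi}(x)$ and $P(x')$. Then $x'\eqobs x$, and stability transfers $P$ from $x'$ to $x$.
\end{itemize}

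The only step needing care is this last transfer, which requires stability in the direction from $x'$ to $x$. The definition of observer-stability is stated one-directionally, so the argument invokes symmetry of $\eqobs$ explicitly. Because $\eqobs$ is symmetric, stability in one direction automatically covers both. With that noted, $\bar P$ is well defined, equivalently independent of the chosen representative, and the factorization holds.

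As a remark, the factorization is also unique, since $q_{\Pi}$ is surjective (the same reasoning as uniqueness in the induced-quotient-dynamics proposition).
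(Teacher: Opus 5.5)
Your proof is correct and matches the paper's argument: the reverse direction follows immediately from $q_{\Pi}(x)=q_{\Pi}(x')$, and the forward direction defines $\bar P$ on classes through representatives, with stability (together with symmetry of $\eqobs$) guaranteeing consistency. Your existential definition of $\bar P$ is just a more explicit form of the paper's $\bar P([x])=P(x)$; it moves the well-definedness check into your verification that $P=\bar P\circ q_{\Pi}$.
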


\begin{proof}
If $P=\bar P\circ q_{\Pi}$, then equal quotient states have equal predicate value. Conversely, if $P$ is stable, define $\bar P([x])=P(x)$. Stability makes this well-defined.
\end{proof}

Observer stability means that a property has the same truth value throughout each observer block. If the truth value varies inside a block, the observer lacks the resolution needed for that property.

\section{Fixed Points on Quotient Lattices}

A transition may not have a raw fixed point, but the quotient transition may. This is the mathematical core of observational fixedness.

\begin{definition}[Quotient fixed point]
Let $T:X\to X$ be deterministic. A quotient class $Q\in X/\Pi$ is a quotient fixed point when
\[
    q_{\Pi}(T(x))=q_{\Pi}(x)
\]
for one, and hence every quotient-compatible representative, $x\in Q$.
\end{definition}

\begin{proposition}[Raw fixed point implies quotient fixed point]
If $T(x)=x$, then $q_{\Pi}(x)$ is fixed by the induced quotient dynamics whenever that dynamics is well-defined.
\end{proposition}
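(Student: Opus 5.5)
The plan is to unfold the definitions and observe that the claim reduces to a single line once the induced quotient dynamics is made explicit. The phrase "whenever that dynamics is well-defined" will be read in the sense of the proposition on induced quotient dynamics. For a deterministic $T$ we view it as the Markov kernel $K(x)=\delta_{T(x)}$. Well-definedness then means that $K$ is $\Pi$-compatible. In that case the induced quotient map is $\bar T([x])=q_{\Pi}(T(x))$, and this value does not depend on the representative chosen.

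The steps are as follows. First, fix $x$ with $T(x)=x$ and set $Q=q_{\Pi}(x)$. Second, apply the induced quotient dynamics proposition. It gives $\bar K(q_\Pi(x))=q_{\Pi\#}\delta_{T(x)}=\delta_{q_\Pi(T(x))}$, so that $\bar T(Q)=q_{\Pi}(T(x))$. Third, substitute the raw fixed-point equation $T(x)=x$. This yields $\bar T(Q)=q_{\Pi}(x)=Q$, which says exactly that $Q$ is a fixed point of the quotient dynamics.

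We will also verify the definitional form of a quotient fixed point, namely $q_\Pi(T(y))=q_\Pi(y)$ for every $y\in Q$. Take $y\in Q$, so that $y\eqobs x$. Compatibility gives $q_{\Pi\#}\delta_{T(y)}=q_{\Pi\#}\delta_{T(x)}$. Because these pushforwards are point masses, this forces $q_\Pi(T(y))=q_\Pi(T(x))=q_\Pi(x)=q_\Pi(y)$. This is the content of the words "for one, and hence every" in the definition of a quotient fixed point.

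I do not expect a substantive obstacle. The only point that needs care is that equality of pushforward point masses implies equality of their atoms. This holds because $\delta_a=\delta_b$ on $X/\Pi$ forces $a=b$: evaluate both measures on the singleton $\{a\}$, which is measurable under the discrete or finite setting used throughout.
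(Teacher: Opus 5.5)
Your proposal is correct and follows the paper's own argument, which is simply to apply $q_\Pi$ to both sides of $T(x)=x$. Your identification $\bar T(q_\Pi(x))=q_\Pi(T(x))$ via the point-mass kernel, and your check that compatibility extends the fixedness to every representative of the class, just spell out what the paper's one-line proof leaves implicit.
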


\begin{proof}
Apply $q_{\Pi}$ to both sides of $T(x)=x$.
\end{proof}

The converse is false exactly when hidden continuations exist. The next result gives a useful existence condition for quotient fixed points even when raw fixed points are absent.

\begin{theorem}[Finite quotient recurrence gives quotient fixedness]
Let $T:X\to X$ be deterministic and quotient-compatible. If the quotient $X/\Pi$ is finite, then every quotient trajectory eventually enters a periodic orbit. If the period is one, the reached class is a quotient fixed point. More generally, the periodic orbit is an observational cycle invisible to any observer that samples only orbit-invariant statistics.
\end{theorem}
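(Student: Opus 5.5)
The plan is to reduce the statement to the pigeonhole principle for a self-map of a finite set, applied to the induced quotient map. Since $T$ is deterministic and quotient-compatible, the induced quotient dynamics proposition (specialised to point-mass kernels $K(x)=\delta_{T(x)}$) yields a well-defined map $\bar T:X/\Pi\to X/\Pi$ with $\bar T(q_\Pi(x))=q_\Pi(T(x))$. Compatibility here means that $x\eqobs x'$ implies $T(x)\eqobs T(x')$, which is exactly the statement that the pushforwards of the point masses agree. By induction on $t$, the quotient trajectory of any $x$ is the orbit $Q_t=\bar T^{\,t}(Q_0)$ with $Q_0=q_\Pi(x)$. So every claim about quotient trajectories becomes a claim about iterating a single function on a finite set.

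The key step is the eventual periodicity. Set $N=|X/\Pi|$. Among $Q_0,\dots,Q_N$ there are $N+1$ entries, so two coincide: $Q_i=Q_j$ with $0\le i<j\le N$. Since $Q_{t+1}=\bar T(Q_t)$ depends only on $Q_t$, induction gives $Q_{i+s}=Q_{j+s}$ for all $s\ge0$. Hence the sequence is periodic from index $i$ on, with period dividing $j-i$. I would take $p$ to be the minimal period, so that the trajectory eventually enters the cycle $\{Q_i,\dots,Q_{i+p-1}\}$.

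If $p=1$, then $\bar T(Q_i)=Q_i$, i.e.\ $q_\Pi(T(x))=q_\Pi(x)$ for a representative $x\in Q_i$. By compatibility this holds for every representative of $Q_i$, which is the definition of a quotient fixed point. This also shows where hidden continuations enter. Applying the hidden continuation definition to a representative $x$ of a fixed class, either $T(x)=x$ or $T(x)\ne x$ with $T(x)\eqobs x$; in the latter case $x$ has a hidden continuation.

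The final sentence of the statement is the main obstacle, because ``orbit-invariant statistic'' is not defined earlier in the paper. I would make it precise in the only natural way: a statistic is orbit-invariant if it factors through $q_\Pi$, as in the quotient factorization of stable predicates, and is constant on the cycle $\{Q_i,\dots,Q_{i+p-1}\}$. Equivalently, its value is unchanged when $Q$ is replaced by $\bar T(Q)$ for $Q$ in the cycle. Under this reading the claim is immediate. For $t\ge i$, every such statistic takes the same value on $Q_t$ and $Q_{t+1}$, so an observer restricted to such statistics cannot distinguish successive times along the cycle. The whole cycle therefore behaves as an observational fixed point for that coarser class, and I would record this as the interpretation intended by the statement.
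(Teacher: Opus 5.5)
Your proposal is correct and follows essentially the same argument as the paper: pass to the well-defined induced map $\bar T$ on the finite set $X/\Pi$, apply pigeonhole to get eventual periodicity, and read period one as a quotient fixed point. Your explicit reading of the final sentence (statistics that factor through $q_\Pi$ and are constant on the cycle) goes beyond the paper, whose proof does not address that sentence at all; under your reading the claim is immediate.
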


\begin{proof}
The induced map $\bar T:X/\Pi\to X/\Pi$ is a function on a finite set. Every forward orbit of a function on a finite set eventually repeats. Once a state repeats, the deterministic orbit is periodic. Period one is exactly a fixed point.
\end{proof}

\begin{theorem}[Monotone observer closure fixed point]
Let $\mathfrak{P}$ be the complete lattice of partitions of a finite state space ordered by refinement. Let $F:\mathfrak{P}\to\mathfrak{P}$ be a monotone operator that maps a candidate observer partition to the partition obtained after one round of transition-and-observation refinement. Then $F$ has a least and greatest fixed point.
\end{theorem}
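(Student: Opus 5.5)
The plan is to invoke the Knaster--Tarski fixed-point theorem. Two things must be verified: that $\mathfrak{P}$ is a complete lattice under refinement, and that $F$ is monotone. The second is a hypothesis of the statement, so the work goes into the first and into stating the fixed-point formulas explicitly.

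\textbf{Step 1: completeness of $\mathfrak{P}$.} Identify each partition of the finite set $X$ with its equivalence relation, a subset of $X\times X$. Order by refinement, with the finer partition corresponding to the smaller relation (the orientation does not matter for the conclusion, since the dual of a complete lattice is complete).

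\emph{Meets.} For any family $\{P_i\}$, the intersection of equivalence relations is again an equivalence relation. It is the greatest lower bound, i.e.\ the common refinement, exactly as in the union-refines theorem above.

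\emph{Joins.} The join is the transitive closure of the union. Equivalently, it is the meet of all equivalence relations containing every $P_i$; this family is nonempty because it contains the total relation $X\times X$.

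\emph{Extremes.} The empty meet gives the top element, the one-block partition. The discrete partition is the bottom element.

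Finiteness of $X$ makes $\mathfrak{P}$ finite, so completeness also follows directly: every finite lattice with a top and bottom is complete. Either route suffices.

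\textbf{Step 2: apply Knaster--Tarski.} The least fixed point is
\[
\bigwedge\{P: F(P)\preceq P\},
\]
and the greatest fixed point is
\[
\bigvee\{P: P\preceq F(P)\}.
\]
The verification is standard. Let $L$ denote the displayed meet. For any $P$ with $F(P)\le P$ we have $L\le P$, so by monotonicity $F(L)\le F(P)\le P$. Hence $F(L)$ is a lower bound of the set, giving $F(L)\le L$. Monotonicity then gives $F(F(L))\le F(L)$, so $F(L)$ belongs to the set, and therefore $L\le F(L)$. Combining the two inequalities, $F(L)=L$. Minimality holds because every fixed point belongs to the set. The greatest fixed point follows by the dual argument.

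\textbf{Step 3: iterative construction (remark).} Because $\mathfrak{P}$ is finite, iterating $F$ from the bottom partition gives a chain that stabilizes at the least fixed point after at most $|X|$ strict steps. Iterating from the top partition reaches the greatest fixed point in the same way. This matches the partition-refinement reading: starting from the coarse observer partition and refining until stable is the usual way such closures are computed.

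\textbf{Main obstacle.} The only delicate point is bookkeeping about the order direction. "Least" depends on whether finer or coarser is taken as smaller, and the statement does not fix this. I would state the convention explicitly and note that the conclusion is symmetric under order reversal. The substantive content is entirely the hypothesis that $F$ is monotone, which the statement assumes rather than asks us to prove.
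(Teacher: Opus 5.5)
Your proposal is correct and follows the paper's proof, which is a one-line appeal to Tarski's fixed-point theorem on the complete lattice $\mathfrak{P}$; you also spell out the completeness of the partition lattice and the Knaster--Tarski construction, which the paper takes for granted. Your caution about order direction is apt: the paper's following remark identifies the least fixed point with the coarsest stable quotient, so it implicitly uses the coarser-is-smaller convention, the opposite of yours, but as you note the conclusion is unaffected by reversing the order.
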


\begin{proof}
This is an immediate application of Tarski's fixed point theorem to the complete lattice $\mathfrak{P}$.
\end{proof}

\begin{remark}
In model checking terms, the least fixed point corresponds to the coarsest stable quotient compatible with the chosen observations and transition structure. In security terms, it is the weakest quotient in which local observer equivalence can be propagated safely through the dynamics.
\end{remark}

\section{Algorithmic Checking for Finite Systems}
\label{sec:finitechecking}

For finite systems, the definitions above are constructive. This section gives the algorithmic core that can be implemented without symbolic mathematics.

\subsection{Finite observer signatures}

Assume $X$ is finite, $|X|=n$, and $\Pi=\{\pi_1,\ldots,\pi_m\}$ is finite. The exact observer signature of a state is
\[
    \sigma(x)=\big(\pi_1(\cdot\mid x),\ldots,\pi_m(\cdot\mid x)\big).
\]
Then $x\eqobs x'$ if and only if $\sigma(x)=\sigma(x')$.

\begin{proposition}[Signature construction]
For finite deterministic observers, the observer quotient can be computed by sorting states by their signatures. If all observer outputs are represented as exact symbols, this takes $O(mn\log n)$ comparisons up to the cost of comparing signatures.
\end{proposition}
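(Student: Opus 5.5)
The plan is to reduce quotient computation to grouping equal keys, and then count the cost of a comparison sort. The first step is correctness. In the preceding paragraph, $x\eqobs x'$ holds iff $\sigma(x)=\sigma(x')$. For deterministic observers, each $\pi_i(\cdot\mid x)$ is a point mass, so I will replace it by its exact output symbol and take $\sigma(x)=(\pi_1(x),\ldots,\pi_m(x))$ as a length-$m$ tuple. Equality of point masses is then equality of symbols. So the equivalence classes of $\eqobs$ are exactly the fibres of the map $x\mapsto\sigma(x)$, and $X/\Pi$ is in bijection with the set of distinct signatures that occur.

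The algorithm has three stages. First, evaluate all $m$ observers on all $n$ states, producing an $n\times m$ table of symbols; this takes $mn$ observer evaluations. Second, sort the $n$ states by signature under the lexicographic order on tuples, using any $O(n\log n)$ comparison sort such as merge sort. Third, scan the sorted list once and start a new block whenever two consecutive signatures differ. This uses $n-1$ further signature comparisons and outputs $q_\Pi$ as a block label for each state.

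Correctness of the scan rests on one fact about a total order: sorting by lexicographic order places equal keys contiguously. Hence each maximal run of equal signatures is exactly one fibre of $\sigma$, and therefore exactly one class of $\eqobs$.

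The main point to get right is the accounting in the stated bound $O(mn\log n)$. The sort makes $O(n\log n)$ signature comparisons. Each lexicographic comparison of two length-$m$ tuples costs at most $m$ symbol comparisons. Together with the $mn$ evaluations and the linear scan, the total is $O(mn\log n)$ symbol comparisons. This is the intended reading of ``up to the cost of comparing signatures'': each symbol comparison is assumed to be $O(1)$ because outputs are exact symbols. If the symbols are long strings, that cost multiplies the bound.

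The same argument extends to randomized observers with exactly represented distributions. In that case each coordinate of $\sigma(x)$ is a finite probability vector, and comparing coordinates compares these vectors in a canonical order. Only the per-comparison cost changes; the grouping argument is unchanged.
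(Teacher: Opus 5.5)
Your proposal is correct and follows the same approach as the paper: the signature is a complete invariant of $\equiv_{\Pi}$, so sorting states by signature and grouping equal keys produces the quotient. You add details the paper leaves implicit: point masses reduce to exact symbols, lexicographic sorting makes equal keys contiguous, and each signature comparison costs at most $m$ symbol comparisons, which gives the $O(mn\log n)$ count.
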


\begin{proof}
The signature is a complete invariant of observer equivalence by definition. Sorting groups equal signatures into equivalence classes.
\end{proof}

\subsection{Compatibility checking}

For a Markov kernel $K$ and an observer partition $\mathcal{P}=\{B_1,\ldots,B_r\}$, define the block transition vector
\[
    \beta_K(x)=\left(K(x)(B_1),\ldots,K(x)(B_r)\right).
\]
The kernel is quotient-compatible exactly when $\beta_K(x)=\beta_K(x')$ for all $x,x'$ in the same block.

\begin{proposition}[Finite compatibility test]
Given a partition $\mathcal{P}$ of $X$, a finite kernel $K$ is compatible with $\mathcal{P}$ if and only if all states in each block have the same block transition vector. The test costs $O(nr)$ arithmetic operations after block probabilities have been computed.
\end{proposition}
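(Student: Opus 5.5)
The plan has two parts: first prove the characterization, then count the cost.

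For the characterization, I start from the definition of $\Pi$-compatibility. It asks that for $x\eqobs x'$ and every measurable union $B$ of blocks, $K(x)(B)=K(x')(B)$. The partition $\mathcal P$ is finite, so each such union is a finite disjoint union of blocks $B_j$. By additivity of $K(x)$, the value $K(x)(B)$ is the sum of the corresponding coordinates of $\beta_K(x)$.

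This gives both directions. If all block transition vectors agree inside each block, then every union probability agrees. Conversely, taking $B=B_j$ singly recovers equality coordinate by coordinate. Equivalently, $\beta_K(x)$ is exactly the pushforward $q_{\Pi\#}K(x)$ written in block coordinates. So the criterion is the equality $q_{\Pi\#}K(x)=q_{\Pi\#}K(x')$ from the definition, which is also what the induced quotient dynamics proposition requires.

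For the cost, assume the $n\times r$ table of block probabilities $K(x)(B_j)$ is given, as the statement allows. For each block, I fix a representative $x_B$. For every other state $x$ in that block, I compare the $r$ entries of $\beta_K(x)$ with those of $\beta_K(x_B)$. Because the relation is equality, comparing against a single representative suffices by transitivity, so no pairwise comparisons are needed. Each state is compared once, at $r$ operations per comparison. Summed over blocks, this gives at most $(n-r')\,r\le nr$ comparisons, where $r'$ is the number of blocks. This is $O(nr)$.

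I will also note in passing how the table itself is obtained: it costs $O(n^2)$ in general, by summing $K(x)(y)$ over $y$ grouped by block label. This cost is excluded by the statement.

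The only real point needing care is representation. The claim needs entries to be compared exactly, as rational or symbolic numbers. If floating point is used, the test becomes an approximate test with a tolerance. That tolerance would then feed the kernel defect $\delta_K$ of the finite audit bound rather than give exact compatibility. I will state this caveat and otherwise treat the argument as routine.
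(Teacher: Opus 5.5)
Your proposal is correct and follows the paper's proof: both rest on identifying the block transition vector $\beta_K(x)$ with the pushforward $q_{\Pi\#}K(x)$ written in block coordinates, so that compatibility becomes coordinatewise equality within each block. Your comparison of every state against one fixed representative per block, which uses at most $(n-r)r\le nr$ entry comparisons, supplies the cost count that the paper asserts without argument.
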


\begin{proof}
Compatibility requires equal pushforward distributions on quotient blocks. The block transition vector is exactly that pushforward distribution.
\end{proof}

\subsection{Refinement algorithm}

If the initial observer quotient is not transition-compatible, it can be refined until it is. The following procedure is the observer-quotient analogue of partition refinement.

\begin{enumerate}[leftmargin=2em,label=\textnormal{\arabic*.}]
    \item Start with the partition $\mathcal{P}_0$ induced by observer signatures.
    \item At iteration $i$, compute the block transition vector $\beta_K^i(x)$ of every state with respect to $\mathcal{P}_i$.
    \item Refine each block of $\mathcal{P}_i$ by grouping states with equal pairs $(\sigma(x),\beta_K^i(x))$.
    \item Stop when no block changes and return the fixed partition $\mathcal{P}_\star$.
\end{enumerate}

\begin{theorem}[Correctness of stable quotient refinement]
For finite $X$, the stable observer quotient algorithm terminates after at most $n-1$ strict refinements and returns the coarsest refinement of the initial observer quotient with respect to which $K$ is compatible.
\end{theorem}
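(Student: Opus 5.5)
The plan is the standard partition-refinement argument. It has three parts: a monotonicity invariant, a termination count, and a maximality argument carried out by induction.

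\textbf{Refinement and termination.} First we check that each iteration refines the previous partition, so $\mathcal{P}_{i+1}\preceq$-refines $\mathcal{P}_i$. Step 3 splits each block of $\mathcal{P}_i$ by the key $(\sigma(x),\beta_K^i(x))$, and $\sigma$ is already constant on blocks of $\mathcal{P}_0$ and hence on blocks of every later $\mathcal{P}_i$. So $\mathcal{P}_{i+1}$ is obtained by splitting blocks of $\mathcal{P}_i$. Each non-terminal iteration strictly increases the number of blocks. The number of blocks is bounded by $n$ and starts at least at $1$, so there are at most $n-1$ strict refinements before a round changes nothing.

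\textbf{Compatibility of the output.} When the algorithm stops, every block of $\mathcal{P}_\star$ is a class of the key $(\sigma,\beta_K^\star)$. Hence all states in a block share the same block transition vector with respect to $\mathcal{P}_\star$. By the finite compatibility test, $K$ is compatible with $\mathcal{P}_\star$. Since $\mathcal{P}_\star$ refines $\mathcal{P}_0$, it is a refinement of the initial observer quotient.

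\textbf{Coarsest.} This is the main step. Let $\mathcal{Q}$ be any refinement of $\mathcal{P}_0$ with respect to which $K$ is compatible. We show by induction on $i$ that $\mathcal{Q}$ refines $\mathcal{P}_i$.
\begin{itemize}
\item The base case $i=0$ holds by hypothesis.
\item For the inductive step, take $x,x'$ in the same $\mathcal{Q}$-block. Since $\mathcal{Q}$ refines $\mathcal{P}_0$, they lie in the same $\mathcal{P}_0$-block and so satisfy $\sigma(x)=\sigma(x')$.
\item Every $\mathcal{P}_i$-block $B_j$ is, by the inductive hypothesis, a disjoint union of $\mathcal{Q}$-blocks. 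Compatibility of $K$ with $\mathcal{Q}$ then gives
\[
K(x)(B_j)=\sum_{Q'\subseteq B_j}K(x)(Q')=\sum_{Q'\subseteq B_j}K(x')(Q')=K(x')(B_j).
\]
So $\beta_K^i(x)=\beta_K^i(x')$.
\item Because $x,x'$ also lie in the same $\mathcal{P}_i$-block, they share the key used in step 3 and therefore lie in the same $\mathcal{P}_{i+1}$-block.
\end{itemize}
It follows that $\mathcal{Q}$ refines $\mathcal{P}_\star$. Together with the previous paragraph, this shows $\mathcal{P}_\star$ is the coarsest refinement of $\mathcal{P}_0$ compatible with $K$.

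The delicate point is the decomposition of $B_j$ into $\mathcal{Q}$-blocks. This is exactly where the induction hypothesis is used, and it is what makes the pushforward equality over unions of classes valid. Compatibility in the sense of the quotient-compatible kernel definition is stated for unions of equivalence classes, so we would phrase that definition relative to the partition $\mathcal{Q}$ rather than to $\eqobs$.
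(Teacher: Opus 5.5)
Your proof is correct and takes essentially the paper's approach: termination by counting blocks, compatibility read off from the stopping condition, and an induction showing that every $K$-compatible refinement $\mathcal{Q}$ of $\mathcal{P}_0$ refines each $\mathcal{P}_i$. The paper compresses the last step into the remark that every compatible refinement splits every pair the algorithm splits, and your decomposition of each $\mathcal{P}_i$-block into $\mathcal{Q}$-blocks is exactly the computation that justifies that remark.
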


\begin{proof}
Each strict refinement increases the number of blocks by at least one, and there are at most $n$ singleton blocks, so termination occurs after at most $n-1$ strict refinements. At termination, all states in each block have equal block transition vectors, hence compatibility holds. The algorithm only splits states when compatibility forces a split; by induction, every compatible refinement of the initial observer quotient also splits every pair split by the algorithm. Therefore the returned partition is the coarsest compatible refinement.
\end{proof}

\begin{remark}
This gives a practical way to find delayed leakage. If a pair of states is equivalent in $\mathcal{P}_0$ but separated in $\mathcal{P}_\star$, then the current observer does not distinguish them immediately, but the dynamics can move them into distinguishable future blocks.
\end{remark}

\section{Structural Failure Modes}

The theory points to concrete proof obligations. The following rules are phrased as diagnostics rather than theorems.

\subsection{Unspecified observer class}

A security theorem without a declared observer class has an implicit observer class. That hidden assumption is unstable. A theorem under transcript-only observation is not a theorem under transcript-plus-timing observation.

\subsection{One-step equality without continuation compatibility}

A local equality $\pi(x)=\pi(x')$ controls the present observation. Trace security additionally requires quotient compatibility of future transitions or an explicit defect bound.

\subsection{Unmodeled side-channel refinement}

Adding a side channel changes the quotient. In the notation of this paper, it replaces $\Pi$ by $\Pi\cup\Pi_{side}$. A proof depending on $X/\Pi$ is then interpreted over the refined quotient.

\subsection{Implicit observer design}

When a system intentionally releases telemetry, logs, proofs, or public statistics, the release mechanism is treated as a designed channel $W$. The objective state utility, leakage, and cost. This makes the design reproducible and comparable across threat models.

\section{Finite-State Reproducibility Artifact}

The source package contains a finite-state artifact under \texttt{anc/}. The artifact implements the equations used in the side-channel and LTI sections and records generated data, CSV ledgers, and JSON summaries. The computation is synthetic: its role is to verify that the quotient operations in the paper are executable and that the numerical outputs match the stated formulas.

\subsection{Gaussian side-observer benchmark}

Let $x,x'$ be two representatives of the same public transcript quotient class, so $d_{\Pi_{tr}}(x,x')=0$. A profiled leakage observer returns
\[
    Y_x=\alpha\operatorname{HW}(S_x)+N,
    \qquad N\sim\mathcal N(0,\sigma^2).
\]
For means $\mu_x,\mu_{x'}$, define
\[
    B_{prof}(x,x')=\frac{(\mu_x-\mu_{x'})^2}{2\sigma^2}.
\]
With $n$ independent leakage samples, the optimal equal-variance Gaussian distinguishing advantage is
\[
    \Adv_n(x,x')=2\Phi\!\left(\sqrt{\frac{nB_{prof}(x,x')}{2}}\right)-1.
\]
For $m$ tested pairs, the simultaneous lower bound for the mean displacement is
\[
    \underline\Delta(x,x')=
    \left(|\widehat\mu_x-\widehat\mu_{x'}|-2\sigma\sqrt{\frac{2\log(4m/\delta)}{n}}\right)_+,
\]
and $\underline B_{prof}=\underline\Delta^2/(2\sigma^2)$. A pair is split only when $\underline B_{prof}>r^\star$.

The accompanying script generates eight transcript-equivalent state pairs under a matched continuation schedule with $n=240$ samples per representative, $\sigma=1$, $\alpha=0.28$, and seed $1729$. The generated ledgers report
\[
\begin{aligned}
    \max d_{\Pi_{tr}} &= 0,\\
    \operatorname{median} B_{prof} &= 0.3920,\\
    \operatorname{median}\widehat{\Adv}_n^{\rm lower} &= 0.9535,\\
    \#\{\text{pairs split after simultaneous correction}\} &= 5/8.
\end{aligned}
\]
These values instantiate the finite-state refinement theorem: transcript equivalence can coexist with a profiled side-observer separation, and the partition is refined only through the simultaneous lower-confidence rule.

\subsection{Observer-row design audit}

The LTI audit computes the observability matrix, the information matrix, the kernel dimension, and the residual advantage bound for a small sensor library. For the public sensor alone, the audit records a nontrivial hidden direction. Adding the hidden-coordinate sensor removes that direction and lowers the residual floor. The script also evaluates a relaxed cost frontier using the residual model
\[
    r(C)=\frac{\bar\chi}{\lambda_{\min}(G_T(C))+\bar\chi},
    \qquad \rho(r)=\sqrt r\wedge 1.
\]
The generated log reports the old and refined kernel dimensions, the residual floors, and the resulting reduction in the bounded advantage term. This matches the theorem in Section~\ref{sec:lti-example}: increasing the weakest observed direction reduces the residual term that enters $\rho$.

\subsection{Ancillary layout}

The arXiv source archive places ancillary material under \texttt{anc/}. The two scripts are stored there: \texttt{synthetic\_observer\_quotient\_benchmark.py} and \texttt{lti\_observer\_redesign\_audit.py}. The script \texttt{run\_all.sh} regenerates the logs. Synthetic samples are stored in \texttt{anc/data/}; CSV and JSON ledgers are stored in \texttt{anc/logs/}. The TeX source and figure PDFs remain at the archive root.

\section{Conclusion}

Observer-quotient security games give a single language for cryptographic indistinguishability, side-channel leakage, and sensor-based control. The central distinction is between raw state equality and observer equality. Hidden continuations live in the gap. They are harmless when the quotient dynamics is compatible and trace lifting applies; they are dangerous when a later observation can reveal what was previously hidden. The same formalism also supports observer synthesis: an implementation can trade utility, leakage, measurement cost, and residual floor inside a single composable advantage bound.

\appendix

\section{Additional Proof Details}

\subsection{Approximate observer equivalence is stable under post-processing}

\begin{lemma}
If $x\eqep x'$ and $R$ is any post-processing kernel, then $R\circ\pi$ differs by at most $\eps$ on $x,x'$ for every $\pi\in\Pi$.
\end{lemma}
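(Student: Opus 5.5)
The plan is to reduce the lemma to the data-processing inequality already established in the Observer post-processing theorem, applied one observer at a time, and then take the supremum over $\Pi$.

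First I fix an arbitrary $\pi\in\Pi$ and a Markov post-processing kernel $R:Y\to\DeltaS(W)$, where $Y$ is the output space of $\pi$. By the definition of the observer pseudometric, $x\eqep x'$ means $d_\Pi(x,x')\le\eps$. Since the supremum dominates each term, this gives
\[
\TV\bigl(\pi(\cdot\mid x),\pi(\cdot\mid x')\bigr)\le\eps .
\]

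Next I invoke the Observer post-processing theorem for the pair $(\pi,R)$. It yields
\[
\TV\bigl((R\circ\pi)(\cdot\mid x),(R\circ\pi)(\cdot\mid x')\bigr)\le\TV\bigl(\pi(\cdot\mid x),\pi(\cdot\mid x')\bigr)\le\eps .
\]
Because $\pi$ was arbitrary, the bound holds uniformly over $\Pi$. Equivalently, $d_{\operatorname{Post}_R(\Pi)}(x,x')\le\eps$, where $\operatorname{Post}_R(\Pi)=\{R\circ\pi:\pi\in\Pi\}$. If $R$ is allowed to depend on $\pi$, the same argument applies verbatim, since each step is per-observer.

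The only point needing care is that the earlier theorem was proved for finite spaces by a summation argument. For general measurable output spaces I would use the sup-over-events form of total variation. For any measurable $A\subseteq W$, the function $f(y)=R(A\mid y)$ takes values in $[0,1]$, so
\[
\bigl|(R\pi_x)(A)-(R\pi_{x'})(A)\bigr|=\left|\int f\,d(\pi_x-\pi_{x'})\right|\le\TV(\pi_x,\pi_{x'}).
\]
Taking the supremum over $A$ then gives the general case. I expect no real obstacle beyond this measurability remark.
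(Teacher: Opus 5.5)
Your proof is correct and matches the paper's own one-line argument: it also chains the Observer post-processing theorem with $\TV(\pi(\cdot\mid x),\pi(\cdot\mid x'))\le d_\Pi(x,x')\le\eps$. Your extra remark for general measurable spaces, bounding $\bigl|\int f\,d(\pi_x-\pi_{x'})\bigr|$ with $f=R(A\mid\cdot)\in[0,1]$, is valid, and the paper leaves that extension implicit.
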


\begin{proof}
By the post-processing theorem,
\[
\TV((R\circ\pi)(\cdot\mid x),(R\circ\pi)(\cdot\mid x'))\le\TV(\pi(\cdot\mid x),\pi(\cdot\mid x'))\le\eps.
\]
\end{proof}

\subsection{Blackwell order interpretation}

One observer channel $W_1$ is less informative than another observer channel $W_2$ if there exists a post-processing kernel $R$ such that $W_1=R\circ W_2$. In that case, every distinction available under $W_1$ is already available under $W_2$. The observer quotient induced by $W_2$ refines the quotient induced by $W_1$. This is the decision-theoretic form of the post-processing theorem.

\end{document}